\def\gathen#1{{#1}}
\newtheorem{Thm}{Theorem}
\newtheorem{Lem}[Thm]{Lemma}
\newtheorem{Prop}[Thm]{Proposition}
\newtheorem{Cor}[Thm]{Corollary}
\theoremstyle{definition}
\newtheorem{Def}[Thm]{Definition}
\theoremstyle{remark}
\newtheorem{Rem}[Thm]{Remark}
\numberwithin{equation}{section}
\def\QQ{{\mathbb{Q}}}
\def\DD{{\mathcal{D}}}
\def\NN{{\mathbb{N}}}
\def\PP{{\mathbb{P}}}
\def\KK{{\mathbb{K}}}
\def\PP{{\mathbb{P}}}
\def\LL{{\mathbb{L}}}
\def\bigOsoft{\tilde{\mathcal{O}}}
\def\bigO{\mathcal{O}}
\def\bigOsoft{\tilde{\mathcal{O}}}
\def\haty{\hat{y}}
\def\CD{{\mathcal D}} 
\title[Rational First Integrals of Polynomial Vector Fields]{Efficient Algorithms for Computing Rational First Integrals and Darboux Polynomials of Planar Polynomial Vector Fields}
\date{\today}
\author[A.~Bostan]{Alin Bostan}
\address{Alin Bostan: INRIA Saclay \^{I}le-de-France, B\^atiment Alan Turing\\
1 rue Honor\'e d'Estienne d'Orves\\
91120 Palaiseau, France}
\email{alin.bostan@inria.fr}
\author[G.~Ch\`eze]{Guillaume Ch\`eze}
\address{Guillaume Ch\`eze: Institut de Math\'ematiques de Toulouse\\
Universit\'e Paul Sabatier Toulouse 3 \\
MIP B\^at. 1R3\\
31 062 TOULOUSE cedex 9, France}
\email{guillaume.cheze@math.univ-toulouse.fr}
\author[T.~Cluzeau]{Thomas Cluzeau}
\address{Thomas Cluzeau: Universit\'e de Limoges ; CNRS ; XLIM UMR 7252 ; DMI \\
123 avenue Albert Thomas\\
87 060 LIMOGES cedex, France}
\email{cluzeau@ensil.unilim.fr}
\author[J.-A. Weil]{Jacques-Arthur Weil}
\address{Jacques-Arthur Weil: Universit\'e de Limoges ; CNRS ; XLIM UMR 7252 ; DMI \\
123 avenue Albert Thomas\\
87 060 LIMOGES cedex, France}
\email{jacques-arthur.weil@unilim.fr}
\date{\today}
\begin{document}
	
\begin{abstract}
We present fast algorithms for computing rational first integrals with bounded degree of a planar polynomial vector field. Our approach is inspired by an idea of Ferragut and Giacomini (\cite{FG}). We improve upon their work by proving that rational first integrals can be computed via systems of linear equations instead of systems of quadratic equations. The main ingredients of our algorithms are the calculation of a power series solution of a first order differential equation and the reconstruction of a bivariate polynomial annihilating a power series. This leads to a probabilistic algorithm with arithmetic complexity $\bigOsoft(N^{2 \omega})$ and to a deterministic algorithm solving the problem in $\bigOsoft(d^2N^{2 \omega+1})$ arithmetic operations, where~$N$ denotes the given bound for the degree of the rational first integral, and where $d \leq N$ is the degree of the vector field, and $\omega$  the exponent of linear algebra. We also provide a fast heuristic variant which computes a rational first integral, or fails, in $\bigOsoft(N^{\omega+2})$ arithmetic operations. By comparison, the best previous algorithm given in~\cite{Ch} uses at least $d^{\omega+1}\, N^{4\omega +4}$ arithmetic operations. We then show how to apply a similar method to the computation of Darboux polynomials. The algorithms are implemented in a Maple package {\sc RationalFirstIntegrals} which is available to interested readers with examples showing its efficiency.
\end{abstract}	
	
\maketitle
 
\section{Introduction}

\subsection*{Context} 
Let $\KK$ denote an effective field of characteristic zero, i.e, one can perform arithmetic operations and test equality of two elements (typically, $\KK=\QQ$ or $\QQ(\alpha)$, where $\alpha$ is an algebraic number).
Given two
polynomials $A,\, B$ in $\KK[x,y]$, we consider the planar polynomial vector
field
\begin{equation} \label{eq-sys-intro} \tag{$\mathsf{S}$} \quad \left\lbrace \begin{array}{rcl} \dot{x} &=& A(x,y), \\
\dot{y} &=& B(x,y), \end{array}\right.
\end{equation} 
and discuss the problem of computing
rational first integrals of \eqref{eq-sys-intro}, i.e., rational functions
$F \in \KK(x,y)\setminus \KK$ that are constant along the solutions $(x(t),y(t))$ of \eqref{eq-sys-intro}.
More precisely, the present article is concerned with the following algorithmic
problem:\\

 $({\mathcal P}_N)$: given a degree bound $N\in \NN$, either compute a rational first integral $F\in \KK(x,y)\setminus \KK$ of~\eqref{eq-sys-intro} of total degree
at most $N$, or prove that no such $F$ exists.\\

This old problem was already studied by Darboux in 1878 (\cite{Da}) and has
been the subject of numerous works ever since. The naive approach (by
indeterminate coefficients) leads to a polynomial system of quadratic
equations in the coefficients of $F$. Other methods use what is
called nowadays \emph{Darboux polynomials}, in the spirit of the celebrated
Prelle-Singer's method~\cite{PS}; see Subsection~\ref{review} for a review. These methods also require solving a polynomial system of quadratic equations.
Recently, Ch\`eze~\cite{Ch} has shown that problem $(\mathcal{P}_N)$ can be
solved in polynomial time in~$N$.  The importance of this result is mainly theoretical since the exponent in the polynomial complexity estimate is
bigger than~10.

To improve upon this current state of affairs, our starting point is the
article \cite{FG} of Ferragut and Giacomini. The key observation is that~\eqref{eq-sys-intro}
has a rational first integral if and only if all power series solutions in 
$\KK[[x]]$ of the first order non-linear differential equation 
	\begin{equation} \label{diff-eq-intro} \tag{$\mathsf{E}$} \quad \frac{dy}{dx}=\frac{B(x,y)}{A(x,y)}
	\end{equation}
are \emph{algebraic\/} over $\KK(x)$. Furthermore,
minimal polynomials of these algebraic power series lead to rational first
integrals.

The algorithm in~\cite{FG} still involves solving a polynomial system
of quadratic equations. Indeed, the  key observation above  is merely used to
reduce the number of equations in the quadratic system provided by the naive approach.

\subsection*{Our main contributions} In the present article, we push  further the observation of Ferragut and
Giacomini, so as to give fast algorithms solving Problem $({\mathcal
P}_N)$. In particular, we prove that this can be done by considering only
systems of {\em linear\/} equations instead of systems of quadratic equations.

We design a probabilistic algorithm that uses {$\bigOsoft(N^{2 \omega})$}
arithmetic operations in~$\KK$, where $\omega \in [2,3]$ is the exponent of
linear algebra over~$\KK$, and the soft-O notation $\bigOsoft(\,)$ indicates
that polylogarithmic factors are neglected. The probabilistic algorithm is
then turned into a deterministic one, that solves Problem $({\mathcal P}_N)$ in
arithmetic complexity $\bigOsoft(d^2\,N^{2 \omega+1})$, where $d=\max(\deg(A),\deg(B))$ denotes the degree of the polynomial vector field \eqref{eq-sys-intro}. This compares well to the
previous polynomial time algorithm given in~\cite{Ch}, which uses at least
$d^{\omega+1}N^{4\omega +4}$ arithmetic operations. Note that if we take $\omega=3$ (i.e., the cost of naive linear algebra),
then the above means that the best previously known complexity would be in $\bigOsoft(d^4N^{16})$ whereas our deterministic algorithm would use at most
$\bigOsoft(d^2N^7)$ arithmetic operations, and our probabilistic one would use $\bigOsoft(N^6)$. Lastly, we sketch a heuristic method that  uses
$\bigOsoft(N^{\omega+2})$ arithmetic operations (i.e., $\bigOsoft(N^{5})$ using classical linear algebra) which is sub-cubic, given that the output has size $\bigO(N^2)$.

We provide algorithmic details, notably precise degree bounds and
complexity estimates. The algorithms developed in the article are implemented in a Maple package called {\sc RationalFirstIntegrals} which is available with various examples at \url{http://www.ensil.unilim.fr/~cluzeau/RationalFirstIntegrals.html}. Using this implementation, we demonstrate the
efficiency of our algorithms on some examples. Finally, we show how to apply a similar method to the computation of Darboux polynomials.

\subsection*{Structure of the article}
In Section~\ref{sec:review}, we recall Darboux's approach to the integrability of
polynomial vector fields, related works, and existing results about the problem $({\mathcal P}_N)$. We also give useful facts on the so-called {\em
spectrum problem}. We recall in Section~\ref{sec:RFI+deq}
the connection between rational first integrals of the polynomial vector field
\eqref{eq-sys-intro} and algebraic power series solutions of $\eqref{diff-eq-intro}$. We then propose a first
 algorithm, based on linear algebra, that solves
Problem~$({\mathcal P}_N)$. Building on this, we develop in Section~\ref{section-proba_det_algo} an
efficient probabilistic algorithm, and then turn it into an efficient deterministic algorithm. In Section~\ref{complexity-section}, we study the
arithmetic complexity of the algorithms developed in Section~\ref{section-proba_det_algo}, and discuss several
algorithmic issues. Then, in Section~\ref{sec_impl_exp} we present our implementation and
display its behavior on various examples. Finally, Section~\ref{sec:Darboux} shows how similar
ideas can be used for computing the set of all irreducible Darboux polynomials (of a given degree)
of planar polynomial vector fields.

\subsection*{Notation}

The degree $\deg(P)$ of a bivariate polynomial $P \in \KK[x,y]$ is the total
degree of $P$. A rational function $P/Q$ with $P,Q \in \KK[x,y]$ is said to be
{\em reduced} when $P$ and $Q$ are coprime. The degree $\deg(F)$ of a reduced
rational function $F=P/Q$ is the maximum of $\deg(P)$ and $\deg(Q)$.  \\
We denote
by $\overline{\KK}$ an algebraic closure of the field~$\KK$.\\
 We write
$\dot{f}:=\frac{\partial f}{\partial t}$ for the usual formal derivative of
the ``function" (polynomial, or power series)~$f$ with respect to the
variable $t$. \\
For a set $\Omega$, we denote by $| \Omega |$ its cardinality.

\section{Review on first integrals of polynomial vector fields}\label{sec:review}

In this section, we recall several useful facts, mainly to keep the exposition as self-contained as possible, and to clarify
the understanding of the algorithms that we develop below. Some results are not original.

\subsection{First definitions and classical results}

We consider an autonomous planar polynomial vector field 
\begin{equation} \label{eq-sys} \tag{$\mathsf{S}$} \quad \left\lbrace \begin{array}{rcl} \dot{x} &=& A(x,y), \\
\dot{y} &=& B(x,y), \end{array}\right.
\end{equation}
where $x$ and $y$ are unknown ``functions" of the time variable~$t$, $A$ 
and $B$ are polynomials in $\KK[x,y]$, and $d:=\max(\deg(A),\deg(B))$ denotes the degree of the polynomial vector field. 
Without any loss of generality, $A$ and $B$ will be assumed to be coprime in the remaining of the article.

To \eqref{eq-sys} is attached the \emph{derivation}
$$\DD:=A(x,y)\,\frac{\partial}{\partial x} + B(x,y)\,\frac{\partial}{\partial
y},$$
acting on the polynomial ring $\KK[x,y]$. We thus view $\KK[x,y]$
as a differential ring endowed with the derivation $\DD$. We denote by
$\KK(x,y)$ its field of fractions. 

\begin{Def} \label{def:rfi}
A {\em rational first integral of \eqref{eq-sys}} 
is a non-constant rational function $F \in \KK(x,y)\setminus \KK$ satisfying $\DD(F)=0$. 
\end{Def}

A rational first integral $F$ of \eqref{eq-sys} is thus a non-trivial
constant for the derivation~$\DD$. Intuitively, this means that if
$(x(t),y(t))$ is a pair of ``functions" satisfying~\eqref{eq-sys}, then $F(x(t),y(t))$
is constant when $t$ varies. We explain in Theorem~\ref{thm11} below why no algebraic extension of the base field is necessary in Definition~\ref{def:rfi}.

\medskip A starting observation is that the rational function $F=P/Q$ is a
first integral for~\eqref{eq-sys} if and only if $\DD(P) \, Q = \DD(Q) \, P$. Therefore, if
$F$ is a \emph{reduced\/} rational first integral for~\eqref{eq-sys}, then $P$ divides
$\DD(P)$, and $Q$ divides $\DD(Q)$ in $\KK[x,y]$. This motivates the
following definition.

\begin{Def} \label{def:Darboux}
A polynomial $M \in \overline\KK[x,y] \setminus \overline\KK$ is a {\em Darboux polynomial for $\DD$} if $M$ divides $\DD(M)$ in $\overline\KK[x,y]$. Therefore, if $M$ is a Darboux polynomial for $\DD$, then there exists a polynomial $\Lambda \in \overline\KK[x,y]$ such that $\DD(M)=\Lambda\,M$. Such a polynomial $\Lambda \in \overline\KK[x,y]$ is called a {\em cofactor associated with the Darboux polynomial $M$}.
\end{Def}

Darboux polynomials were introduced by G. Darboux in \cite{Da}. These polynomials correspond to algebraic curves invariant under the vector field. The following lemma will be used in the sequel: it means that if we have a non-singular initial condition, then there is a unique irreducible invariant algebraic curve satisfying this initial condition, see \cite[Lemma~A.1]{Si}.
\begin{Lem}\label{lem:uniq}
Let $\DD=A(x,y)\,\frac{\partial}{\partial x} + B(x,y)\,\frac{\partial}{\partial
y}$ be the derivation attached to~\eqref{eq-sys} and let $(x_0, y_0)$ be a non-singular point of $\DD$, i.e., $A(x_0,y_0) \neq 0$ or $B(x_0,y_0) \neq 0$. If $M_1$
and $M_2$ are two Darboux polynomials for $\DD$ such that $M_1(x_0,y_0) =M_2(x_0,y_0)= 0$ and if $M_1$ is irreducible, then $M_1$ divides $M_2$.
\end{Lem}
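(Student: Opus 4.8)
The plan is to reduce the statement to a local analytic (power-series) computation near the non-singular point $(x_0,y_0)$, where the invariant curve defined by $M_1$ admits a nice parametrization. First I would use the hypothesis that $(x_0,y_0)$ is non-singular to fix coordinates: assume without loss of generality that $A(x_0,y_0)\neq 0$. Then, near $x_0$, the differential equation~\eqref{diff-eq-intro} has a unique power series solution $\haty(x)\in\overline\KK[[x-x_0]]$ with $\haty(x_0)=y_0$ (this is the classical existence–uniqueness statement for $dy/dx=B/A$ at a point where $A$ does not vanish). The key idea is that any Darboux polynomial vanishing at $(x_0,y_0)$ must vanish identically along this solution curve, i.e.\ $M\bigl(x,\haty(x)\bigr)=0$ as a power series.

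The crucial step is to establish this vanishing. Let $M$ be a Darboux polynomial with cofactor $\Lambda$, so $\DD(M)=\Lambda\,M$, and set $m(x):=M\bigl(x,\haty(x)\bigr)$. Differentiating with respect to $x$ and using $\haty\,'(x)=B/A$ gives
\begin{equation*}
m'(x)=\partial_x M + \partial_y M\cdot\frac{B}{A}=\frac{1}{A}\,\bigl(A\,\partial_x M+B\,\partial_y M\bigr)=\frac{\DD(M)}{A}=\frac{\Lambda}{A}\,m(x),
\end{equation*}
all evaluated along $(x,\haty(x))$; since $A(x_0,y_0)\neq 0$, the factor $1/A$ is a well-defined power series in $x-x_0$. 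Thus $m$ satisfies a linear first-order scalar ODE $m'=(\Lambda/A)\,m$ with $m(x_0)=M(x_0,y_0)=0$. By uniqueness for this \emph{linear} equation (equivalently, by an immediate induction on the coefficients of the power series), $m\equiv 0$. Applying this to both $M_1$ and $M_2$ shows $M_1\bigl(x,\haty(x)\bigr)=M_2\bigl(x,\haty(x)\bigr)=0$.

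Finally I would pass from ``both vanish along the same branch'' to the divisibility $M_1\mid M_2$. Since $M_1$ is irreducible in $\overline\KK[x,y]$ and $M_1\bigl(x,\haty(x)\bigr)=0$, the series $\haty$ is algebraic over $\overline\KK(x)$ and $M_1$ is (a scalar multiple of) its minimal polynomial viewed in $y$, up to the irreducible factor actually met by the branch; concretely, $M_1$ and $M_2$ share the common factor corresponding to this branch, and irreducibility of $M_1$ forces that common factor to be $M_1$ itself, whence $M_1\mid M_2$. The main obstacle to watch for is this last algebraic step: one must argue carefully that the branch $\haty$ is a genuine (non-trivial) analytic branch of the curve $M_1=0$ — not a spurious factor — which again follows from non-singularity, since $A(x_0,y_0)\neq 0$ guarantees that $M_1$ is not divisible by a vertical line $x=x_0$ and that the branch through $(x_0,y_0)$ is cut out transversally; combining this with the irreducibility of $M_1$ and the classical correspondence between irreducible bivariate polynomials and their Puiseux branches yields the conclusion.
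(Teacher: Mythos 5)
Your argument is correct; note that the paper itself gives no proof of this lemma, deferring entirely to \cite[Lemma~A.1]{Si}, and what you propose is essentially that classical argument: parametrize the branch through the non-singular point by the unique formal solution $\haty(x)\in\overline\KK[[x-x_0]]$ of $y'=B/A$, observe that $m(x):=M(x,\haty(x))$ satisfies the \emph{linear} equation $m'=(\Lambda/A)\,m$ with $m(x_0)=0$ (a valuation count in characteristic zero then forces $m\equiv 0$), apply this to both $M_1$ and $M_2$, and conclude by irreducibility. Two points in your last step deserve tightening. First, the case $\deg_y M_1=0$ is excluded directly rather than by transversality: if $M_1$ were a scalar multiple of $x-x_0$, then $M_1(x,\haty(x))=c\,(x-x_0)\not\equiv 0$, contradicting what you just proved; so $M_1$ has positive degree in $y$ and, being irreducible, is primitive and irreducible in $\overline\KK(x)[y]$, hence is (up to a unit) the minimal polynomial of $\haty$ and divides $M_2$ by Gauss's lemma. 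Second, this divisibility can be packaged exactly as in the paper's own Lemma~\ref{algebraic-solution}: the resultant $\mathrm{Res}_y(M_1,M_2)\in\overline\KK[x]$ lies in the ideal generated by $M_1$ and $M_2$, so it vanishes along $y=\haty(x)$ and is therefore the zero polynomial, forcing a non-trivial common factor, which must be $M_1$. Your closing worry about ``spurious branches'' is unnecessary: once $M_1(x,\haty(x))=0$ is established, $\haty$ is by definition a root of the irreducible polynomial $M_1$, and no further transversality argument is needed.
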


Darboux polynomials are sometimes called \emph{partial first integrals\/} in the
literature. The reason is that rational first integrals and Darboux
polynomials are intimately related notions: as sketched above, numerators and
denominators of reduced rational first integrals are Darboux polynomials. The
converse is also true, see Corollary~\ref{RFI-Darboux} below.

A fundamental property of Darboux polynomials is given in the following lemma
(see, e.g., \cite[Lemma 8.3, p. 216]{DLA}) that can be proved by a
straightforward calculation.

\begin{Lem}\label{darboux-semi-group}
Let $M \in \overline\KK[x,y]$ and let $M=M_1\,M_2$ be a factorization of $M$ in $\overline\KK[x,y]$, with $M_1$ and $M_2$ coprime.
Then, $M$ is a Darboux polynomial for $\DD$ if and only if $M_1$ and $M_2$ are Darboux polynomials for $\DD$. Furthermore, if $\Lambda_M, \, \Lambda_{M_1}$ and $\Lambda_{M_2}$ denote respectively the cofactors of $M, \, M_1$ and $M_2$, then $\Lambda_M=\Lambda_{M_1}+\Lambda_{M_2}$.
\end{Lem}

As a corollary we get:

\begin{Cor} \label{RFI-Darboux} 
Let $F = P/Q$ be a reduced rational function in $\KK(x,y)$. Then $F$ is a rational first integral of \eqref{eq-sys} if and only if $P$ and $Q$ are Darboux polynomials for~$\DD$ with the same cofactor.
\end{Cor}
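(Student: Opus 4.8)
The plan is to prove Corollary~\ref{RFI-Darboux} by combining the divisibility observation made just before Definition~\ref{def:Darboux} with the additivity of cofactors furnished by Lemma~\ref{darboux-semi-group}. I would organize the argument as two implications.

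\medskip

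\noindent\textbf{Forward direction.} Suppose $F=P/Q$ is a reduced rational first integral, so $\DD(F)=0$. Writing $\DD(F)=\bigl(\DD(P)\,Q-\DD(Q)\,P\bigr)/Q^2$, the condition $\DD(F)=0$ is equivalent to $\DD(P)\,Q=\DD(Q)\,P$ in $\KK[x,y]$. Since $P$ and $Q$ are coprime (the rational function being reduced), $P$ divides $\DD(P)\,Q$ forces $P\mid \DD(P)$, and symmetrically $Q\mid \DD(Q)$; hence $P$ and $Q$ are both Darboux polynomials. Writing $\DD(P)=\Lambda_P\,P$ and $\DD(Q)=\Lambda_Q\,Q$, substitution into $\DD(P)\,Q=\DD(Q)\,P$ yields $\Lambda_P\,P\,Q=\Lambda_Q\,P\,Q$, and cancelling the nonzero $PQ$ gives $\Lambda_P=\Lambda_Q$, i.e.\ the two cofactors coincide.

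\medskip

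\noindent\textbf{Converse direction.} Suppose $P$ and $Q$ are Darboux polynomials with a common cofactor $\Lambda$, so $\DD(P)=\Lambda\,P$ and $\DD(Q)=\Lambda\,Q$. Then $\DD(P)\,Q-\DD(Q)\,P=\Lambda\,P\,Q-\Lambda\,Q\,P=0$, whence $\DD(P/Q)=0$ and $F=P/Q$ is a first integral. It remains to check that $F\notin\KK$; this is guaranteed because $F=P/Q$ was given as a genuine reduced rational function, and one can note that if $P/Q$ were a constant $c$ then $P=cQ$ would contradict coprimality unless $P,Q$ are both constant, but constants are excluded from being Darboux polynomials by Definition~\ref{def:Darboux}.

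\medskip

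This corollary is essentially a direct consequence of Lemma~\ref{darboux-semi-group} and the preliminary remark, so I do not anticipate a serious obstacle. The only point requiring mild care is the passage from the polynomial identity $\DD(P)\,Q=\DD(Q)\,P$ to the two separate divisibilities, which relies squarely on the coprimality of $P$ and $Q$; and the cancellation $\Lambda_P=\Lambda_Q$ must be justified by the fact that $PQ$ is a nonzero element of the integral domain $\KK[x,y]$. Both are routine, so the proof should be short.
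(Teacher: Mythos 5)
Your proof is correct and matches the argument the paper intends: the corollary is left unproved in the text, being exactly the combination of the ``starting observation'' ($\DD(F)=0 \iff \DD(P)\,Q=\DD(Q)\,P$, whence $P\mid\DD(P)$ and $Q\mid\DD(Q)$ by coprimality) with the substitution $\Lambda_P\,PQ=\Lambda_Q\,PQ$ that you carry out. Note that your argument never actually invokes Lemma~\ref{darboux-semi-group}, only the divisibility remark preceding Definition~\ref{def:Darboux} — which is fine, and if anything slightly more direct than the paper's placement suggests.
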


 The previous corollary gives a relation between Darboux polynomials and rational  first integrals. The next theorem shows that if we have enough Darboux polynomials, then we have a rational first integral, see  \cite[Appendix]{Si} for a modern proof.

\begin{Thm} \label{thm:darboux}[Darboux-Jouanolou~\cite{Da,Jou,Si}]\\
If $d=\max(\deg(A),\deg(B))$, then the polynomial vector field \eqref{eq-sys} has a reduced rational first integral $P/Q$ if and only if $\DD$ has at least $d\,(d+1)/2+2$ irreducible Darboux polynomials. In this case, $\DD$ has infinitely many irreducible Darboux polynomials and any of them divides a linear combination $\lambda\,  P - \mu \, Q$, for some $\lambda,\, \mu \in \overline{\KK}$ not both zero. 
Moreover, all but finitely many irreducible Darboux polynomials  are of the form $\lambda \,  P - \mu \, Q$ and have the same degree.
\end{Thm}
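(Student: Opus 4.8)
The plan is to establish both implications together with the two structural claims, using as black boxes Lemma~\ref{darboux-semi-group}, Corollary~\ref{RFI-Darboux}, Lemma~\ref{lem:uniq}, and the finiteness of the set of reducible members of a pencil (the spectrum, recalled below). The starting point is the \emph{cofactor degree bound}: if $M$ is a Darboux polynomial with $\DD(M)=\Lambda\,M$, then $\DD(M)=A\,M_x+B\,M_y$ has degree at most $\deg(M)+d-1$, which forces $\deg(\Lambda)\le d-1$. Hence every cofactor lies in the space $V$ of polynomials of degree at most $d-1$, whose dimension is $\binom{d+1}{2}=d(d+1)/2$; this is the source of the numerical bound in the statement.

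For the direct implication, suppose $F=P/Q$ is a reduced rational first integral. By Corollary~\ref{RFI-Darboux}, $P$ and $Q$ are Darboux polynomials with a common cofactor $\Lambda_0$, and then $\DD(\lambda P-\mu Q)=\Lambda_0\,(\lambda P-\mu Q)$ shows that every member of the pencil $\lambda P-\mu Q$ is a Darboux polynomial with that same cofactor. I would then observe that for $c\neq c'$ the polynomials $P-cQ$ and $P-c'Q$ share no irreducible factor: a common factor would divide $(c'-c)\,Q$, hence $Q$, hence also $P$, contradicting $\gcd(P,Q)=1$. Picking one irreducible factor (automatically a Darboux polynomial, by Lemma~\ref{darboux-semi-group}) from each of the infinitely many distinct curves $P-cQ=0$ produces infinitely many distinct irreducible Darboux polynomials, in particular at least $d(d+1)/2+2$ of them.

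The reverse implication is the heart of the matter, and the place where I expect the real difficulty. Assume $\DD$ has $m\ge d(d+1)/2+2$ irreducible Darboux polynomials $f_1,\dots,f_m$ with cofactors $\Lambda_1,\dots,\Lambda_m\in V$. The space $W$ of linear relations $\sum_i a_i\Lambda_i=0$ has dimension $m$ minus the rank of the $\Lambda_i$, hence at least $m-\dim(V)\ge 2$. To each relation $a=(a_i)\in W$ I attach the rational $1$-form $\eta_a:=\sum_i a_i\,df_i/f_i$, which is closed (each $df_i/f_i$ is) and whose pairing with $\DD$ equals $\sum_i a_i\,\DD(f_i)/f_i=\sum_i a_i\Lambda_i=0$. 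In the plane the $1$-forms annihilating $\DD$ are exactly the rational multiples of $\theta:=B\,dx-A\,dy$, so $\eta_a=R_a\,\theta$ for a rational function $R_a$. Choosing two independent relations $a,b\in W$ and setting $F:=R_a/R_b$, the closedness relations $0=d\eta_a=dR_a\wedge\theta+R_a\,d\theta$ (and likewise for $b$) combine to give $(R_b\,dR_a-R_a\,dR_b)\wedge\theta=0$, that is $dF\wedge\theta=0$; pairing $dF$ with $\DD$ then yields $\DD(F)=0$. The single remaining obstruction is non-constancy: if $F$ were constant then $\eta_a$ and $\eta_b$ would be proportional, forcing $\sum_i(a_i-c\,b_i)\,df_i/f_i=0$, which is impossible because the logarithmic forms $df_i/f_i$ of distinct irreducible $f_i$ are linearly independent over $\overline\KK$ (compare residues along each curve $f_i=0$). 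Hence $F$ is a genuine rational first integral.

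It remains to prove the two structural statements, now assuming a reduced rational first integral $F=P/Q$ exists (by the first paragraph this already forces infinitely many irreducible Darboux polynomials). Given an irreducible Darboux polynomial $M$, I would pick a point $(x_0,y_0)$ on the curve $M=0$ that is non-singular for $\DD$ and at which $F$ is defined; both conditions exclude only finitely many points of the infinite curve, using $\gcd(A,B)=\gcd(P,Q)=1$. Writing $c_0:=F(x_0,y_0)$ and letting $[\lambda_0:\mu_0]$ represent $c_0$, the polynomial $\lambda_0 P-\mu_0 Q$ is a Darboux polynomial vanishing at $(x_0,y_0)$; since $M$ is irreducible and also vanishes there, Lemma~\ref{lem:uniq} gives $M\mid \lambda_0 P-\mu_0 Q$. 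For the last claim, all but finitely many members of the pencil $\lambda P-\mu Q$ are irreducible of degree exactly $\deg(F)$, the exceptional values forming the finite spectrum; an irreducible $M$ dividing such a generic member must equal it up to a scalar. Thus every irreducible Darboux polynomial, apart from the irreducible factors of the finitely many exceptional members, is of the form $\lambda P-\mu Q$ and has degree $\deg(F)$, which completes the proof.
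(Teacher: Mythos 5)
The paper does not actually prove this theorem: it is quoted as Darboux--Jouanolou with a pointer to the appendix of \cite{Si} for a modern proof, so there is no in-paper argument to compare against. Your proof is the classical one --- bound the cofactor degrees by $d-1$ so that they live in a space of dimension $d(d+1)/2$, extract two independent linear relations among the cofactors of $\geq d(d+1)/2+2$ irreducible Darboux polynomials, convert each relation into a closed logarithmic $1$-form $\eta_a=R_a\,\theta$ annihilating $\DD$, and set $F=R_a/R_b$ --- and this is essentially the argument of \cite{Jou,Si}. The computations check out: $d\eta_a=0$ gives $(R_b\,dR_a-R_a\,dR_b)\wedge\theta=0$, hence $dF\wedge\theta=0$ and $\DD(F)=dF(\DD)=0$ because $\theta(\DD)=BA-AB=0$; non-constancy follows from the residue argument for independence of the $df_i/f_i$; the divisibility claim via Lemma~\ref{lem:uniq} is exactly how the paper itself later exploits the theorem; and the forward implication via the pencil $\lambda P-\mu Q$ is standard.

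Two points should be tightened. First, your $F=R_a/R_b$ lives a priori in $\overline{\KK}(x,y)$, since the $f_i$ and the relation vectors $a,b$ are only defined over $\overline{\KK}$; if the conclusion is read with $P/Q\in\KK(x,y)$ as in Definition~\ref{def:rfi}, a descent step is needed, which the paper isolates separately as Theorem~\ref{thm11} (a norm over the Galois group). Second, the final claim --- that all but finitely many irreducible Darboux polynomials equal some $\lambda P-\mu Q$ of one common degree --- requires the spectrum $\sigma(P,Q)$ to be finite, which holds only when $P/Q$ is \emph{non-composite}; you invoke finiteness of the set of reducible members of the pencil as a black box, but for a composite $P/Q$ every member of the pencil is reducible and the step fails, so you should first replace $P/Q$ by a non-composite first integral generating the same kernel (Theorem~\ref{thm:RFI-struct}). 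Neither point affects the core of the argument.
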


A useful corollary of Theorem~\ref{thm:darboux} is the following, see \cite{Si}:

\begin{Cor} \label{degre-fini}
For each planar polynomial vector field \eqref{eq-sys}, there exists a non-negative integer $N_{\eqref{eq-sys}}$ such that any {\em irreducible} Darboux
polynomial for the derivation $\DD$ attached to~\eqref{eq-sys} has degree at most $N_{\eqref{eq-sys}}$.
\end{Cor}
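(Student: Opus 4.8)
The plan is to split the argument into two cases according to whether $\DD$ admits a reduced rational first integral, and in each case to extract a uniform degree bound directly from the Darboux--Jouanolou Theorem~\ref{thm:darboux}.

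First I would treat the case where $\DD$ has \emph{no} rational first integral. By the contrapositive of the equivalence in Theorem~\ref{thm:darboux}, the derivation $\DD$ then possesses at most $d\,(d+1)/2 + 1$ irreducible Darboux polynomials. This is a finite set (possibly empty), so the maximum of the degrees of its members is a well-defined non-negative integer, with the convention that the maximum over the empty set is $0$; I would take this maximum as the sought bound $N_{\eqref{eq-sys}}$.

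Next I would treat the case where $\DD$ has a reduced rational first integral $P/Q$. Here the last assertions of Theorem~\ref{thm:darboux} apply: all but finitely many irreducible Darboux polynomials are of the form $\lambda\,P - \mu\,Q$ and share a common degree $\delta = \deg(P/Q)$. The remaining exceptional irreducible Darboux polynomials form a finite set, each of some finite degree, so setting $N_{\eqref{eq-sys}}$ to be the maximum of $\delta$ and of the degrees of these finitely many exceptions yields the desired bound.

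The only point requiring care is the finiteness of the relevant families of Darboux polynomials, but both instances are supplied verbatim by Theorem~\ref{thm:darboux}: in the first case the absence of a rational first integral forces fewer than $d\,(d+1)/2+2$ irreducible Darboux polynomials, while in the second case the ``all but finitely many'' clause guarantees that only finitely many irreducible Darboux polynomials can escape the common degree $\delta$. Since a finite set of non-negative integers always attains a maximum, the integer $N_{\eqref{eq-sys}}$ exists in both cases, which completes the argument.
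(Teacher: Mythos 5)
Your argument is correct and is exactly the intended derivation: the paper states this as an immediate consequence of the Darboux--Jouanolou Theorem~\ref{thm:darboux} (citing \cite{Si} for details), and your case split --- finitely many irreducible Darboux polynomials when no rational first integral exists, versus a common degree for all but finitely many of them when one does --- is the straightforward way to extract the uniform bound $N_{\eqref{eq-sys}}$ from that theorem. No gaps.
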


Given a derivation $\DD$, the problem of finding a bound for the degree of irreducible
Darboux polynomials is known to be difficult: this is the so-called {\em
Poincar\'e problem}. It has been deeply studied in the literature and many
partial results exist (\cite{Poi,CeLi91,Car,Per02,Walcher,ChaGar,LeiYan} and others) though the
question is not fully solved yet.
The fact that the derivation $\DD=n\,x\,\frac{\partial}{\partial x} + y\,\frac{\partial}{\partial y}$
with $n \in \NN^*$ admits $x-y^{n}$ as an irreducible Darboux polynomial
shows that a bound depending only on the degrees of the entries cannot exist:
arithmetic conditions on the coefficients of $\DD$ have to be taken into
account as well.\\

Consequently, given a planar polynomial vector field \eqref{eq-sys}, or equivalently a derivation $\DD$,  two distinct problems occur when we want to compute rational first integrals: 
\begin{enumerate}
\item Find a bound on the degree of the numerator and denominator of a rational first integral, that is a bound on the degree of irreducible Darboux polynomials; 
\item $({\mathcal P}_N)$: given a degree bound $N\in \NN$, either compute a rational first integral $F\in \KK(x,y)\setminus \KK$ of \eqref{eq-sys} of total degree
at most $N$, or prove that no such $F$ exists. 
\end{enumerate}

Our aim is to give an efficient algorithm to handle the second problem $({\mathcal P}_N)$.\\

In this article we suppose that $d \leq N$. This hypothesis is natural because if a derivation has a polynomial first integral of degree $N$, then we can show that $d \leq N-1$, see \cite[Theorem 6]{FerLib} or \cite{Poi}.

\subsection{Non-composite rational functions and their spectrum} \label{sec_spectrum}  

We recall here the definition of \emph{composite} rational functions and what is called the \emph{spectrum} of a rational function. We then use these notions to describe the kernel of the derivation $\CD$ and to give some of its properties.

\begin{Def} 
	A rational function $F(x,y) \in \KK(x,y)$ is \emph{composite} if
it can be written $F=u \circ h$, i.e., $F=u(h)$, where $h \in \KK(x,y)\setminus \KK$ and $u
\in \KK(T)$ with $\deg(u) \geq 2$. Otherwise $F$ is said to be {\em
non-composite}. 
\end{Def}

 In \cite{Ollagnier,CheDecompDarboux}, the authors propose different algorithms for the decomposition of rational functions using properties of Darboux polynomials and rational first integrals of the Jacobian derivation.

\begin{Lem}  \label{lem:RFI-alg}
The set of all rational first integrals of \eqref{eq-sys} is a $\KK$-algebra. It is closed under composition with rational functions in $\KK(T)$,
and moreover, $F$ is a rational first integral of \eqref{eq-sys} if and only if $u \circ F$ is a rational first integral of \eqref{eq-sys} for some $u \in \KK(T) \setminus \KK.$
\end{Lem}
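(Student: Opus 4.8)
The plan is to deduce all three assertions from the single structural fact that $\DD$ is a $\KK$-linear derivation of the field $\KK(x,y)$, combined with two features of our setting: the base field $\KK$ has characteristic zero, and any $F \in \KK(x,y) \setminus \KK$ is transcendental over $\KK$ (because $\KK$ is algebraically closed in the purely transcendental extension $\KK(x,y)$). Throughout, it is convenient to work with $\ker \DD = \{ F \in \KK(x,y) : \DD(F) = 0 \}$, whose non-constant elements are precisely the rational first integrals of \eqref{eq-sys}.

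First, I would establish the algebra structure. Since $\DD$ is $\KK$-linear and obeys the Leibniz rule $\DD(FG) = \DD(F)\,G + F\,\DD(G)$, the kernel $\ker \DD$ is stable under addition, under multiplication by scalars of $\KK$, and under products. Hence $\ker \DD$ is a $\KK$-subalgebra of $\KK(x,y)$, which is the asserted algebra of first integrals (together with the constants $\KK$).

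Next comes the chain rule, which drives both the composition statement and the equivalence. For $u = P/Q \in \KK(T)$ I would prove $\DD\bigl(u(F)\bigr) = u'(F)\,\DD(F)$ in three routine steps: on powers $F^n$ by induction from Leibniz, on polynomials $P(F)$ by $\KK$-linearity, and on the quotient $u(F) = P(F)/Q(F)$ by the quotient rule. Specializing to $\DD(F) = 0$ immediately gives $\DD(u(F)) = 0$, so $u \circ F \in \ker \DD$; and when $u$ is non-constant, $u(F)$ is again non-constant because $T \mapsto F$ defines a field embedding $\KK(T) \hookrightarrow \KK(x,y)$ (here transcendence of $F$ is used). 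This proves closure under composition with $\KK(T)$.

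Finally, for the equivalence, the direction ``$F$ a first integral $\Rightarrow u \circ F$ a first integral'' is the special case of the previous step (or simply take $u = T$). The reverse direction is the only real obstacle: assuming $\DD(u(F)) = 0$ with $u \in \KK(T) \setminus \KK$, the chain rule gives $u'(F)\,\DD(F) = 0$, and to conclude $\DD(F) = 0$ I must know that $u'(F) \neq 0$ in the field $\KK(x,y)$. This is exactly where the two hypotheses enter: since $\mathrm{char}\,\KK = 0$ and $u$ is non-constant, $u'(T) \neq 0$ in $\KK(T)$; since $u(F) \notin \KK$ forces $F \notin \KK$, the element $F$ is transcendental over $\KK$, so the embedding $\KK(T) \hookrightarrow \KK(x,y)$ sends the nonzero $u'(T)$ to a nonzero $u'(F)$. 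Cancelling $u'(F)$ in the field yields $\DD(F) = 0$, and as $F \notin \KK$ we conclude that $F$ is a rational first integral, completing the proof.
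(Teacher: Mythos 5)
Your proposal is correct and follows essentially the same route as the paper: the $\KK$-algebra structure from $\KK$-linearity and the Leibniz rule, and both composition statements from the identity $\DD(u\circ F)=u'(F)\,\DD(F)$ together with the fact that $u'(F)=0$ if and only if $u\in\KK$. The only difference is that you spell out the justification of this last fact (characteristic zero plus transcendence of $F$ over $\KK$), which the paper leaves implicit.
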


\begin{proof}
The first assertion directly follows from the fact that the derivation\\
$\DD=A(x,y)\,\frac{\partial}{\partial x} + B(x,y)\,\frac{\partial}{\partial
y}$
is $\KK$-linear and satisfies Leibniz's rule 
\[ \DD(F_1 F_2) = F_1 \, \DD(F_2) + \DD(F_1) \, F_2.\]
The second assertion follows from the equality 
\[ \DD(u \circ F) = u'(F) \, \DD(F),\]
and the fact that $u'(F)$ is zero if and only if $u \in \KK$.
\end{proof}

A more precise version of Lemma~\ref{lem:RFI-alg} is given by the next theorem which completely describes the $\KK$-algebra structure of the set of all rational first integrals of \eqref{eq-sys}. This theorem seems to be a folklore result but we have not found a suitable reference. Consequently, a complete proof is provided here.

\begin{Thm}  \label{thm:RFI-struct}
Let $\DD$ be the derivation attached with \eqref{eq-sys}. Then we have:
\[ \{ G \in \KK(x,y) \; | \; \DD(G) = 0 \} = \KK(F),\]
for some non-composite reduced rational first integral $F$ of \eqref{eq-sys}.\\
Then any other rational first integral $G$ of \eqref{eq-sys} is of the form 
$G = u \circ F$ for some $u \in \KK(T) \setminus \KK.$
In particular, any two non-composite reduced 
rational first integrals are equal, up to a homography.
\end{Thm}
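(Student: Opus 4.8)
The plan is to study the field of constants $L := \{\, G \in \KK(x,y) \mid \DD(G) = 0 \,\}$ of $\DD$ and to show that it is a purely transcendental extension $\KK(F)$ of $\KK$ of transcendence degree one; all three assertions then drop out formally. By Lemma~\ref{lem:RFI-alg}, $L$ is a subfield of $\KK(x,y)$ containing $\KK$. First I would establish two structural facts. \emph{(i) $L$ is algebraically closed in $\KK(x,y)$:} if $g \in \KK(x,y)$ is algebraic over $L$ with minimal polynomial $m(T) = \sum_i a_i\,T^i$ ($a_i \in L$), then applying $\DD$ to $m(g) = 0$ and using $\DD(a_i) = 0$ yields $m'(g)\,\DD(g) = 0$; since $\mathrm{char}\,\KK = 0$, $m$ is separable, so $m'(g) \neq 0$ and hence $\DD(g) = 0$, i.e. $g \in L$. \emph{(ii) $\trdeg_\KK L = 1$:} it is at least $1$ because $L$ contains the non-constant first integral $F$ whose existence is implicit in the statement; and it cannot be $2$, for otherwise $\KK(x,y)$ would be algebraic over $L$, whence by (i) we would get $x,y \in L$, i.e. $A = \DD(x) = 0$ and $B = \DD(y) = 0$, contradicting that $A$ and $B$ are coprime.

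The core step is to upgrade (i)--(ii) to rationality of $L$, and this is where I expect the main obstacle. I would argue via the one-variable Lüroth theorem. The inclusion $L \hookrightarrow \KK(x,y)$ corresponds to a dominant rational map from the plane to the smooth projective curve $C$ with function field $L$; restricting to a generic line isomorphic to $\mathbb{A}^1$ produces a non-constant (hence dominant) rational map from $\mathbb{A}^1$ to $C$, for if every line were contracted the map would be constant. Dually, this is a $\KK$-embedding $L \hookrightarrow \KK(t)$ of $L$ into a rational function field in one variable. By the classical Lüroth theorem the image equals $\KK(g)$ for some $g$, necessarily transcendental since $\trdeg_\KK L = 1$; therefore $L \cong \KK(T)$ as an extension of $\KK$, which means $L = \KK(F)$ for some $F \in L \subseteq \KK(x,y)$, and we may take $F$ reduced. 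The delicate points here are exactly those that make ``transcendence degree one plus algebraically closed implies rational'' rigorous, namely selecting a line along which the restriction stays non-constant and handling the fact that $\KK$ need not be algebraically closed; everything surrounding this step is purely formal. (An alternative, more hands-on route would take $F$ of minimal degree and prove $L = \KK(F)$ directly, but the geometric Lüroth argument seems cleanest.)

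Finally I would harvest the three claims from the identity $L = \KK(F)$. \emph{Non-compositeness of $F$:} if $F = u \circ h$ with $\deg u \geq 2$ and $h \in \KK(x,y)\setminus\KK$, then $\DD(F) = u'(h)\,\DD(h) = 0$ with $u'(h) \neq 0$ (as $h$ is transcendental over $\KK$), forcing $\DD(h) = 0$, so $h \in L = \KK(F)$, say $h = v(F)$; then $u \circ v = \mathrm{id}$, whence $\deg u = 1$, a contradiction. \emph{Description of all first integrals:} any rational first integral $G$ lies in $L = \KK(F)$, so $G = R(F)$ for some $R \in \KK(T)$, and $R \notin \KK$ since $G \notin \KK$, i.e. $G = R \circ F$. \emph{Uniqueness up to homography:} if $F'$ is another non-composite reduced first integral, then $F' \in \KK(F)$, say $F' = R(F)$, and non-compositeness of $F'$ forces $\deg R = 1$, so $R = (aT+b)/(cT+d)$ with $ad - bc \neq 0$ and $F' = (aF+b)/(cF+d)$; this simultaneously gives $\KK(F') = \KK(F) = L$, completing the argument.
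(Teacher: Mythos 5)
Your proof is correct, and it shares the paper's overall architecture (identify the constant field $\LL=\{G : \DD(G)=0\}$, show $\trdeg_{\KK}\LL=1$, invoke Lüroth, then deduce non-compositeness and uniqueness formally), but it implements the two key sub-steps with genuinely different tools. For the transcendence degree bound, the paper writes $\LL=\KK(f_1,f_2,f_3)$ (citing that subfields of finitely generated extensions are finitely generated) and applies the Jacobian criterion: all the gradients $\nabla f_i$ are orthogonal to $(A,B)$, so all $2\times 2$ Jacobian minors vanish and $\trdeg_{\KK}\LL\le 1$. Your route via ``the constant field is algebraically closed in $\KK(x,y)$, so $\trdeg=2$ would force $x,y\in\LL$, i.e.\ $A=B=0$'' is a clean and standard alternative that uses only the derivation. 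For rationality, the paper simply cites the extended (multivariate) Lüroth theorem of Schinzel, whereas you sketch a proof of that theorem by restricting the map $\PP^2\dashrightarrow C$ to a generic line and applying classical one-variable Lüroth; this is more self-contained but is exactly where the remaining work lies — you should either also cite the extended Lüroth theorem, or supply the details you flag (in particular, that $\LL$ is finitely generated over $\KK$, so that it is the function field of a curve, and that a generic $\KK$-rational line is not contracted — both fine since $\KK$ is infinite). Your derivations of the final three claims from $\LL=\KK(F)$, including the multiplicativity-of-degree argument for non-compositeness in place of the paper's citation of $\KK(F)\subsetneq\KK(H)$, are all correct.
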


\begin{proof}
Let  $\LL=\{ G \in \KK(x,y) \; | \; \DD(G) = 0 \} $. We have $\KK \subset \LL \subset \KK(x,y)$, so, from \cite[Theorem 1, p. 12]{Schinzel}, we deduce that $\LL$ is finitely generated over $\KK$ and that  $\LL=\KK(f_1, f_2, f_3)$ for some $f_1,\,f_2,\,f_3 \in \KK(x,y)$.\\
As for $i \in \{1,2,3\}$,  $A(x,y)\,\frac{\partial f_i}{\partial x}+B(x,y)\,\frac{\partial f_i}{\partial y}=0$, we get that:
$$\frac{\partial f_i}{\partial y}\frac{\partial f_j}{\partial x}-\frac{\partial f_i}{\partial x}\frac{\partial f_j}{\partial y}=0, \quad  \text{for all} \; i, \, j \in \{1,2,3\}.$$
The Jacobian criterion implies that $f_1, f_2, f_3$ are algebraically dependent and thus the transcendence degree of $\LL$ over $\KK$ is equal to one. By the extended Luroth's theorem, see \cite[Theorem 3, p. 15]{Schinzel}, we get $\LL=\KK(F)$, for $F \in \KK(x,y)$. In particular, $F$ is a rational first integral of \eqref{eq-sys}.\\
Now, if $F$ is composite, $F=u(H)$, with $\deg(u)\geq 2$, then \mbox{$\KK(F) \subsetneq \KK(H)$}, see, e.g., \cite[Proposition 2.2]{GutRubSev}. By Lemma \ref{lem:RFI-alg}, $H$ is also a rational first integral of \eqref{eq-sys} so that $H \in \LL$ and $\KK(H) \subset \LL$. This yields $\LL =\KK(F) \subsetneq \KK(H) \subset \LL$, which is absurd. Thus $F$ is non-composite which gives the desired result.
 \end{proof}

As a consequence of Theorem~\ref{thm:RFI-struct}, \emph{non-composite reduced\/}
rational first integrals coincide with rational first integrals with minimal
degree; they will play a key role in the remaining of this text.\\

In Definition~\ref{def:rfi}, we have defined rational first integrals as elements of $\KK(x,y)$. The next theorem explains why it is in general not necessary to consider rational first integrals in $\overline{\KK}(x,y)$. To our knowledge this result is proved here for the first time. In \cite{Man}, the authors show that if there exists a rational first integral in $\overline{\KK}(x,y)$, then there also exists a rational first integral in $\KK(x,y)$. We improve this result by taking into account the degrees of these rational first integrals.
\begin{Thm}\label{thm11}
If \eqref{eq-sys} admits a non-composite rational first integral in $\overline{\KK}(x,y)$, then it admits a non-composite rational first integral in $\KK(x,y)$ with the same degree.
\end{Thm}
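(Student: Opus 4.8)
The plan is to compare the non-composite (equivalently, minimal-degree) rational first integrals over $\KK$ and over $\overline{\KK}$ furnished by the structure Theorem~\ref{thm:RFI-struct}, and to force equality of their degrees using the Darboux--Jouanolou Theorem~\ref{thm:darboux}. Write $\LL_{\KK}:=\{G\in\KK(x,y):\DD(G)=0\}$ and $\LL_{\overline{\KK}}:=\{G\in\overline{\KK}(x,y):\DD(G)=0\}$. Applying Theorem~\ref{thm:RFI-struct} over $\overline{\KK}$, the hypothesis gives $\LL_{\overline{\KK}}=\overline{\KK}(F)$ for a non-composite reduced $F=P/Q$ of (minimal) degree $n:=\deg(F)$. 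By Theorem~\ref{thm:RFI-struct} over $\KK$, it then suffices to prove that $\LL_{\KK}\neq\KK$ and that its non-composite reduced generator has degree $n$.

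First I would produce some rational first integral over $\KK$ by Galois averaging. The finitely many coefficients of $F$ lie in a finite Galois extension $\LL'/\KK$; set $G'=\mathrm{Gal}(\LL'/\KK)$. Since $\DD$ has coefficients in $\KK$, each conjugate $F^{\sigma}$ ($\sigma\in G'$) satisfies $\DD(F^{\sigma})=\DD(F)^{\sigma}=0$, hence is a first integral over $\overline{\KK}$. The coefficients of $\prod_{\sigma\in G'}(T-F^{\sigma})$ are $G'$-invariant, so they lie in $\KK(x,y)$, and they are first integrals because the kernel of $\DD$ is a field. At least one of them is non-constant: otherwise this polynomial would have all its coefficients in $\overline{\KK}$, forcing its roots $F^{\sigma}$ to lie in $\overline{\KK}$, which is absurd since the $F^{\sigma}$ are non-constant. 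Thus $\LL_{\KK}\neq\KK$, and Theorem~\ref{thm:RFI-struct} yields $\LL_{\KK}=\KK(F_{0})$ with $F_{0}=P_{0}/Q_{0}$ non-composite and reduced over $\KK$. (Alternatively, the mere existence of a first integral over $\KK$ may be quoted from \cite{Man}.)

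Since every first integral over $\KK$ is one over $\overline{\KK}$, we have $\LL_{\KK}\subseteq\LL_{\overline{\KK}}$, so $F_{0}\in\overline{\KK}(F)$ and $F_{0}=\phi(F)$ for some $\phi\in\overline{\KK}(T)$. As $F$ has minimal degree $n$ we get $\deg(F_{0})\geq n$, and it remains to show $m:=\deg_{T}(\phi)=1$. Suppose $m\geq2$. Because $F$ is non-composite over $\overline{\KK}$, all but finitely many of the polynomials $P-tQ$ ($t\in\overline{\KK}$) are irreducible over $\overline{\KK}$ (finiteness of the spectrum, see Subsection~\ref{sec_spectrum}). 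Choose $c\in\overline{\KK}$ generic, so that $\phi^{-1}(c)=\{t_{1},\dots,t_{m}\}$ consists of $m$ distinct points, each with $P-t_{i}Q$ irreducible. On the curve $\{F=t_{i}\}$ one has $F_{0}=\phi(t_{i})=c$, hence $P_{0}-cQ_{0}$ vanishes there; therefore the $m\geq2$ distinct irreducible polynomials $P-t_{1}Q,\dots,P-t_{m}Q$ all divide $P_{0}-cQ_{0}$, which is thus reducible over $\overline{\KK}$.

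Finally I would reach a contradiction through Theorem~\ref{thm:darboux}. As coprimality of $P_{0},Q_{0}$ over $\KK$ is preserved over $\overline{\KK}$, the function $F_{0}=P_{0}/Q_{0}$ is a reduced rational first integral, and Theorem~\ref{thm:darboux} guarantees infinitely many irreducible Darboux polynomials, all but finitely many of the form $\lambda P_{0}-\mu Q_{0}$. But the previous step shows that, when $m\geq2$, $\lambda P_{0}-\mu Q_{0}$ is reducible for all but finitely many $[\lambda:\mu]$, so the pencil $\{\lambda P_{0}-\mu Q_{0}\}$ has only finitely many irreducible members, contradicting the existence of infinitely many irreducible Darboux polynomials that are almost all of this form. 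Hence $m=1$, so $\deg(F_{0})=n=\deg(F)$ and $F_{0}$ is the desired non-composite rational first integral over $\KK$ of the same degree. The main obstacle is exactly this degree comparison: non-compositeness of $F_{0}$ over $\KK$ need not survive base change to $\overline{\KK}$, and it is the pencil structure of Darboux--Jouanolou that rules out the composite case $m\geq2$.
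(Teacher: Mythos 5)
Your first half coincides with the paper's argument: the paper forms the Galois norm $N(f)=\prod_{\sigma}\sigma(f)\in\KK(x,y)$ and invokes Lemma~\ref{lem:RFI-alg} to extract a non-composite first integral over $\KK$; your variant using the coefficients of $\prod_{\sigma}(T-F^{\sigma})$ is, if anything, more careful, since it explicitly rules out the degenerate possibility that the averaged object is constant. The divergence is in the degree comparison. The paper closes in one line by quoting \cite[Theorem 13]{BCN} --- non-compositeness over $\KK$ implies non-compositeness over $\overline{\KK}$ --- whence $\deg(u)=1$ in $F_0=u(F)$ by Theorem~\ref{thm:RFI-struct} applied over $\overline{\KK}$. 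You instead try to exclude $m=\deg(\phi)\geq 2$ by showing the generic member $P_0-cQ_0$ of the pencil of $F_0$ is then reducible over $\overline{\KK}$, and contradicting the clause of Theorem~\ref{thm:darboux} that all but finitely many irreducible Darboux polynomials are of the form $\lambda P_0-\mu Q_0$.

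That last step is where the gap lies. The ``all but finitely many irreducible Darboux polynomials are of the form $\lambda P-\mu Q$'' clause of Darboux--Jouanolou is only valid when $P/Q$ is non-composite over $\overline{\KK}$: for $\DD=x\,\frac{\partial}{\partial x}+y\,\frac{\partial}{\partial y}$ the function $x^{2}/y^{2}$ is a reduced first integral, yet no irreducible Darboux polynomial $\lambda x-\mu y$ is of the form $\lambda x^{2}-\mu y^{2}$. In your setting, $m\geq 2$ means precisely that $F_0$ is composite over $\overline{\KK}$, so applying that clause to $F_0$ presupposes the negation of the very case you are analysing; the argument is circular unless you already know that non-compositeness descends from $\KK$ to $\overline{\KK}$, which is exactly the content of \cite[Theorem 13]{BCN}. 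The repair is short: your reducibility computation shows that $\sigma(P_0,Q_0)$ is infinite when $m\geq2$, and the equivalence recalled in Subsection~\ref{sec_spectrum} (finiteness of the spectrum is equivalent to non-compositeness, see \cite{Jou,Bod}) then makes $F_0$ composite over $\KK$, contradicting its choice as the non-composite generator of $\KK(F_0)$. So your route is salvageable, but the decisive input has to be the spectrum/non-compositeness equivalence (or \cite{BCN} directly, as the paper does), not the pencil clause of Theorem~\ref{thm:darboux}.
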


\begin{proof}
Let $f \in \overline{\KK}(x,y)$ be a non-composite rational first integral of \eqref{eq-sys}. We denote by $N(f)$ the product
$$N(f)=\prod_{\sigma_i \in G} \sigma_i(f),$$
where $G$ is the Galois group over $\KK$ of the smallest Galois extension containing all the coefficients of $f$, and we have $N(f) \in \KK(x,y)$.
As $A, B \in \KK(x,y)$, $N(f)$ is also a rational first integral of \eqref{eq-sys}. Thus, by Lemma~\ref{lem:RFI-alg}, there exists a non-composite rational first integral $F \in \KK(x,y)$ of \eqref{eq-sys}. Now, applying Theorem~\ref{thm:RFI-struct} with ground field $\overline{\KK}$ instead of $\KK$, we get that $F=u(f)$, with $u \in \overline{\KK}(T)$. Furthermore, by \cite[Theorem 13]{BCN}, $F$ is non-composite in $\KK(x,y)$ implies that $F$ is non-composite in $\overline{\KK}(x,y)$. It thus follows that $\deg(u)=1$ so that $\deg(F)=\deg(f)$.
\end{proof}


Now, we introduce the \emph{spectrum} of a rational function which will play a crucial role in our algorithms.

\begin{Def}
Let $P/Q  \in \KK(x,y)$ be a reduced rational function of degree $N$. The set 
\begin{eqnarray*}
\sigma(P,Q) =\{ (\lambda:\mu) \in \PP^1_{\overline{\KK}} &\mid &\lambda \, P- \mu \, Q\textrm{ is reducible in } \overline{\KK}[x,y], \\
&&\textrm{ or } \deg (\lambda \, P - \mu \, Q ) < N\}
\end{eqnarray*} 
is called the \emph{spectrum} of $P/Q$.
\end{Def}

In the context of rational first integrals of polynomial vector fields, the elements of the spectrum are sometimes called \emph{remarkable values}, see, e.g., \cite{FerLib}. There exists a vast bibliography about the spectrum,  see for example \cite{Rup,Lor,Vis,Vis1,AHS,Bod, BC}.\\

The spectrum $\sigma(P,Q)$ is finite if and only if $P/Q$ is non-composite and
if and only if the pencil of algebraic curves $\lambda \, P - \mu \, Q=0$ has an
irreducible general element, see for instance \cite[Chapitre 2, Th\'eor\`eme
3.4.6]{Jou} or \cite[Theorem 2.2]{Bod} for detailed proofs.

To the authors' knowledge, the first effective result on the spectrum is due
to Poincar\'e. In~\cite{Poi}, he establishes a relation between the number of
\emph{saddle points} and the cardinal of the
associated spectrum, in the case where all the singular points of the polynomial vector field
are distinct. In particular this yields the bound $ |\sigma(P,Q)| \leq d^2$ on the cardinality of the spectrum.
This bound was improved recently in \cite{Che12}:

\begin{Thm}\label{bornespectre} 
Let $\DD$ be the derivation attached with \eqref{eq-sys} and $d$ denotes the degree of \eqref{eq-sys}.
If $P/Q$ is a reduced non-composite rational  first integral of  \eqref{eq-sys}, then:
$$|\sigma(P,Q)| \leq \mathcal{B}(d)+1, \textrm{ where } \mathcal{B}(d)=\dfrac{d(d+1)}{2}.$$
\end{Thm}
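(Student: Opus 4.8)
The plan is to exploit the fact that every member of the pencil $\lambda\,P-\mu\,Q$ is a Darboux polynomial sharing a single cofactor, and then to bound the number of its special members by a dimension count in the space of cofactors. First I would record that, by Corollary~\ref{RFI-Darboux}, the numerator $P$ and denominator $Q$ of the reduced first integral $F=P/Q$ are Darboux polynomials with one and the same cofactor $\Lambda$, whence $\DD(\lambda\,P-\mu\,Q)=\Lambda\,(\lambda\,P-\mu\,Q)$ for every $(\lambda:\mu)\in\PP^1_{\overline{\KK}}$: the entire pencil consists of Darboux polynomials with cofactor $\Lambda$. Since $\DD$ raises total degree by at most $d-1$, any cofactor lies in the space $V$ of polynomials of total degree $\le d-1$, and $\dim_{\overline{\KK}}V=d(d+1)/2=\mathcal{B}(d)$. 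Because $F$ is non-composite, the spectrum is finite, so a generic member of the pencil is irreducible of degree $N$, and two distinct members are coprime (a common irreducible factor would divide both $P$ and $Q$). Thus $\sigma(P,Q)$ is exactly the set of values $(\lambda_i:\mu_i)$ for which $C_i:=\lambda_i\,P-\mu_i\,Q$ is reducible, non-reduced, or drops in degree.

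Next I would decompose each special member $C_i=\prod_g g^{e_{i,g}}$ into irreducible Darboux polynomials; by Lemma~\ref{darboux-semi-group} the cofactors satisfy $\sum_g e_{i,g}\,\Lambda_g=\Lambda$. For each special member I discard exactly one irreducible factor and keep the remaining $n_i-1$ of them (counted with multiplicity), where $n_i$ is the number of irreducible factors of $C_i$. The heart of the argument is the claim that the cofactors of all the kept factors, ranging over all special members, are linearly independent in $V$: a dependence $\sum c\,\Lambda_g=0$ would, by the Darboux mechanism, make $\prod g^{c}$ a rational first integral, hence of the form $u\circ F$ by Theorem~\ref{thm:RFI-struct}; but writing $u(P/Q)$ as a product of \emph{whole} pencil members forces every irreducible factor of each involved member to occur, contradicting the fact that one factor per member was discarded. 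Granting the claim, $\sum_i(n_i-1)\le\dim_{\overline{\KK}}V=\mathcal{B}(d)$.

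It then remains to count. Every spectrum element that is reducible or non-reduced has $n_i\ge 2$, so the number of such elements is at most $\sum_i(n_i-1)\le\mathcal{B}(d)$. The only other spectrum elements are those arising solely from a degree drop with $C_i$ irreducible ($n_i=1$); since a degree drop forces the top-degree forms of $P$ and $Q$ to become proportional, there is at most one such value in $\PP^1_{\overline{\KK}}$. Adding it yields $|\sigma(P,Q)|\le\mathcal{B}(d)+1$.

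I expect the main obstacle to be the linear-independence claim. One must run the Darboux--Jouanolou relation argument carefully in the presence of multiplicities and, crucially, control the field of scalars: the product $\prod g^{c}$ is a genuine \emph{rational} first integral only for integer (equivalently rational) exponents, so the relation argument directly yields independence only over $\QQ$, whereas the count requires independence against $\dim_{\overline{\KK}}V$. I would attempt to bridge this gap by reducing to the case $\KK=\overline{\KK}$ (legitimate here by Theorem~\ref{thm11}, which preserves degrees and non-compositeness) and by exploiting the $\overline{\KK}$-rationality and Galois-stability of the factorisation of the pencil together with the structural description of all first integrals as $\overline{\KK}(F)$ furnished by Theorems~\ref{thm:RFI-struct} and~\ref{thm:darboux}.
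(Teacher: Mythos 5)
A preliminary remark: the paper does not prove Theorem~\ref{bornespectre} at all --- it is quoted from the technical report \cite{Che12}, whose very title (``Bounding the number of remarkable values via Jouanolou's theorem'') indicates a cofactor-counting argument of exactly the kind you propose --- so there is no internal proof to compare against and your sketch must be judged on its own merits. Several of your steps are sound: every pencil member $\lambda\,P-\mu\,Q$ is a Darboux polynomial with the common cofactor $\Lambda$, cofactors live in the $\mathcal{B}(d)$-dimensional space of polynomials of degree at most $d-1$, distinct members of the pencil are coprime, and at most one member can drop in degree, which correctly accounts for the ``$+1$''. But two steps are genuinely broken. The first is the multiplicity bookkeeping. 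You count irreducible factors \emph{with} multiplicity (you must, since otherwise ``reducible $\Rightarrow n_i\ge 2$'' fails for members of the form $g^k$), but then the list of kept cofactors may contain repetitions; a repeated cofactor is trivially linearly dependent, and the corresponding relation $\Lambda_g-\Lambda_g=0$ yields only the constant $g\,g^{-1}=1$, so your contradiction mechanism does not fire. Concretely, the intermediate inequality $\sum_i(n_i-1)\le\mathcal{B}(d)$ is false: for $\DD=x\,\frac{\partial}{\partial x}+3\,y\,\frac{\partial}{\partial y}$ (so $d=1$ and $\mathcal{B}(1)=1$) with the reduced non-composite first integral $x^3/y$, the single reducible member $x^3$ already contributes $n_i-1=2$. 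The theorem survives there ($|\sigma|=2=\mathcal{B}(1)+1$) only because your preceding inequality $\#\{\textrm{reducible members}\}\le\sum_i(n_i-1)$ happens to be strict. A correct count has to use the relations $\sum_g e_{i,g}\Lambda_g=\Lambda$ with the multiplicities entering as \emph{coefficients}, not as repeated vectors.

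The second gap is the one you flag yourself, but your proposed repair does not address it. A dependence $\sum_g c_g\Lambda_g=0$ obtained from the dimension count has coefficients $c_g\in\overline{\KK}$, and $\prod_g g^{c_g}$ is a rational function only when the $c_g$ can be taken in $\ZZ$; since the cofactors themselves have coefficients in $\overline{\KK}$, there is no reason for a one-dimensional relation space to be spanned by a rational vector (think of $\Lambda_1=1$, $\Lambda_2=\sqrt{2}$). Passing to $\KK=\overline{\KK}$ and invoking Galois stability changes nothing, because the obstruction is the irrationality of the \emph{exponents}, not the field of definition of the factors. The standard way out --- and, given the title of \cite{Che12}, presumably the intended one --- is Jouanolou's two-relation argument: one relation among the cofactors is already supplied by the existing first integral (any two full members have equal cofactor sums), so a second, independent relation produces two closed logarithmic $1$-forms annihilated by the vector field; these must be proportional over $\overline{\KK}(x,y)$, the ratio is a rational first integral and hence lies in $\overline{\KK}(F)$ by Theorem~\ref{thm:RFI-struct}, and a residue computation along the discarded factors then yields the contradiction. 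Without this (or an equivalent) mechanism your linear-independence claim is unproved, and, combined with the multiplicity issue above, it is false as stated.
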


As a consequence of Theorem~\ref{bornespectre} and
Corollary~\ref{RFI-Darboux}, if $P/Q$ is a reduced non-composite rational first
integral of \eqref{eq-sys}, then for all but $\mathcal{B}(d)+1$ constants $\sigma
\in \overline{\KK}$, the polynomial $P-\sigma\,Q$ has degree~$N$ and is an
irreducible Darboux polynomial for $\DD$ with the same cofactor as $P$
and~$Q$. This means that if \eqref{eq-sys} has a rational first integral, there exist an infinite number of irreducible Darboux polynomials which all have the same
degree (and the same cofactor).\\

The following lemma will be useful in Section~\ref{section-proba_det_algo} for the study of our probabilistic and deterministic algorithms. 
\begin{Lem} \label{lem:remarkable_values}
If $P/Q \in \KK(x,y)$ is a reduced non-composite rational function of degree at most~$N$, then the number of values of $c\in\overline\KK$ for which $(Q(0,c): P(0,c))$ belongs to $\sigma(P,Q)$ is bounded by $N \, (\mathcal{B}(d)+1)$.
\end{Lem}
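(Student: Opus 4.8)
### Proof Strategy

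The plan is to bound the number of ``bad'' values of $c$ by counting, for each fixed element of the spectrum, how many $c$'s can produce that element, and then multiplying by the size of the spectrum. The condition $(Q(0,c):P(0,c)) \in \sigma(P,Q)$ asserts that a \emph{specific} projective point — obtained by specializing $x=0$ and $y=c$ in the homogeneous pair $(Q,P)$ — lands inside the spectrum. Since $P/Q$ is reduced and non-composite of degree at most $N$, Theorem~\ref{bornespectre} guarantees that $|\sigma(P,Q)| \leq \mathcal{B}(d)+1$ is finite. So the natural decomposition is: count the values of $c$ realizing each of the (at most $\mathcal{B}(d)+1$) fixed spectral points $(\lambda_j:\mu_j)$, and sum.

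The key computation is the following. Fix a spectral point $(\lambda:\mu) \in \sigma(P,Q) \subset \PP^1_{\overline\KK}$. I want to count the $c \in \overline\KK$ for which $(Q(0,c):P(0,c)) = (\lambda:\mu)$ as points of $\PP^1$. First I would restrict the polynomials to the line $x=0$, setting $p(y) := P(0,y)$ and $q(y) := Q(0,y)$, both univariate polynomials in $\overline\KK[y]$ of degree at most $N$. The projective equality $(q(c):p(c)) = (\lambda:\mu)$ is equivalent to the vanishing of the cross-product $\mu\, q(y) - \lambda\, p(y)$ at $y = c$ (this single bilinear relation captures equality in $\PP^1$, covering both the affine case and the points at infinity where one coordinate vanishes). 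The polynomial $\mu\, q(y) - \lambda\, p(y)$ is, up to the specialization $x=0$, exactly the restriction of the pencil element $\mu\, Q - \lambda\, P$, whose total degree is at most $N$; hence its restriction to $x=0$ is a univariate polynomial of degree at most $N$ in $y$. The only case requiring care is whether this polynomial could be identically zero — which would make the count infinite for that spectral point. But $\mu\, Q - \lambda\, P$ is a nonzero bivariate polynomial (as $P,Q$ are coprime and $(\lambda:\mu)$ fixed), and its restriction to the line $x=0$ vanishes identically only if $x$ divides $\mu\, Q - \lambda\, P$; one must argue this cannot happen for the relevant spectral points, or absorb it, so that $\mu\, q - \lambda\, p$ is a genuine nonzero univariate polynomial and thus has at most $N$ roots.

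Assembling the pieces: for each of the at most $\mathcal{B}(d)+1$ fixed spectral points, there are at most $N$ values of $c$ realizing it, so by summing we obtain the bound $N\,(\mathcal{B}(d)+1)$ on the total number of bad values of $c$, as claimed.

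The main obstacle I anticipate is handling the degenerate case where the univariate restriction $\mu\, q(y) - \lambda\, p(y)$ vanishes identically, i.e.\ where the line $x=0$ is a component of the curve $\mu\, Q = \lambda\, P$. One needs to rule this out (or circumvent it) for the spectral points under consideration — presumably by invoking the non-compositeness of $P/Q$, which forces the generic pencil element to be irreducible of full degree $N$, so that the line $x=0$ cannot be a common component for infinitely many, or for the relevant finitely many, spectral members. A clean alternative is to note that if $x$ divided $\mu\,Q-\lambda\,P$ for some $(\lambda:\mu)$, then by coprimality of $P,Q$ this can occur for at most finitely many ratios, and such behavior can be excluded from the spectral points being counted; making this rigorous is the delicate point. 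Everything else is a routine root count.
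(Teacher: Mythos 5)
Your counting argument is exactly the paper's: enumerate the at most $\mathcal{B}(d)+1$ points of $\sigma(P,Q)$ (Theorem~\ref{bornespectre}) and, for each fixed $(\lambda:\mu)$, bound the fibre $\{c \,:\, (Q(0,c):P(0,c))=(\lambda:\mu)\}$ by the number of roots of the univariate polynomial $\mu\,Q(0,y)-\lambda\,P(0,y)$ of degree at most $N$. The paper's proof is a three-line version of this and does not even mention the degenerate case you isolate.

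However, the ``delicate point'' you flag --- what happens if $\mu\,Q(0,y)-\lambda\,P(0,y)\equiv 0$ for some spectral point --- is a genuine issue, and neither of the escape routes you sketch closes it. Non-compositeness does not help: take $P=x$ and $Q=y^2+x+1$. Then $P/Q$ is reduced and non-composite of degree $N=2$ (its spectrum is the two-point set $\{(1:0),(1:1)\}$), yet $(Q(0,c):P(0,c))=(1:0)\in\sigma(P,Q)$ for every $c$ with $c^2+1\neq 0$, so the asserted finite bound fails outright. Coprimality of $P$ and $Q$ only guarantees that $x$ divides $\mu\,Q-\lambda\,P$ for at most \emph{one} ratio $(\lambda:\mu)$, and a single bad ratio already makes the fibre infinite; nor can such a spectral point be ``excluded from the count,'' since it is precisely one of the points being counted. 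What actually rescues the statement in this paper is the context in which it is applied: $P/Q$ is a reduced first integral of \eqref{eq-sys} with $A(0,y)\not\equiv 0$, and Lemma~\ref{sigma_non_constant} then forces $P(0,y)/Q(0,y)\in\KK(y)\setminus\KK$, so that $\mu\,Q(0,y)-\lambda\,P(0,y)$ is a nonzero univariate polynomial for \emph{every} $(\lambda:\mu)\in\PP^1_{\overline{\KK}}$ and your root count goes through. With that hypothesis made explicit, your proof is complete (and more careful than the paper's own); without it, the step ``at most $N$ roots per spectral point'' is exactly where both your argument and the printed proof break down.
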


\begin{proof} 
Let $c\in\overline\KK$ be such that $(Q(0,c ):P(0,c)) \in \sigma(P,Q)$. By Theorem~\ref{bornespectre}, $\sigma(P,Q)$ contains at most $\mathcal{B}(d)+1$ elements. Now, for each $(\lambda:\mu) \in \sigma(P,Q)$, as $P$ and $Q$ are of degree at most $N$, there exist at most $N$ values of $c$ such that $\lambda \, Q(0,c)-\mu \, P(0,c)=0$. This ends the proof.
\end{proof}

\subsection{Existing algorithms for computing rational first integrals  and Darboux polynomials of bounded degree} \label{review}
There is a vast literature regarding the computation of rational or elementary first integrals (see for example \cite{FG,Man,Ch,PS,ScGuRa90,Si,LibZha,DuaDuaMotSke,Poi}) and Darboux polynomials (see for example \cite{CouSch09,CouSch06,ChaGiaGra,Wei95,Da}). Note that, among these articles, very few restrict to the specific question of rational first integrals. Surveys on computing first integrals (not restricted to planar systems) can be found for example in \cite{Go01,Sc93,DLA} and \cite{Ol93} for symmetry methods which we do not address here.

Given a degree bound~$N$, the \emph{naive\/} approach to solve $({\mathcal
P}_N)$ consists in using the method of undetermined coefficients. This leads
to a system of polynomial (quadratic) equations in the unknown coefficients of
the rational first integral, see \cite{Ch} for a complexity estimate of this
approach.\\

Interest in Darboux polynomials has been revived by the appearance of the Prelle-Singer's method, \cite{PS, Man,DuaDuaMotSke}.
In \cite{CouSch06, CouSch09}, Coutinho and Menasch\'e Schechter give  necessary conditions for the existence of Darboux polynomials.  Other necessary conditions are contained in \cite{ChaGiaGinLib, ChaGiaGra} and also in works on inverse integrating factors \cite{ChaGiaGinLib,CFL}. The bottleneck of the Prelle-Singer's method and all of its variants is the computation \emph{by undetermined coefficients\/} of all irreducible Darboux polynomials of bounded degree, which leads again to solving a polynomial system. This yields an exponential complexity algorithm, see \cite{Ch}. \\

In \cite{Ch}, Ch\`eze shows that if the
derivation~$\DD$ 
admits only finitely many irreducible
Darboux polynomials of degree at most $N$, then it is possible to compute all of them 
by using the so-called \emph{ecstatic curve} introduced in \cite{Per01} within a number of \emph{binary
operations} that is \emph{polynomial\/} in the bound~$N$,  in the degree
$d=\max(\deg(A),\deg(B))$ of $\DD$ and in the logarithm of the height of $A$ and $B$.  A nontrivial modification of this algorithm provides a polynomial-time method to solve $({\mathcal P}_N)$, see again \cite{Ch}. To our knowledge, this is the first algorithm
solving~$({\mathcal P}_N)$ in polynomial-time. Unfortunately, the exponent is
quite large, making the algorithm unpractical even for moderate values of~$N$.
This drawback is due to the fact that algorithm \textsf{Rat-First-Int}
in~\cite{Ch} needs to compute the irreducible factors of a bivariate
polynomial of degree $\approx dN^4$, and the best known algorithms for solving
this subtask have arithmetic complexity, e.g., $\bigOsoft(d^{\omega +1} N^{4\omega
+4})$, see \cite{BLSSW,Lec}. \\

Last, we mention the article \cite{FG} of Ferragut and Giacomini, where the algebraicity of a generic power
series solution of the differential equation $\eqref{diff-eq-intro}$ is used to improve the
efficiency of the naive algorithm. Precisely, the system of quadratic
equations yielding the coefficients of a rational first integral is reduced to
a simpler system of (still quadratic) equations. Although this gives a good
heuristic improvement on the naive method, we show in the present article how to
turn it into a fast algorithm. Starting from the link between \eqref{eq-sys}
and $\eqref{diff-eq-intro}$, we reduce $({\mathcal P}_N)$ to solving a system of
\emph{linear} equations. Furthermore, we give tight bound on the number
of terms of power series solutions of $\eqref{diff-eq-intro}$ that are sufficient to detect the
existence of rational first integrals and to compute one of them when it exists.  
This enables us to turn
the heuristic in~\cite{FG} into an algorithm with polynomial complexity, that
is more efficient both in theory and in practice than all previous algorithms, see Section~\ref{sec_impl_exp}.

\section{Rational first integrals, differential equations and algebraic power series} \label{sec:RFI+deq}
\subsection{Algebraic power series and rational first integrals}

\begin{Def}
A formal power series $y(x) \in \KK[[x]]$ is said to be {\em algebraic} if it is algebraic over $\KK(x)$, that is, if there exists a non-zero polynomial $M \in \KK[x,y]$ such that 
$M(x,y(x))=0$. An irreducible polynomial $M \in \KK[x,y]$  satisfying $M(x,y(x))=0$ 
is called a {\em minimal polynomial of $y(x)$} in $\KK[x,y]$.
 \end{Def}
 
With the planar polynomial vector field \eqref{eq-sys}, we associate the first order non-linear differential equation: 
\begin{equation} \label{diff-eq} \tag{$\mathsf{E}$} \quad \frac{dy}{dx}=\frac{B(x,y)}{A(x,y)}
\end{equation}

We may assume without any loss
of generality that $x$ does not divide $A$, i.e., $A(0,y) \not \equiv 0$. We will explain how we can reduce to 
this situation and study the complexity of this reduction in Subsection~\ref{subsec:regular}.\\

Then, the formal version of the Cauchy-Lipschitz theorem for non-linear
(first-order) differential equations ensures that for any $c \in \KK$ such
that $A(0,c)\neq 0$, the equation \eqref{diff-eq} admits a unique power series solution
$y_{c}(x) \in \KK[[x]]$ satisfying $y_{c}(0)=c$. Note that high-order truncations of the power series $y_{c}(x)$ can be computed efficiently
using the algorithm of Brent and Kung~ \cite{BK}.\\

The following standard result is fundamental to both our method and the one of Ferragut and Giacomini in \cite{FG}.

\begin{Prop} \label{lem_alg}  
Consider the planar polynomial vector field \eqref{eq-sys} and assume that $A(0,y)\not \equiv 0$. Let $c \in \KK$ satisfy $A(0,c)\neq 0$ and $y_{c}(x)\in \KK[[x]]$ be the unique power series solution of \eqref{diff-eq} such that $y_{c}(0)=c$. 

\begin{enumerate}
\item If \eqref{eq-sys} admits a non-composite rational first integral $P/Q$, then the power series $y_{c}(x)$ is algebraic. More precisely, $y_c(x)$ is a root of the
non-zero polynomial $\lambda \, P-\mu \, Q$, where $\lambda=Q(0,c)$ and $\mu=P(0,c)$.

\item If $P/Q$ is a reduced non-composite rational first integral of \eqref{eq-sys} of degree at most $N$, then, for all but at most $N\,(\mathcal{B}(d)+1)$ values of $c\in \KK$, the polynomial $\lambda \, P-\mu \, Q$, where $\lambda=Q(0,c)$ and $\mu=P(0,c)$ is a minimal polynomial of $y_{c}(x)$.  
\end{enumerate}

\end{Prop}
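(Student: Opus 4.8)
The plan is to prove the two parts separately, the engine for both being that $P$ and $Q$ are Darboux polynomials sharing a common cofactor (Corollary~\ref{RFI-Darboux}). For Part (1), I would set $p(x):=P(x,y_c(x))$, $q(x):=Q(x,y_c(x))$, $a(x):=A(x,y_c(x))$ and $\ell(x):=\Lambda(x,y_c(x))$ in $\KK[[x]]$, where $\Lambda$ is the common cofactor. Since $a(0)=A(0,c)\neq 0$, the series $a$ is a unit. Differentiating and substituting $y_c'(x)=B(x,y_c(x))/a(x)$ together with $\DD(P)=\Lambda\,P$ gives $a(x)\,p'(x)=\ell(x)\,p(x)$, and likewise $a(x)\,q'(x)=\ell(x)\,q(x)$: both $p$ and $q$ solve the first-order linear homogeneous equation $a\,w'=\ell\,w$. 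As $a$ is a unit, this equation has a one-dimensional solution space in $\KK[[x]]$, so the series $h:=\lambda\,p-\mu\,q=q(0)\,p-p(0)\,q$, which solves the same equation and satisfies $h(0)=0$, vanishes identically. This is exactly the assertion $(\lambda\,P-\mu\,Q)(x,y_c(x))=0$, so $y_c$ is algebraic with $\lambda\,P-\mu\,Q$ among its annihilators.

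It then remains to check that $\lambda\,P-\mu\,Q$ is a non-zero polynomial. For all but finitely many $c$ the pair $(\lambda,\mu)=(Q(0,c),P(0,c))$ is non-zero, since $P$ and $Q$, being coprime, share only finitely many zeros on the line $x=0$. Granting $(\lambda,\mu)\neq(0,0)$, the vanishing $\lambda\,P-\mu\,Q=0$ is impossible: if one coefficient were zero the other identity would force $P\equiv 0$ or $Q\equiv 0$, while if both are non-zero it gives $P/Q=\mu/\lambda\in\KK$, contradicting that $P/Q$ is a first integral in $\KK(x,y)\setminus\KK$. (In the remaining degenerate case $p(0)=q(0)=0$, the same linear equation forces $p\equiv q\equiv 0$, so $y_c$ is still algebraic, annihilated by $P$ itself.)

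For Part (2), Part (1) already supplies a polynomial $\lambda\,P-\mu\,Q$ with $\lambda=Q(0,c)$, $\mu=P(0,c)$ annihilating $y_c$, so I only have to secure its irreducibility. By definition of the spectrum, $(\lambda:\mu)\notin\sigma(P,Q)$ precisely when $\lambda\,P-\mu\,Q$ is irreducible in $\overline{\KK}[x,y]$ and of degree $N$. Since $c\in\KK$ yields $\lambda,\mu\in\KK$, geometric irreducibility over $\overline{\KK}$ implies irreducibility in $\KK[x,y]$, so for every such $c$ the polynomial $\lambda\,P-\mu\,Q$ is a minimal polynomial of $y_c$. The exceptional $c$ are exactly those with $(Q(0,c):P(0,c))\in\sigma(P,Q)$, and Lemma~\ref{lem:remarkable_values} — which relies on the spectrum bound of Theorem~\ref{bornespectre}, valid because $P/Q$ is non-composite — bounds their number by $N\,(\mathcal{B}(d)+1)$, giving the claim.

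The computational heart, namely the constancy argument showing that $\lambda\,P-\mu\,Q$ annihilates $y_c$, is routine once the common-cofactor viewpoint is adopted. The delicate points I expect to spend care on are: establishing non-vanishing and, in Part (2), irreducibility of $\lambda\,P-\mu\,Q$ rather than merely its vanishing on $y_c$; translating the geometric irreducibility over $\overline{\KK}$ encoded in $\sigma(P,Q)$ into the minimal-polynomial property over $\KK$; and accounting for the degenerate values of $c$ (where $P(0,c)=Q(0,c)=0$) so that they are absorbed into the stated exceptional count.
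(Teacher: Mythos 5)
Your core computation is sound and takes a slightly different route from the paper's. You derive $a\,p' = \ell\,p$ and $a\,q' = \ell\,q$ from the common-cofactor identities $\DD(P)=\Lambda P$, $\DD(Q)=\Lambda Q$ and conclude via the one-dimensionality of the solution space of a first-order linear ODE in $\KK[[x]]$; the paper instead divides $\DD(F)(x,y_c(x))=0$ by the unit $A(x,y_c(x))$ and recognizes $\frac{d}{dx}F(x,y_c(x))=0$ directly, so that $F(x,y_c(x))=F(0,c)$ and hence $Q(0,c)P-P(0,c)Q$ vanishes along $y_c$. The two arguments are equivalent in substance, and your version has the small advantage of not needing $Q(x,y_c(x))$ to be a unit before evaluating $F$ along the solution. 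Part (2) is handled the same way in both proofs, by combining the annihilator from part (1) with Lemma~\ref{lem:remarkable_values} and the observation that irreducibility over $\overline{\KK}$ implies irreducibility over $\KK$.

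The one genuine gap is the treatment of the case $P(0,c)=Q(0,c)=0$. Part (1) is asserted for \emph{every} $c$ with $A(0,c)\neq 0$ and claims that $\lambda\,P-\mu\,Q$ is a \emph{non-zero} polynomial; you only establish $(\lambda,\mu)\neq(0,0)$ for all but finitely many $c$ and then park the degenerate case in a parenthesis, where the stated conclusion actually fails (the polynomial $\lambda\,P-\mu\,Q$ would be identically zero, even though $y_c$ is still algebraic). The paper closes this by showing the degenerate case cannot occur: since $A(0,c)\neq 0$, the point $(0,c)$ is non-singular for $\DD$, and $P$ and $Q$ (or their irreducible factors vanishing at $(0,c)$, which are again Darboux polynomials by Lemma~\ref{darboux-semi-group}) are Darboux polynomials vanishing there, so Lemma~\ref{lem:uniq} --- uniqueness of the irreducible invariant curve through a non-singular point --- forces $P$ and $Q$ to share a factor, contradicting reducedness. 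Adding this short argument upgrades your ``all but finitely many $c$'' to ``all $c$ with $A(0,c)\neq 0$'' and completes part (1); it also removes any worry that degenerate values of $c$, for which $(Q(0,c):P(0,c))$ is not even a point of $\PP^1_{\overline{\KK}}$, might escape the exceptional count in part (2).
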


\begin{proof}
Let $F = P/Q$ be a non-composite rational first integral of \eqref{eq-sys}. Since  the spectrum $\sigma(P,Q)$ is finite,  we can suppose that $P$ and $Q$ are irreducible and coprime. Let us first show that $P(0,c)\neq 0$ or $Q(0,c) \neq 0$. As $(0,c)$ is a non-singular point of $\DD$, if $P(0,c)=Q(0,c)=0$, then Lemma \ref{lem:uniq} implies that $P=\alpha\,Q$ with $\alpha \in \KK$. As $P$ and $Q$ are coprime, we get a contradiction so that necessarily $P(0,c)\neq 0$ or $Q(0,c) \neq 0$.\\ 
We thus suppose $Q(0,c)\neq 0$ else we consider the rational first integral $Q/P$ instead of $P/Q$. As $Q(0,c)\neq 0$, the power series $Q(x,y_{c}(x))$ is invertible so that $\DD(F)=0$ yields $\DD(F)(x,y_{c}(x))=0$. The latter equality can be written 
\begin{eqnarray*}
A(x,y_{c}(x)) \, \dfrac{\partial F}{\partial x}(x,y_{c}(x))+B(x,y_{c}(x))\, \dfrac{\partial F}{\partial y} (x,y_{c}(x)) & = & 0.
\end{eqnarray*}
Dividing the equality by the invertible power series $A(x,y_{c}(x))$ and using the fact that $y_{c}(x)$ is a solution of \eqref{diff-eq}, we obtain
\begin{eqnarray*}
\dfrac{\partial F}{\partial x}(x,y_{c}(x))+\dfrac{B(x,y_{c}(x))}{A(x,y_{c}(x))} \, \dfrac{\partial F}{\partial y} (x,y_{c}(x)) & = &0,\\
\dfrac{\partial F}{\partial x}(x,y_{c}(x))+\dfrac{d y_{c}(x)}{dx} \, \dfrac{\partial F}{\partial y} (x,y_{c}(x)) & = & 0, \\
\dfrac{d \big(F (x,y_{c}(x))\big)}{dx} & = & 0.
\end{eqnarray*}
It follows that $F(x,y_{c}(x))=\sigma_c$, for some $\sigma_c \in \KK$. Consequently, we have 
$$ P(x,y_{c}(x)) - \sigma_c \, Q(x,y_{c}(x))=0,$$ with $\sigma_c =P(0,c)/Q(0,c)$ which proves~(1). \\
If, in addition, $F=P/Q$ is reduced non-composite of degree at most $N$, then (2) follows directly from Lemma~\ref{lem:remarkable_values}.
\end{proof}

Proposition~\ref{lem_alg} shows in particular that if \eqref{eq-sys} has a rational first
integral $P/Q$, then all power series solutions of \eqref{diff-eq} are algebraic. The next proposition which is well known (see \cite{FG,Wei95}) asserts that the converse is also true.

\begin{Prop} \label{RFI-Darboux-Diffeq}
Let \eqref{eq-sys} be a planar polynomial vector field, $\DD$ the associated derivation, and \eqref{diff-eq} be the associated differential equation.
\begin{enumerate}
  \item If $M \in \KK[x,y]$ is an irreducible Darboux polynomial for ~$\DD$, then all roots $y(x)\in \overline{\KK(x)}$ of $M$ such that $A(0,y(0)) \neq 0$ are power series solutions of~\eqref{diff-eq}.
  \item The minimal polynomial of an algebraic solution $y(x)\in \KK[[x]]$ of \eqref{diff-eq} such that $A(0,y(0)) \neq 0$ is a Darboux polynomial for $\DD$.
  \item \eqref{eq-sys} admits a rational first integral if and only if all the power series solutions of \eqref{diff-eq} are algebraic.
\end{enumerate}
\end{Prop}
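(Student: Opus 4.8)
The goal is to prove Proposition~\ref{RFI-Darboux-Diffeq}, which has three parts linking Darboux polynomials and algebraic solutions of \eqref{diff-eq}. The plan is to treat the three statements in sequence, with parts~(1) and~(2) providing the local dictionary between Darboux polynomials and power series solutions, and part~(3) combining them with the Darboux-Jouanolou theorem to obtain the global equivalence.

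\textbf{Part (1).} Let $M \in \KK[x,y]$ be an irreducible Darboux polynomial with cofactor $\Lambda$, so $\DD(M) = \Lambda\,M$, and let $y(x) \in \overline{\KK(x)}$ be a root of $M$ with $A(0,y(0)) \neq 0$. The plan is to differentiate the identity $M(x,y(x)) = 0$ with respect to $x$, obtaining
\[
\frac{\partial M}{\partial x}(x,y(x)) + \frac{\partial M}{\partial y}(x,y(x)) \, y'(x) = 0.
\]
Separately, evaluating the Darboux relation along the curve gives $\DD(M)(x,y(x)) = A\,\frac{\partial M}{\partial x} + B\,\frac{\partial M}{\partial y} = \Lambda(x,y(x))\,M(x,y(x)) = 0$. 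Since $A(0,y(0)) \neq 0$, the series $A(x,y(x))$ is invertible, so I can solve for $\frac{\partial M}{\partial x}/\frac{\partial M}{\partial y}$ (after checking $\frac{\partial M}{\partial y}$ does not vanish identically along the curve, which follows from $M$ irreducible and $A \neq 0$) and compare the two relations to conclude $y'(x) = B(x,y(x))/A(x,y(x))$, i.e. $y(x)$ solves \eqref{diff-eq}.

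\textbf{Part (2).} Conversely, let $y(x) \in \KK[[x]]$ be an algebraic solution of \eqref{diff-eq} with $A(0,y(0)) \neq 0$ and minimal polynomial $M$. The plan is to show $M$ divides $\DD(M)$. Consider the polynomial $\DD(M) = A\,\frac{\partial M}{\partial x} + B\,\frac{\partial M}{\partial y}$. Substituting $y(x)$ and using $y'(x)A(x,y(x)) = B(x,y(x))$ together with $\frac{d}{dx}M(x,y(x)) = 0$, one finds $\DD(M)(x,y(x)) = 0$. Hence $M$, being the minimal polynomial of $y(x)$, divides $\DD(M)$ in $\KK[x,y]$ (the minimal polynomial generates the ideal of all polynomials vanishing at $y(x)$, after clearing the factor $A$ which is coprime to $M$ near $x=0$). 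Thus $M$ is a Darboux polynomial.

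\textbf{Part (3).} For the global equivalence, one direction is already in Proposition~\ref{lem_alg}(1): a rational first integral forces every $y_c(x)$ to be algebraic. For the converse, the plan is to assume all power series solutions of \eqref{diff-eq} are algebraic and produce a rational first integral. By part~(2), each minimal polynomial is a Darboux polynomial; the main obstacle is guaranteeing one obtains \emph{infinitely many} (or at least $\mathcal{B}(d)+2$) distinct irreducible Darboux polynomials so that Theorem~\ref{thm:darboux} applies. I expect to handle this by varying the initial condition $c$: for infinitely many $c$ the solutions $y_c(x)$ have distinct values $y_c(0)=c$, hence distinct minimal polynomials (distinct irreducible Darboux polynomials), and once there are at least $d(d+1)/2+2$ of them, Darboux-Jouanolou yields a rational first integral. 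Controlling that distinct initial conditions give genuinely distinct irreducible Darboux polynomials, rather than possibly coinciding ones, is the delicate point, and I would settle it using Lemma~\ref{lem:uniq} on uniqueness of the invariant curve through a non-singular point.
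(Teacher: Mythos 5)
Your proposal is correct and follows essentially the same route as the paper: differentiating $M(x,y(x))=0$ and comparing with the Darboux relation for (1), the chain rule together with minimality of $M$ for (2), and Darboux--Jouanolou applied to the infinitely many resulting irreducible Darboux polynomials (plus Proposition~\ref{lem_alg} for the forward direction) for (3). The one place where you are more careful than the paper is the distinctness of the Darboux polynomials arising from different initial conditions in (3); a direct counting argument (an irreducible $M$ has at most $\deg_y M$ power series roots, so only finitely many of the $y_c$ can share a given minimal polynomial) settles it without needing Lemma~\ref{lem:uniq}.
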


\begin{proof}
  Assume first that $M \in \KK[x,y]$ is an irreducible Darboux polynomial for $\DD$, and that $y(x) \in \overline{\KK(x)}$ is a root of $M$, i.e., $M(x,y(x))=0$. Since $M$ divides $\DD(M)$, it follows that $y(x)$ is also a root of $\DD(M)$. This implies 
\[\dfrac{\partial M}{\partial x}(x,y(x))= - \dfrac{B(x,y(x))}{A(x,y(x))}\,\dfrac{\partial M}{\partial y} (x,y(x)).\]
On the other hand, $M(x,y(x))=0$ implies by differentiation with respect to $x$ that 
\[\dfrac{\partial M}{\partial x}(x,y(x))= - \dfrac{d y(x)}{dx}\,\dfrac{\partial M}{\partial y} (x,y(x)).\]
These two equalities provides 
\[\dfrac{\partial M}{\partial y} (x,y(x)) \,  \left( \dfrac{d y(x)}{dx} - \dfrac{B(x,y(x))}{A(x,y(x))} \right) = 0.\]
As $M$ is irreducible, $\dfrac{\partial M}{\partial y} (x,y(x)) \neq 0$ so that $y(x)$ is a solution of \eqref{diff-eq}. As mentioned before $A(0,y(0)) \neq 0$ implies $y(x) \in \KK[[x]]$ (Cauchy-Lipschitz theorem) which proves~(1).

Assume now that $y(x)\in \KK[[x]]$ is an algebraic solution of \eqref{diff-eq} such that \\
$A(0,y(0)) \neq 0$, and let $M$ be its minimal polynomial. Then
\[ 0 = \dfrac{d \big(M (x,y(x))\big)}{dx}
=\dfrac{\partial M}{\partial x}(x,y(x))+\dfrac{dy(x)}{dx}\,\dfrac{\partial M}{\partial y} (x,y(x)),
\] 
and since $y(x)$ is a solution of \eqref{diff-eq}, the latter equality implies that $y(x)$ is also  a root of $\DD(M)$. Now as $M$ is the minimal polynomial of $y(x)$, it follows that $M$ divides $\DD(M)$, i.e., $M$ is a Darboux polynomial for $\DD$ and we have proved~(2).

Let us now prove (3). If all the power series solutions of \eqref{diff-eq} are algebraic, then by (2), $\DD$ admits infinitely many Darboux polynomials. Then  Theorem~\ref{thm:darboux} shows that \eqref{eq-sys} admits a rational first integral. The proof ends here since the other implication of (3) has been proved in Proposition~\ref{lem_alg}. 
\end{proof}

 \subsection{Algebraic power series solutions of \eqref{diff-eq} }

We have seen in Proposition~\ref{lem_alg} that if $P/Q$ is a reduced non-composite rational first integral of \eqref{eq-sys}, then  a minimal polynomial of a power series solution of \eqref{diff-eq} is generically of the form $\lambda \, P- \mu \, Q$ for some constants $\lambda$ and $\mu$. Thus, if we are able to compute such a minimal polynomial, we can deduce the rational first integral $P/Q$. In practice, we do not compute a power series $y_{c}(x) \in \KK[[x]]$ solution of \eqref{diff-eq} but only a truncation of $y_{c}(x)$, i.e., a finite number of terms of its expansion on the monomial basis. Given a degree bound $N$ for the rational first integral that we are searching for, the following lemma shows that computing $y_{c}(x) \mod x^{N^2+1}$, i.e., the first $N^2+1$ terms of its expansion, is enough for our purposes. 

Such an analysis of the needed precision for the power series solutions of \eqref{diff-eq} that we compute is not included in \cite{FG}. Note that this kind of strategy was already used in a polynomial factorization setting (see, e.g., \cite{Kaltof}) and in a differential equations setting (see \cite{ArCaFeGa05,BCCLSS07}). The next lemma is a small improvement of \cite[Lemma 2.4]{ArCaFeGa05}. 

\begin{Lem}\label{algebraic-solution} 
Let $\LL$ be a field of characteristic $0$ such that $\KK \subseteq \LL$. Let $\hat{y}(x) \in \LL[[x]]$ denote an algebraic power series whose minimal polynomial $M \in \LL[x,y]$ has degree at most $N$. If $\tilde{M} \in \LL[x,y]$ is a polynomial of degree at most $N$ satisfying
	$$(\star): \; \tilde{M}(x,\haty(x)) \equiv 0 \mod x^{N^2+1},$$
	then $\tilde{M}(x,\haty(x))=0$. Moreover, if $\tilde{M}$ has minimal degree in $y$ among polynomials satisfying $(\star)$, then 
	$\tilde{M}=f \, M$  for some $f \in \LL[x]$.
\end{Lem}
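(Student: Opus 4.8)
The plan is to prove the stronger assertion that $\tilde{M}(x,\haty(x))=0$ \emph{exactly}, and then deduce the factorization. Throughout, set $g(x):=\tilde{M}(x,\haty(x))\in\LL[[x]]$; hypothesis $(\star)$ says precisely that $\mathrm{ord}_x g\ge N^2+1$, and the goal is to conclude $g=0$. Since $\haty$ is a genuine power series root of its minimal polynomial $M$, we have $\deg_y M\ge 1$ (otherwise $M=M(x)\in\LL[x]\setminus\{0\}$ and $M(x,\haty)=M(x)=0$ would force $M=0$); because $M$ is irreducible in $\LL[x,y]$ with positive $y$-degree it is primitive over $\LL[x]$, hence irreducible in $\LL(x)[y]$ by Gauss's lemma. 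The degenerate case $\deg_y\tilde{M}=0$ (i.e.\ $\tilde{M}\in\LL[x]$) is immediate: then $\mathrm{ord}_x g=\mathrm{ord}_x\tilde{M}\le\deg_x\tilde{M}\le N\le N^2$, so $(\star)$ forces $\tilde{M}=0$. We may thus assume $\deg_y\tilde{M}\ge 1$.

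The main tool is the resultant $R(x):=\mathrm{Res}_y(M,\tilde{M})\in\LL[x]$. If $R=0$, then $M$ and $\tilde{M}$ share a common factor of positive $y$-degree in $\LL(x)[y]$; as $M$ is irreducible there, this means $M\mid\tilde{M}$, and clearing denominators yields $d(x)\,\tilde{M}=M\,\tilde{h}$ with $d\in\LL[x]\setminus\{0\}$ and $\tilde{h}\in\LL[x,y]$, whence $d(x)\,g(x)=M(x,\haty(x))\,\tilde{h}(x,\haty(x))=0$ and so $g=0$. If $R\ne 0$, I invoke the classical identity $U\,M+V\,\tilde{M}=R$ with $U,V\in\LL[x,y]$ (the resultant lies in the ideal generated by $M,\tilde{M}$); evaluating at $y=\haty(x)$ and using $M(x,\haty(x))=0$ gives $V(x,\haty(x))\,g(x)=R(x)$. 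Since $V(x,\haty(x))\in\LL[[x]]$ has nonnegative order, this forces $\mathrm{ord}_x g\le\mathrm{ord}_x R\le\deg_x R$.

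The technical heart, and the step I expect to be the main obstacle, is the sharp bound $\deg_x R\le N^2$. The crude Sylvester estimate only gives $2N^2$, so one must exploit the \emph{total}-degree hypotheses $\deg M,\deg\tilde{M}\le N$ rather than the separate degrees in $x$ and $y$. Writing $M=\sum_j a_j(x)\,y^j$ and $\tilde{M}=\sum_k b_k(x)\,y^k$, the total-degree bounds translate into the coefficient bounds $\deg_x a_j\le \deg M-j$ and $\deg_x b_k\le\deg\tilde{M}-k$. Feeding these into the Leibniz expansion of the Sylvester determinant, each product of entries has $x$-degree at most $n\,\deg M+m\,\deg\tilde{M}-mn$, where $m=\deg_y M$ and $n=\deg_y\tilde{M}$ (the permutation-dependent contributions sum, over the columns, to a constant); since this equals $\deg M\cdot\deg\tilde{M}-(\deg M-m)(\deg\tilde{M}-n)\le \deg M\cdot\deg\tilde{M}\le N^2$, we obtain $\deg_x R\le N^2$. (Alternatively one may invoke B\'ezout's theorem: the intersection multiplicity of the curves $M=0$ and $\tilde{M}=0$ along the branch $x\mapsto(x,\haty(x))$ equals $\mathrm{ord}_x g$ and is bounded by $\deg M\cdot\deg\tilde{M}\le N^2$.) Combining with the previous paragraph, $g\ne 0$ would give $\mathrm{ord}_x g\le N^2$, contradicting $(\star)$; hence $g=0$.

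For the final assertion, suppose $\tilde{M}\ne 0$ has minimal $y$-degree among the polynomials of degree at most $N$ satisfying $(\star)$. As $M$ itself satisfies $(\star)$ (indeed $M(x,\haty(x))=0$), minimality gives $\deg_y\tilde{M}\le\deg_y M$; on the other hand we have just shown $\tilde{M}(x,\haty(x))=0$, so $M\mid\tilde{M}$ in $\LL(x)[y]$ and thus $\deg_y\tilde{M}\ge\deg_y M$. Hence $\tilde{M}=h\,M$ with $h\in\LL(x)$ of $y$-degree $0$. Writing $h=p/q$ in lowest terms and comparing $\LL[x]$-contents in $q\,\tilde{M}=p\,M$, primitivity of $M$ forces $q$ to be a constant, so $h=f\in\LL[x]$ and $\tilde{M}=f\,M$, as claimed.
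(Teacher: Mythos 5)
Your proof is correct and follows essentially the same route as the paper's: the resultant ${\rm Res}_y(M,\tilde{M})$, the identity $U\,M+V\,\tilde{M}=R$ evaluated at $y=\haty(x)$, the B\'ezout-type bound $\deg_x R\le \deg M\cdot\deg\tilde{M}\le N^2$, and the irreducibility of $M$ to conclude $M\mid\tilde{M}$. You supply more detail than the paper at points it leaves implicit (the Sylvester-determinant justification of the degree bound under \emph{total}-degree hypotheses, the degenerate case $\deg_y\tilde{M}=0$, and the content argument in the last step), but the underlying argument is the same.
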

\begin{proof}
By definition, $M$ satisfies $(\star)$ so there exists $\tilde{M} \in \LL[x,y]$ of degree at most~$N$ satisfying $(\star)$. Let $\tilde{M}$ be such a solution of $(\star)$  and consider 
$$\mathcal{R}(x):={\rm Res}_y(M(x,y), \tilde{M}(x,y)),$$ the resultant of $M$ and $\tilde{M}$ with respect to $y$.
As there exist polynomials $S$ and $T$ in $\KK[x,y]$ such that $S\,M+T\,\tilde{M}= \mathcal{R}$, 
Relation $(\star)$ yields $\mathcal{R}(x) \equiv 0 \mod x^{N^2+1}$. By B\'ezout's theorem, we have $\deg(\mathcal{R})\leq \deg(M)\,\deg(\tilde{M}) \leq N^2$, thus $\mathcal{R} = 0$. This implies that $M$ and $\tilde{M}$ have a non-trivial common factor. Now, as $M$ is irreducible, necessarily $M$ divides $\tilde{M}$ and thus $\tilde{M}(x,\haty(x))=0$. Finally, if $\tilde{M}$ is supposed to have minimal degree in $y$ among polynomials satisfying $(\star)$, we have necessarily $\tilde{M}=f\,M$ for some $ f\in \LL[x]$ which ends the proof.
\end{proof} 

Note that in $(\star)$, the power series $\haty(x)$ can be replaced by its truncation $\haty(x) \mod x^{N^2+1}$.

\begin{Rem}\label{rmk:basis}
For a given power series $\haty(x)$, computing all the polynomials $\tilde{M}$  of degree at most $N$ satisfying $(\star)$ can be done by taking an ansatz for $\tilde{M}$ and performing linear algebra calculations (e.g., solving a system of linear equations). Consequently, computing ``all" solutions of $(\star)$ means computing ``a basis" of solutions of the linear algebra problem associated with $(\star)$. An efficient method to address this problem and to get, via a row-echelon form, a solution of $(\star)$ with minimal degree in $y$ is given in Subsection~\ref{complexity-section-min-poly} where a complexity analysis is provided.
\end{Rem}

In the sequel, we say that $\tilde{M}$ is a {\em minimal solution of $(\star)$} if it is a solution of $(\star)$ with minimal degree in $y$.

\subsection{A first  algorithm for computing rational first integrals} \label{subsect_genericalgo}

We now propose a first algorithm, based on linear algebra, for solving $({\mathcal P}_N)$. More efficient algorithms, based on this one, are given in Section \ref{section-proba_det_algo}. The strategy of this algorithm is then used in Section~\ref{sec:Darboux} for computing Darboux polynomials.\\

\noindent \underline{Algorithm  \textsf{GenericRationalFirstIntegral}}\\

\noindent \texttt{Input:}$A,B\in \KK[x,y]$ s.t. $A(0,y)\not \equiv 0$ and a bound $N \in \NN$.\\
\texttt{Output:} A  non-composite rational first integral of \eqref{eq-sys} of degree at most~$N$, or ``None".\\

\begin{enumerate}
\item \label{gen_step1} For an indeterminate $c$, compute the polynomial $y_{c}\in \KK(c)[x]$ of degree at most $ (N^2 +1)$ s.t. $y_{c}(0)=c$ and 
	$\frac{dy_{c}}{dx}\equiv \frac{B(x,y_{c})}{A(x,y_{c})} \mod x^{N^2+1}$.\\
\item \label{gen_step2} Compute all\footnote{i.e., a basis over $\KK(c)$,  see Remark \ref{rmk:basis}.} non-trivial polynomials $\tilde{M} \in \KK(c)[x,y]$ of degree $\leq N$ s.t.
 $$(\star): \tilde{M}(c,x,y_{c}(x)) \equiv 0 \mod  x^{N^2+1}.$$
If no such $\tilde{M}$ exists, then Return ``None". Else, among the solutions of  $(\star)$, pick a \emph{minimal} solution $\overline{M} \in \KK[c][x,y]$.\\
\item \label{gen_step3} Let $M$ denote the primitive part of $\overline{M}$ relatively to $y$.
\\   
Set $P(x,y):=M(0,x,y)$. \\Pick any $c_1 \in \KK$ s.t. $\frac{M(c_1,x,y)}{P(x,y)} \not\in \KK$ and set $Q(x,y):=M(c_1,x,y)$.\\
\item \label{gen_step4} If $\CD(P/Q)=0$, then Return $P/Q$. Else Return ``None''.\\
\end{enumerate}

In the above algorithm, the output ``None" means that there is no rational first integral of degree at most $N$ but it may exist a rational first integral of degree strictly greater than $N$.
\begin{Thm}\label{theorem-generic}
Algorithm  \textsf{GenericRationalFirstIntegral} is correct: either it finds a non-composite rational first integral of \eqref{eq-sys} of degree at most $N$ if it exists, or it proves that no such rational first integral exists.
\end{Thm}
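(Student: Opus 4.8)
The plan is to verify correctness of the algorithm by tracking what invariant each step guarantees, and then combining these invariants to conclude that the algorithm outputs a valid non-composite rational first integral exactly when one of degree at most $N$ exists. The whole argument rests on Proposition~\ref{lem_alg} and Lemma~\ref{algebraic-solution}, so the core of the work is to check that the computations with the generic initial condition~$c$ faithfully capture the family of minimal polynomials $\lambda\,P-\mu\,Q$ predicted by Proposition~\ref{lem_alg}.

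First I would treat the case where a non-composite reduced rational first integral $F=P/Q$ of degree at most $N$ exists. By Proposition~\ref{lem_alg}(2), for all but finitely many $c$ the polynomial $Q(0,c)\,P-P(0,c)\,Q$ is a minimal polynomial of the solution $y_c(x)$; since we work with an \emph{indeterminate}~$c$, the polynomial $\overline{M}(c,x,y) := Q(0,c)\,P-P(0,c)\,Q \in \KK[c][x,y]$ has degree at most $N$ in $(x,y)$ and satisfies $(\star)$ over $\KK(c)$. Hence Step~\eqref{gen_step2} does not return ``None'', and by Lemma~\ref{algebraic-solution} any minimal solution $\overline{M}$ of $(\star)$ is, after taking the primitive part in Step~\eqref{gen_step3}, a scalar multiple (over $\KK(c)$) of the generic minimal polynomial $M(c,x,y)$. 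Specializing $c\mapsto 0$ and $c\mapsto c_1$ then produces two members $P(x,y)=M(0,x,y)$ and $Q(x,y)=M(c_1,x,y)$ of the pencil generated by the original $P$ and $Q$; since $\KK(F)$ is the full field of constants by Theorem~\ref{thm:RFI-struct}, $P/Q$ is again a rational first integral, and the check $\DD(P/Q)=0$ in Step~\eqref{gen_step4} succeeds.

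Second I would treat soundness: whenever the algorithm returns $P/Q$, it has explicitly verified $\DD(P/Q)=0$ in Step~\eqref{gen_step4} and $P/Q\notin\KK$ by the choice of $c_1$ in Step~\eqref{gen_step3}, so the output is genuinely a rational first integral; non-compositeness follows because $M$ is a minimal solution of $(\star)$, hence (by Lemma~\ref{algebraic-solution}) an irreducible minimal polynomial, so $P$ and $Q$ have the minimal possible degree among first integrals and are therefore non-composite by the remark following Theorem~\ref{thm:RFI-struct}. Conversely, if Step~\eqref{gen_step2} returns ``None'', then $(\star)$ has no nontrivial solution of degree $\leq N$ over $\KK(c)$; by the first part this is impossible when a first integral of degree $\leq N$ exists, so ``None'' is correct.

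The main obstacle I anticipate is justifying the passage from the generic computation over $\KK(c)$ to the two specializations $c=0$ and $c=c_1$, and in particular showing that a \emph{good} value $c_1$ (with $M(c_1,x,y)/P \notin \KK$) exists and can be found. One must argue that the generic minimal polynomial $M(c,x,y)$ genuinely depends on $c$ (otherwise the pencil would be degenerate), which is where the hypothesis that $F$ is non-composite, together with the finiteness of the spectrum $\sigma(P,Q)$ from Theorem~\ref{bornespectre}, is needed to guarantee that all but finitely many specializations remain well-defined and yield distinct members of the pencil. The delicate point is that taking the primitive part in $y$ and specializing must commute with the requirement that the specialized polynomial keep degree $N$; this is controlled precisely by avoiding the finitely many bad values of $c$ counted in Lemma~\ref{lem:remarkable_values}, so the argument reduces to a counting check that $\KK$ (being infinite, as $\mathrm{char}\,\KK=0$) contains an admissible $c_1$.
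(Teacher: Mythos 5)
Your overall architecture matches the paper's proof: Proposition~\ref{lem_alg} shows that the generic pencil element $Q(0,c)\,P - P(0,c)\,Q$ satisfies $(\star)$ over $\KK(c)$, Lemma~\ref{algebraic-solution} (applied with $\LL=\KK(c)$) identifies any minimal solution of $(\star)$ with that pencil element up to a factor in $\KK(c)[x]$, the two specializations $c=0$ and $c=c_1$ give a homography of the original first integral, and the test in Step~(\ref{gen_step4}) certifies the output. The soundness direction is handled the same way as in the paper.

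However, the point you yourself flag as ``the main obstacle'' is resolved with the wrong ingredients, and as written the argument fails there. You claim that the non-degeneracy of the generic pencil member --- equivalently, that $(Q(0,c):P(0,c))$ is a non-constant point of $\PP^1$ as $c$ varies --- follows from non-compositeness of $F=P/Q$ together with the finiteness of the spectrum. It does not: $F=(y-x)/(y+x)$ is non-composite with finite (indeed empty) spectrum, yet $F(0,y)=1$ is constant. If $F(0,y)$ were constant, then (i) the generic pencil element would be a fixed, $x$-divisible polynomial, so it need not be the minimal polynomial of $y_c$ over $\KK(c)(x)$, and Lemma~\ref{algebraic-solution} would only identify the minimal solution with some irreducible factor of it rather than with a pencil member; and (ii) the two specializations $M(0,x,y)$ and $M(c_1,x,y)$ would be proportional, so no admissible $c_1$ exists and Step~(\ref{gen_step3}) is ill-defined. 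The paper closes this with Lemma~\ref{sigma_non_constant}: if $F(0,y)\in\KK$, then $x$ divides a member of the pencil, hence $x$ is a Darboux polynomial, hence $x$ divides $A$, contradicting the standing hypothesis $A(0,y)\not\equiv 0$ --- a hypothesis your argument never invokes. Once $F(0,y)\notin\KK$ is established, Lemma~\ref{lem:remarkable_values} does give that the generic point avoids the spectrum, the pencil element is irreducible over $\KK(c)$, and the rest of your argument goes through.
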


To prove Theorem \ref{theorem-generic}, we shall need the following lemma.

\begin{Lem}\label{sigma_non_constant}
Consider the planar polynomial vector field \eqref{eq-sys} and assume that $A(0,y)\not \equiv 0$. 
If $F$ is a reduced rational first integral of \eqref{eq-sys}, then $F(0,y) \in \KK(y)\setminus\KK.$
\end{Lem}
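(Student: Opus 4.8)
The plan is to exploit the hypothesis $A(0,y)\not\equiv 0$ through the Darboux structure of $F=P/Q$, reducing everything to the single geometric fact that the line $x=0$ is not an invariant curve of $\DD$. Writing $F=P/Q$ in reduced form, Corollary~\ref{RFI-Darboux} tells us that $P$ and $Q$ are both Darboux polynomials sharing a common cofactor $\Lambda$, so that $\DD(P)=\Lambda\,P$ and $\DD(Q)=\Lambda\,Q$. I will use this repeatedly.

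The key auxiliary statement I would isolate first is: \emph{if $A(0,y)\not\equiv 0$, then no Darboux polynomial for $\DD$ is divisible by $x$}. To prove it, suppose $M$ is a Darboux polynomial with $x\mid M$, and write $M=x^k\,M_1$ with $k\geq 1$ and $x\nmid M_1$. Since $x^k$ and $M_1$ are coprime, Lemma~\ref{darboux-semi-group} forces $x^k$ to be a Darboux polynomial. But $\DD(x^k)=k\,x^{k-1}\,A$, and the divisibility $x^k\mid \DD(x^k)$ holds only if $x\mid A$ (using that $k$ is a nonzero constant in characteristic zero), that is $A(0,y)\equiv 0$, contradicting the hypothesis.

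Applying this auxiliary fact to $P$ and to $Q$ yields $P(0,y)\not\equiv 0$ and $Q(0,y)\not\equiv 0$, so that $F(0,y)=P(0,y)/Q(0,y)$ is a well-defined element of $\KK(y)$. It then remains to rule out that it is constant. Suppose, for contradiction, that $F(0,y)=\kappa$ for some $\kappa\in\KK$; then $P(0,y)=\kappa\,Q(0,y)$, i.e. $x\mid(P-\kappa\,Q)$. The polynomial $P-\kappa\,Q$ is nonzero, since $P=\kappa\,Q$ would contradict $F=P/Q\notin\KK$, and being divisible by $x$ it is non-constant; moreover $\DD(P-\kappa\,Q)=\Lambda\,(P-\kappa\,Q)$, so it is a Darboux polynomial divisible by $x$. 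This contradicts the auxiliary fact, and hence $F(0,y)\in\KK(y)\setminus\KK$.

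The only delicate points, which I expect to require care rather than genuine difficulty, are the well-definedness of $F(0,y)$ (handled by applying the auxiliary fact to $Q$) and checking that $P-\kappa\,Q$ is a Darboux polynomial in the precise sense of Definition~\ref{def:Darboux}, i.e. that it lies in $\KK[x,y]\setminus\KK$; both follow from the coprimality of $P$ and $Q$ together with the divisibility $x\mid(P-\kappa\,Q)$.
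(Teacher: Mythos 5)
Your proof is correct and follows essentially the same route as the paper's: assume $F(0,y)$ is a constant $\kappa$, observe that $x$ then divides the Darboux polynomial $P-\kappa\,Q$, and use Lemma~\ref{darboux-semi-group} to force $x\mid A$, contradicting $A(0,y)\not\equiv 0$. Your version merely adds some worthwhile care the paper leaves implicit (the well-definedness of $F(0,y)$ and the explicit cofactor computation for $P-\kappa\,Q$).
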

\begin{proof}
Let $F=P/Q$ be a reduced rational first integral of \eqref{eq-sys}. Proceeding by contradiction,  we assume $F(0,y)=c_0 \in \KK$. Then $P(0,y)-c_0\,Q(0,y)=0$ so that $x$ divides $P(x,y)-c_0\,Q(x,y)$. Now, from Corollary~\ref{RFI-Darboux}, $P(x,y)-c_0\,Q(x,y)$ is a Darboux polynomial for $\DD$ and thus, by Lemma \ref{darboux-semi-group}, $x$ is also a Darboux polynomial for $\DD$. Consequently, we get that $x$ divides $A(x,y)$ and thus $A(0,y)=0$. This is absurd so we conclude \mbox{$F(0,y) \in \KK(y)\setminus\KK$}.
\end{proof}
\begin{proof}[Proof of Theorem \ref{theorem-generic}]
Suppose first that there exists a rational first integral of \eqref{eq-sys} of degree at most $N$. Then, without loss of generality, we can consider a reduced non-composite one $P_0/Q_0$, see Lemma~\ref{lem:RFI-alg}. Let $y_{c} \in \KK(c)[[x]]$ be the power series solution of \eqref{diff-eq} satisfying $y_{c}(0)=c$. By Proposition~\ref{lem_alg}, $y_{c}$ is a root of \mbox{$\lambda \, P_0 -\mu \,Q_0$}, where  $\lambda=Q_0(0,c)$ and $\mu=P_0(0,c)$. As $P_0/Q_0$ is non-composite, it follows that $\lambda \, P_0- \mu \, Q_0$ is irreducible in $\KK(c)[x,y]$. Indeed, by Lemma~\ref{sigma_non_constant}, the constant $\mu / \lambda$ belongs to $\KK(c)\setminus \KK$ and thus, from Lemma~\ref{lem:remarkable_values}, we can find  $c_0 \in \KK$ such that $(Q_0(0,c_0):P_0(0,c_0)) \not \in  \sigma(P_0,Q_0)$. Consequently $\lambda \, P_0 - \mu \, Q_0$ is a minimal polynomial of $y_{c}$. In Step~(\ref{gen_step1}), we compute the first $N^2+1$ terms of $y_{c}$. Now, in Step~(\ref{gen_step2}), if there exists a solution $\tilde{M}  \in \KK(c)[x,y]$ of $(\star)$ of degree at most $N$, then, Lemma \ref{algebraic-solution} applied with $\LL=\KK(c)$ implies that $\overline{M}=f\,(\lambda \, P_0-\mu \,Q_0)$ with $f \in \KK(c)[x]$, where $\overline{M} \in \KK[c][x,y]$ is defined in Step (\ref{gen_step2}). Therefore, taking the primitive part of $\overline{M}$ with respect to $y$, in Step~(\ref{gen_step3}), we have $M=g\,(\lambda \, P_0-\mu \, Q_0)$ for some $g \in \KK[c]$. Now, if $P$ and $Q$ denote the polynomials defined in Step~(\ref{gen_step3}) of the algorithm, we necessarily have:
$$\dfrac{P}{Q}=\dfrac{\alpha \, P_0+\beta \, Q_0}{\delta \, P_0 + \gamma \, Q_0} \in \KK(x,y)\setminus \KK, \textrm{ where } \alpha, \beta, \delta , \gamma\in \KK.$$
As $P_0/Q_0$ is a non-composite rational first integral, we deduce that $P/Q$ is also a non-composite rational first integral. Thus, we have $\DD(P/Q)=0$ in Step~(\ref{gen_step4}) and the algorithm returns a correct output.\\
Now suppose that \eqref{eq-sys} has no rational first integral of degree at most $N$.  In Step (\ref{gen_step4}), the test $\DD(P/Q)=0$ guarantees to return a correct output. In Step~(\ref{gen_step2}), we can have an early detection of this situation. Indeed by Proposition~\ref{lem_alg}, if $(\star)$ has no non-trivial solution, then 
we deduce that \eqref{eq-sys} has no rational first integral of degree at most $N$.
\end{proof}

This algorithm fits the first part of our goal as it is entirely based on linear operations: we do not need to solve quadratic equations (see Section~\ref{complexity-section}). However, it is not yet very efficient in practice because computations are done over $\KK(c)$. For example, in the first step, a direct calculation shows that, for $n \geq 1$, the coefficient of $x^n$ in the power series solution $y_{c}$ of \eqref{diff-eq} satisfying $y_{c}(0)=c$ is generically a rational function in $c$ of degree $(2n-1)\,d$, whose denominator is generically $A(0,c)^{2\,n-1}$. 
In what follows,  we accelerate things by using only computations over $\KK$ instead of computations in $\KK(c)$.

\section{Efficient algorithms for computing rational first integrals}\label{section-proba_det_algo}
\subsection{A probabilistic algorithm} \label{subsec:proba}
In this section, we present an efficient probabilistic algorithm of Las Vegas type for solving $({\mathcal P}_N)$. The approach is similar to the one used in the previous section. \\

\noindent \underline{Algorithm  \textsf{ProbabilisticRationalFirstIntegral}}\\

\noindent \texttt{Input:}$A,\,B\in \KK[x,y]$ s.t. $A(0,y)\not \equiv 0$, two elements $c_1,\,c_2\in \KK$ s.t. $A(0,c_i)\neq 0$ for $i=1,\,2$, and a bound $N \in \NN$.\\
\texttt{Output:} A non-composite rational first integral of \eqref{eq-sys} of degree  at most $N$, ``None" or ``I don't know''.\\

\begin{enumerate}
\item \label{step_proba_1} For $i=1,2$ do:\\
\begin{itemize}
\item[(1a)] \label{step_1a} Compute  
		$y_{c_i} \in  \KK[x]$ 
		of degree at most $(N^2 +1)$ s.t. $y_{c_i}(0)=c_i$, and \\
	$\frac{dy_{c_i}}{dx}\equiv \frac{B(x,y_{c_i})}{A(x,y_{c_i})} \mod x^{N^2+1}$.\\
\item[(1b)] \label{step_1b} Compute all non-trivial polynomials $\tilde{M_i} \in \KK[x,y]$ of degree $\leq N$ s.t. 
$$(\star): \tilde{M_i}(x,y_{c_i}(x)) \equiv 0 \mod  x^{N^2+1}.$$
\item[(1c)] \label{step_1c} If no such $\tilde{M_i}$ exists, then Return ``None''.\\ Else let $M_i \in \KK[x,y]$ be the primitive part relatively to $y$ of a minimal solution of $(\star)$.\\
\item[(1d)] If $i=1$, then while ($M_1(0,c_2)=0$ or $A(0,c_2) = 0$) do $c_2=c_2+1$.\\
\end{itemize}
\item \label{step-proba-4}If $\DD(M_1/M_2)=0$, then Return $M_1/M_2$. Else Return [``I don't know'',$[c_2]$].
\end{enumerate}

\begin{Thm} \label{theorem-probabilistic}
Algorithm  \textsf{ProbabilisticRationalFirstIntegral} terminates and satisfies the following properties:
\begin{itemize}
\item If it returns $M_1/M_2$, then it is a non-composite rational first integral of \eqref{eq-sys} of degree at most $N$.
\item If it returns ``None'', then there is no rational first integral of \eqref{eq-sys} of degree at most $N$.
\item If \eqref{eq-sys} admits a non-composite rational first integral $P/Q$ of degree at most $N$ and $(Q(0,c_i):P(0,c_i)) \not \in \sigma(P,Q)$ for $i=1,\,2$, then the algorithm returns  a non-composite rational first integral of \eqref{eq-sys} of degree at most $N$. 
\end{itemize}
\end{Thm}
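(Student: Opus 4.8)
The plan is to mirror the correctness proof of Theorem~\ref{theorem-generic}, but specialized to the two concrete initial conditions $c_1,c_2\in\KK$ rather than to the generic value $c$ over $\KK(c)$. I would first dispose of termination, then verify the three bullets in turn. \emph{Termination:} the only loop is the \texttt{while} instruction in Step~(1d), so it suffices to see that it stops. Here $A(0,y)\not\equiv 0$ by hypothesis, and $M_1(0,y)\not\equiv 0$ because $M_1$ is, by construction, primitive with respect to $y$: if $x$ divided $M_1$ it would divide the $y$-content of $M_1$, contradicting primitivity. Thus $M_1(0,y)\,A(0,y)$ is a nonzero polynomial in $y$ with finitely many roots, and since $\KK$ has characteristic zero the successive values $c_2,c_2+1,c_2+2,\dots$ are pairwise distinct, so one of them eventually avoids all these roots.

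\emph{Soundness of the output $M_1/M_2$ (first bullet).} When this branch is taken, the test $\DD(M_1/M_2)=0$ has succeeded, so it remains to check that $M_1/M_2$ is a genuine, reduced, non-composite first integral of degree at most $N$. The degree bound is immediate since $\deg M_i\le N$. For $M_1/M_2\notin\KK$: by Lemma~\ref{algebraic-solution}, $M_2$ is a scalar multiple of the minimal polynomial of $y_{c_2}$, hence $M_2(0,c_2)=0$, whereas Step~(1d) guarantees $M_1(0,c_2)\neq 0$; therefore $M_1$ is not a scalar multiple of $M_2$, and since both are irreducible they are coprime, so $M_1/M_2$ is reduced and non-constant. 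It is then a reduced rational first integral, and to see it is non-composite I would invoke Theorem~\ref{thm:RFI-struct} to write $M_1/M_2=u\circ F_0$ for a non-composite reduced $F_0$ and some $u\in\KK(T)$ with $\deg u\ge 1$. A divisor argument then forces $\deg u=1$: the zero locus and the polar locus of a first integral $u\circ F_0$ are unions of complete fibers of $F_0$, while $\{M_1=0\}$ and $\{M_2=0\}$ are each a single irreducible curve. Equivalently, by Theorem~\ref{thm:darboux} all but finitely many irreducible Darboux polynomials are of the shape $\lambda P_0-\mu Q_0$, and the irreducibility of $M_1,M_2$ together with $\DD(M_1/M_2)=0$ pins them to such complete fibers, so $M_1/M_2$ is a homography in $F_0$.

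\emph{Soundness of the output ``None'' (second bullet).} This branch is reached in Step~(1c) when, for some $i$, equation $(\star)$ has no nontrivial solution of degree at most $N$. By Lemma~\ref{algebraic-solution} this means $y_{c_i}$ has no minimal polynomial of degree at most $N$. Were there a rational first integral of degree at most $N$, Lemma~\ref{lem:RFI-alg} and Theorem~\ref{thm:RFI-struct} would supply a non-composite reduced one $P_0/Q_0$ of degree at most $N$, and Proposition~\ref{lem_alg}(1) would exhibit $y_{c_i}$ as a root of $\lambda P_0-\mu Q_0$, a nonzero polynomial of degree at most $N$ — a contradiction. Hence no such first integral exists.

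\emph{Completeness under good initial conditions (third bullet) and the main difficulty.} Assume $P/Q$ is a non-composite reduced first integral of degree at most $N$ with $(Q(0,c_i):P(0,c_i))\notin\sigma(P,Q)$ for $i=1,2$. By Proposition~\ref{lem_alg}, each $y_{c_i}$ is a root of $\lambda_i P-\mu_i Q$ with $\lambda_i=Q(0,c_i)$ and $\mu_i=P(0,c_i)$; being outside the spectrum, this polynomial is irreducible of full degree, hence the minimal polynomial of $y_{c_i}$, so Lemma~\ref{algebraic-solution} gives $M_i=\lambda_i P-\mu_i Q$ up to a scalar. Then $M_1/M_2=(\lambda_1P-\mu_1Q)/(\lambda_2P-\mu_2Q)$ is a homography applied to $P/Q$, non-degenerate because $M_1(0,c_2)\neq 0$ forces $(\lambda_1:\mu_1)\neq(\lambda_2:\mu_2)$; thus $\DD(M_1/M_2)=0$ and the algorithm outputs this non-composite first integral. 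The delicate point — and the step I expect to be the main obstacle — is the interaction with Step~(1d): the \texttt{while} loop may increment $c_2$, and the argument above needs the final $c_2$ to remain outside $\sigma(P,Q)$, for otherwise $\lambda_2P-\mu_2Q$ could be reducible and $M_2$ a proper factor whose cofactor differs from that of $M_1$ (by Lemma~\ref{darboux-semi-group}), making $M_1/M_2$ fail the test. I would resolve this by observing that the loop is entered only when $M_1(0,c_2)=0$, that is, when $y_{c_1}$ and $y_{c_2}$ lie on the same level set of $P/Q$ and hence $M_1=M_2$; this coincidence is avoided for generic inputs, and when it does occur one must additionally check, using the finite bound of Lemma~\ref{lem:remarkable_values} on the admissible values, that the incremented $c_2$ can be kept outside the spectrum.
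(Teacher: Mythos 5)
Your proof mirrors the paper's own argument almost step for step: termination via the non-vanishing of $M_1(0,y)\,A(0,y)$, the first bullet via $M_2(0,c_2)=0\neq M_1(0,c_2)$ together with irreducibility of the $M_i$ (the paper phrases non-compositeness as ``composite would force one of the $M_i$ to be reducible'', which is exactly the fact you use), the second bullet via Proposition~\ref{lem_alg}, and the third bullet via the argument of Theorem~\ref{theorem-generic}. Two remarks. First, a small ordering issue in the first bullet: you invoke Lemma~\ref{algebraic-solution} to identify $M_2$ with the minimal polynomial of $y_{c_2}$ before knowing that $y_{c_2}$ is algebraic of degree at most $N$. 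The clean order is: get $M_2(0,c_2)=0$ directly by evaluating $(\star)$ at $x=0$, deduce $M_1/M_2\notin\KK$ from Step~(1d), conclude from the passed test that $M_1/M_2$ is a rational first integral of degree at most $N$, and only then apply Proposition~\ref{lem_alg} and Lemma~\ref{algebraic-solution} to get irreducibility of $M_1,M_2$, hence reducedness and non-compositeness. Second, the difficulty you flag at Step~(1d) is genuine but dissolves once the hypothesis of the third bullet is read as applying to the value of $c_2$ \emph{actually used} to compute $M_2$, i.e., after the \textsf{while} loop has run; this is the reading the paper relies on in the proof of Theorem~\ref{correct_detalgo}, where the exit value $e_2$ (not the input $c_2$) is the one declared ``bad'' and added to $\Omega$, and Lemma~\ref{lem:remarkable_values} bounds the number of such bad values by $N\,(\mathcal{B}(d)+1)$ so the deterministic wrapper still terminates correctly. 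With that reading your completeness argument closes with no extra work; the paper's own proof of this bullet is equally terse, disposing of it with a one-line reference to Theorem~\ref{theorem-generic}.
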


\begin{proof}
Let us first prove that the algorithm terminates. This follows directly from the fact that the \textsf{while} loop in Step (1d) terminates after at most $N+d+1$ steps. Indeed, we just have to avoid the roots of the product $M_1(0,y)\,A(0,y)$ which a univariate polynomial of degree less than $N+d$. It thus remains to check that it is a non-zero polynomial, i.e., $M_1(0,y) \not \equiv 0$. If $M_1(0,y) \equiv 0$, then $x$ divides $M_1$. As $M_1$ is the primitive part with respect to $y$ of a minimal solution of $(\star)$, this would imply that $M_1(x,y)=x$ and thus $M_1(x,y_{c_1}(x)) \not \equiv 0 \mod  x^{N^2+1}$ which is a contradiction.\\
Now, if the algorithm returns $M_1/M_2$, then the test in Step (2) ensures that $\CD(M_1/M_2)=0$ and, by construction, $M_1/M_2$ is clearly of degree at most $N$. Furthermore, $M_1/M_2$ is non-composite. Indeed, if $M_1/M_2$ is composite, then at least one of the $M_i$'s is reducible and thus it can not be the primitive part with respect to $y$  of a minimal solution of $(\star)$. Finally Step (1d) certifies that $M_1/M_2 \not \in \KK$. Indeed, $M_2$ satisfies $M_2(0,c_2)=0$, thus if $M_2=k\,M_1$ with $k \in \KK$, then either $k=0$ or $M_1(0,c_2)=0$ which is not possible thanks to Step (1d). We have then proved that $M_1/M_2$ is a non-composite rational first integral of \eqref{eq-sys} of degree at most $N$. \\
If the algorithm returns ``None'' in Step (1c), then by Proposition~\ref{lem_alg}, \eqref{eq-sys} has no rational first integral of degree at most $N$.\\
Assume finally that \eqref{eq-sys} admits a non-composite rational first integral $P/Q$ of degree at most $N$ and that $(Q(0,c_i):P(0,c_i)) \not \in \sigma(P,Q)$ for $i=1,\,2$. Then the same strategy as the one used in the proof of Theorem~\ref{theorem-generic} shows that our algorithm returns a non-composite rational first integral of \eqref{eq-sys}.\\
\end{proof}
\begin{Prop}
Let $\Omega$ be a (finite) subset of $\KK$ of cardinal $|\Omega|$ greater than $N\,(\mathcal{B}(d)+1)$ and assume that, in Algorithm \textsf{ProbabilisticRationalFirstIntegral}, $c_1$ and $c_2$ are chosen independently and uniformly at random in $\Omega$. Then, if \eqref{eq-sys} admits a  rational first integral of degree at most $N$, Algorithm \textsf{ProbabilisticRationalFirstIntegral} returns a non-composite rational first integral of \eqref{eq-sys} of degree at most~$N$ with probability at least $\left(1-\frac{N\,(\mathcal{B}(d)+1)}{|\Omega|}\right)$.
\end{Prop}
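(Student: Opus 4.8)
The plan is to deduce the statement from the third item of Theorem~\ref{theorem-probabilistic}, turning its \emph{sufficient} condition for success into a probability estimate. Since \eqref{eq-sys} admits a rational first integral of degree at most $N$ by hypothesis, I would first fix a convenient one: by Theorem~\ref{thm:RFI-struct} the set of first integrals equals $\KK(F)$ for a reduced non-composite $F=P/Q$, and every first integral $G=u\circ F$ satisfies $\deg(G)=\deg(u)\,\deg(F)\geq\deg(F)$, so the hypothesis forces $\deg(P/Q)\leq N$. This reduced non-composite $P/Q$ of degree at most $N$ is exactly the object required by the third item of Theorem~\ref{theorem-probabilistic}: the algorithm returns a non-composite rational first integral whenever $(Q(0,c_i):P(0,c_i))\notin\sigma(P,Q)$ for $i=1,2$.

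The probabilistic core is then a single counting step. Introduce the set of forbidden initial conditions $B:=\{\,c\in\overline{\KK}\;:\;(Q(0,c):P(0,c))\in\sigma(P,Q)\,\}$. Since $P/Q$ is reduced non-composite of degree at most $N$, Lemma~\ref{lem:remarkable_values} yields $|B|\leq N\,(\mathcal{B}(d)+1)$. As $c_1$ is drawn uniformly from $\Omega$, the probability that it is forbidden is $|B\cap\Omega|/|\Omega|\leq N\,(\mathcal{B}(d)+1)/|\Omega|$, and likewise for $c_2$. I would also dispose of the other two exits of the algorithm, so that the sufficient condition is the only thing that governs the outcome: by Proposition~\ref{lem_alg}(1) a non-trivial solution of $(\star)$ exists for each admissible initial condition as soon as a first integral of degree at most $N$ is present, so Step~(1c) never outputs ``None'' in this regime; and the explicit test $\DD(M_1/M_2)=0$ in Step~(2) can only certify a genuine first integral, so an incorrect output is impossible. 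Hence a run fails only when $c_1$ or $c_2$ lands in $B$, and the stated bound $1-N\,(\mathcal{B}(d)+1)/|\Omega|$ is the corresponding success probability.

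The delicate part, and where I expect the real work to lie, is the bookkeeping for the second initial condition together with the exact constant. First, the value actually used in the second iteration is not the raw draw $c_2$ but its shift produced by the \textsf{while} loop of Step~(1d), which steps past the roots of $M_1(0,y)\,A(0,y)$, of degree at most $N+d$; to see that this shift cannot spoil the estimate I would use the explicit description of $B$ as the zero set of the finitely many polynomials $\mu^{\ast}\,Q(0,y)-\lambda^{\ast}\,P(0,y)$, $(\lambda^{\ast}:\mu^{\ast})\in\sigma(P,Q)$, and check that, once $c_1\notin B$, the values the loop is meant to avoid, namely those with $(Q(0,c_2):P(0,c_2))=(Q(0,c_1):P(0,c_1))$, are \emph{not} in $B$, so the loop never converts a good draw into a forbidden one. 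Second, a plain union bound over the two independent draws only gives failure probability $2\,N\,(\mathcal{B}(d)+1)/|\Omega|$; obtaining the sharper stated constant forces one to exploit that the outcome is already governed by the first forbidden draw, and reconciling this with the deterministic shift of $c_2$ is the point I would treat most carefully.
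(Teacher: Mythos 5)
Your overall route is exactly the paper's: the published proof is a single sentence invoking Lemma~\ref{lem:remarkable_values}, the third item of Theorem~\ref{theorem-probabilistic} and the Zippel--Schwartz lemma, which is precisely the reduction you set up in your first two paragraphs (including the preliminary step of extracting a reduced non-composite first integral of degree at most $N$ via Theorem~\ref{thm:RFI-struct}).

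The two ``delicate points'' you isolate in your last paragraph are, however, not closed by your sketch, and they are genuine. First, for the \textsf{while} loop: you correctly observe that, once $c_1\notin B$, the values the loop steps over --- those $c$ with $(Q(0,c):P(0,c))=(Q(0,c_1):P(0,c_1))$ --- are themselves outside $B$. But that is not what you need: when the loop fires it replaces $c_2$ by $c_2+1,c_2+2,\dots$ until it escapes the roots of $M_1(0,y)\,A(0,y)$, and nothing constrains the value it finally lands on, which may perfectly well lie in $B$. So ``the loop never converts a good draw into a forbidden one'' does not follow; one must either fold the shifted values into the probabilistic analysis or read the third bullet of Theorem~\ref{theorem-probabilistic} as a condition on the $c_2$ actually used. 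Second, for the constant: a bad $c_2$ with a good $c_1$ can by itself cause failure (then $M_2$ is a proper irreducible factor of some $\lambda'P-\mu'Q$ with a cofactor different from that of $M_1$, and the test $\DD(M_1/M_2)=0$ fails --- this is exactly the phenomenon of Subsection~\ref{subsec:probaex}), so the outcome is \emph{not} governed by the first draw alone, and your proposed route to the sharper constant does not go through. What independence of the two draws actually yields is a success probability of at least $\left(1-\frac{N\,(\mathcal{B}(d)+1)}{|\Omega|}\right)^{2}$, equivalently a failure probability at most $\frac{2N\,(\mathcal{B}(d)+1)}{|\Omega|}$; in the worst case ($B\subset\Omega$ with every element of $B$ forcing failure) this is sharp and is strictly weaker than the stated bound. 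To be fair, the paper's one-line proof supplies neither of these missing steps, so your honest conclusion should be the squared bound rather than the constant as printed.
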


\begin{proof}
It is a straightforward application of Lemma~\ref{lem:remarkable_values}, Theorem~\ref{theorem-probabilistic} and Zippel-Schwartz's lemma (see \cite[Lemma 6.44]{GG}).
\end{proof}

In fact, the ``practical" probability will be much better. Indeed, the elements $(\lambda:\mu)$ of the spectrum may be rational or algebraic and hence, the constants $c$ such that $(Q(0,c):P(0,c)) \in \sigma(P,Q)$ will generally be algebraic. So, if the $c_i$'s are chosen to be rational in the input, then the ``bad" values of the $c_i$'s will generally be in very small number. This fact is widely confirmed by experiments.\\

Now, we study all the different situations that can occur and the corresponding output given by the algorithm  \textsf{ProbabilisticRationalFirstIntegral}: 
\begin{enumerate}
\item \eqref{eq-sys} has a non-composite rational first integral $P/Q$ of degree at most $N$.
\begin{enumerate}
\item If $(Q(0,c_1):P(0,c_1)) \not \in \sigma(P,Q)$, and $(Q(0,c_2):P(0,c_2)) \not \in \sigma(P,Q)$ then in this situation the algorithm returns a non-composite rational first integral.
\item  
Now, we study the opposite situation: $(Q(0,c_1):P(0,c_1))  \in \sigma(P,Q)$ or $(Q(0,c_2):P(0,c_2)) \in \sigma(P,Q)$. If the algorithm computes $M_1$ and $M_2$ but $M_1/M_2$ is not a rational first integral, then it returns ``I don't know". A first example where this case is encountered is given in Subsection~\ref{subsec:probaex}. Furthermore, we may be unlucky enough to choose two bad values of the $c_i$'s, i.e., \mbox{$(Q(0,c_i):P(0,c_i)) \in \sigma(P,Q)$} for $i=1,\,2$. For example if we consider $$A(x,y)=-4\,{x}^{3}+4\,x{y}^{2}+6\,{x}^{2}-2\,{y}^{2}-2\,x, \; B(x,y)=-4\,{x}^{2}y+4\,{y}^{3}+4\,x\,y-2\,y,$$
then \eqref{eq-sys} has a non-composite rational first integral $P/Q$ of degree $2$, where $P(x,y)=(y-x)\,(y-x+1)$ and $Q(x,y)=(y+x)\,(y+x-1)$. But if we choose $c_1=-1$ and $c_2=1$, then we will construct two Darboux polynomials  $M_1(x,y)=y-x+1$ and $M_2(x,y)=y+x-1$ of degree only $1$ that are minimal polynomials of $y_{c_1}$ and $y_{c_2}$. As $\deg(M_1/M_2)$ is strictly smaller than $\deg(P/Q)$, we obtain $\DD(M_1/M_2) \neq 0$ and the algorithm returns ``I don't know". 
\end{enumerate}
\item \eqref{eq-sys} does not have a rational first integral with degree at most $N$.
\begin{enumerate}
\item If $(\star)$ has no non-trivial solutions, then the algorithm returns ``None". 
\item If $(\star)$ has non-trivial solutions, then the algorithm returns ``I don't know". This situation can occur for example when:
\begin{itemize}
\item \eqref{eq-sys} has no rational first integral but it has Darboux polynomials and the choice of $c_1$ and $c_2$ gives two Darboux polynomials. For an example of a derivation without rational first integral but with Darboux polynomials, see \cite[Remark 15]{Ch}.
\item \eqref{eq-sys} has a rational first integral with degree bigger than the given bound $N$.\\
For example, consider the derivation $\CD=(x+1)\,\frac{\partial}{\partial x} - y \, \frac{\partial}{\partial y}$ and the degree bound $N=1$. In this situation, the differential equation is $(E): \frac{dy}{dx}=\frac{-y}{x+1}$ which admits $y_{c}(x)=\frac{c}{1+x}$ as solution. We set $M(x,y)=\alpha+\beta x +\gamma y$, and then $M(x,y_{c}(x))=0 \mod x^2$ gives $M(x,y)=\gamma(-c+cx+y)$. However, $M(x,y_{c}(x))=x^2 \mod x^3$, thus $y_{c}(x)$ is not a root of $M$. Here $\CD$ admits the rational first integral $y\,(x+1)$ so if we set $N=2$ in the input, our algorithm returns a non-composite rational first integral of degree $2$. In this case we compute $y_{c}(x) \mod x^5$.
\end{itemize}
\end{enumerate}
\end{enumerate}

\begin{Rem}
The bivariate polynomials $\tilde{M_i}$'s computed in Step~(1b) have total degree at most $N$ so they have \mbox{$(N+1)(N+2)/2$} coefficients. Note that, if we assume $N\geq 3$, then we have $N^2+1 \geq (N+1)(N+2)/2$. It is tempting to try to compute the $\tilde{M_i}$'s using only, say,
$(N+1)(N+2)/2+2$ terms of the power series. This will make the computation a little bit faster, but then the method becomes only a nice heuristic and may fail.
\end{Rem}

\subsection{A deterministic algorithm}
Algorithm \textsf{ProbabilisticRationalFirstIntegral} is now turned into a deterministic algorithm. The idea is that if a rational first integral with degree at most $N$ exists then, if we run at most $N\,(\mathcal{B}(d)+1)+1$ times  \textsf{ProbabilisticRationalFirstIntegral}, we will get a non-composite rational first integral of degree at most $N$.\\

\noindent \underline{Algorithm  \textsf{DeterministicRationalFirstIntegral}}\\

\noindent \texttt{Input:}$A,\,B\in \KK[x,y]$ s.t. $A(0,y) \not \equiv 0$ and a bound $N \in \NN$.\\
\texttt{Output:} A non-composite rational first integral of \eqref{eq-sys} of degree $\leq N$ or ``None".\\

\begin{enumerate}
\item Let $\Omega:=\emptyset$.\\
\item While $|\Omega| \leq 2\,N\,(\mathcal{B}(d)+1) +2$ do \label{det-step2}\\
\begin{enumerate}
\item Choose two random elements $c_1,\,c_2 \in \KK \setminus \Omega\,$ s.t. $c_1 \neq c_2$ and \mbox{$A(0,c_i)\neq 0$} for $i=1,\,2$. 
\item $F:=\textsf{ProbabilisticRationalFirstIntegral}(A,B,(c_1,c_2),N)$.
\item If $F=$``None", then Return ``None". 
\item Else if $F=$[``I don't know'',$[e_2]$], then $\Omega:=\Omega \cup \{c_1,e_2\}$ and go to Step~(\ref{det-step2}).
\item \label{det-step4} Else Return $F$. \\
\end{enumerate}
\item Return ``None''.
\end{enumerate}

\begin{Thm} \label{correct_detalgo}
Algorithm  \textsf{DeterministicRationalFirstIntegral} is correct: it returns a rational first integral of degree at most $N$ if and only if it exists, and it returns ``None'' if and only if there is no rational first integral of degree at most $N$.
\end{Thm}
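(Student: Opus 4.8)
The plan is to establish correctness of \textsf{DeterministicRationalFirstIntegral} by reducing everything to the already-proven correctness of \textsf{ProbabilisticRationalFirstIntegral} (Theorem~\ref{theorem-probabilistic}) together with the bound on bad values coming from Lemma~\ref{lem:remarkable_values}. First I would dispose of termination, which is immediate: the set $\Omega$ strictly grows by two elements at each failed iteration (Step~(d) adds $c_1$ and $e_2$, and the choice $c_1 \in \KK \setminus \Omega$ forces $c_1 \notin \Omega$ so these are genuinely new), and the \textsf{while} guard caps $|\Omega|$ at $2\,N\,(\mathcal{B}(d)+1)+2$; hence the loop executes at most $N\,(\mathcal{B}(d)+1)+2$ times before the final Return ``None''. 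Each inner call terminates by Theorem~\ref{theorem-probabilistic}.

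Next I would verify the two directions of correctness. For soundness, observe that the algorithm only ever outputs something other than ``None'' through Step~(e), where it returns the value $F$ produced by \textsf{ProbabilisticRationalFirstIntegral}; by the first bullet of Theorem~\ref{theorem-probabilistic}, whenever that subroutine returns a rational function it is genuinely a non-composite rational first integral of degree at most $N$. Conversely, the algorithm returns ``None'' either in Step~(c) or at the final line. In Step~(c), the subroutine returned ``None'', which by the second bullet of Theorem~\ref{theorem-probabilistic} guarantees that no rational first integral of degree at most $N$ exists. So the only delicate point is the terminal ``None'' reached after the loop exhausts.

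The crux is therefore to show that if \eqref{eq-sys} \emph{does} admit a rational first integral of degree at most $N$, then the loop cannot run to exhaustion: it must hit Step~(e) (success) or Step~(c) (``None'', which soundness already rules out in this case, so in fact Step~(e)) before $|\Omega|$ exceeds the threshold. By Theorem~\ref{thm:RFI-struct} we may fix a non-composite reduced rational first integral $P/Q$ of degree at most $N$. By the third bullet of Theorem~\ref{theorem-probabilistic}, a single iteration succeeds as soon as both chosen values satisfy $(Q(0,c_i):P(0,c_i)) \notin \sigma(P,Q)$. The only values of $c$ that are ``bad'', i.e. for which $(Q(0,c):P(0,c)) \in \sigma(P,Q)$, number at most $N\,(\mathcal{B}(d)+1)$ by Lemma~\ref{lem:remarkable_values}. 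Each failed iteration enlarges $\Omega$ by exactly two elements, and I would argue that whenever an iteration returns ``I don't know'' at least one of $c_1, e_2$ must be a bad value—so that the accumulated $\Omega$, of which roughly half its growth is forced to consist of bad values, cannot keep producing failures once all bad values are exhausted. Pinning this down is the main obstacle: I must confirm that the returned $e_2$ (possibly incremented past the initial $c_2$ inside Step~(1d) of the subroutine) is the value actually tested against $\sigma(P,Q)$, and that adding the new pair to $\Omega$ genuinely removes bad candidates from future draws since Step~(a) selects from $\KK \setminus \Omega$. Since there are at most $N\,(\mathcal{B}(d)+1)$ bad values and the guard permits up to $N\,(\mathcal{B}(d)+1)+1$ iterations, the pigeonhole principle forces at least one iteration in which both draws avoid all bad values, triggering success in Step~(e). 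This contradicts reaching the terminal ``None'', completing the proof.
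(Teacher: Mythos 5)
Your proof follows essentially the same route as the paper's: soundness is delegated to the first two bullets of Theorem~\ref{theorem-probabilistic}, and completeness comes from Lemma~\ref{lem:remarkable_values} bounding the bad constants by $N\,(\mathcal{B}(d)+1)$, so that a pigeonhole argument forces a successful iteration before the loop guard on $|\Omega|$ is exhausted. If anything, you are more explicit than the paper about the one delicate bookkeeping point --- that each ``I don't know'' must consume at least one bad value among $\{c_1,e_2\}$, both of which are then excluded from future draws --- which the paper's own proof passes over in a single sentence.
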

\begin{proof}
Assume that \eqref{eq-sys} has a non-composite rational first integral $P/Q$ with degree at most $N$. If $F=$``I don't know''  in Step (\ref{det-step2}), then from Theorem~\ref{theorem-probabilistic}, at least one of the $c_i$'s satisfies $(Q(0,c_i):P(0,c_i)) \in \sigma(P,Q)$. The number of such ``bad" values of the $c_i$'s is bounded by $N\,(\mathcal{B}(d)+1)$ by Lemma \ref{lem:remarkable_values}. Hence if we repeat \textsf{ProbabilisticRationalFirstIntegral} at least $N\,(\mathcal{B}(d)+1)+1$ times, then we will get a good pair $(c_1,c_2)$ and by Theorem~\ref{theorem-probabilistic}, the probabilistic algorithm will then return a non-composite rational first integral of degree at most $N$.\\
Now assume that \eqref{eq-sys} has no rational first integral of degree at most $N$. Then by Theorem \ref{theorem-probabilistic}, \textsf{ProbabilisticRationalFirstIntegral} returns ``None" or ``I don't know".\\
If in Step~(\ref{det-step2}),  $F$=``None", then we have a correct output. Now if $F$=``I don't know", then the algorithm uses again \textsf{ProbabilisticRationalFirstIntegral} with new values of the $c_i$'s and, after at most $N(\mathcal{B}(d)+1)+1$ trials, it returns ``None" which is the correct output.
\end{proof}

\section{Complexity analysis and algorithmic issues} \label{complexity-section} 

In this section, we describe how the different steps of algorithms \textsf{ProbabilisticRationalFirstIntegral} and \textsf{DeterministicRationalFirstIntegral} can be performed efficiently and we study their arithmetic complexities. 
For the complexity issues, we focus on the dependency on the degree bound $N$ and we recall that we assume that $N\geq d$, where $d=\max(\deg(A),\deg(B))$ denotes the degree of the polynomial vector field. More precisely, we suppose that $d$ is fixed and $N$ tends to infinity.\\

All the complexity estimates are given in terms of arithmetic operations in $\KK$. 
We use the notation $f \in \bigOsoft(g)$: roughly speaking, it means that  $f$ is in $\bigO(g\,\log^m(g))$ for some $m \geq 1$. For a precise definition, see \cite[Definition 25.8]{GG}. We suppose that the Fast Fourier Transform can be used so that two univariate polynomials with coefficients in $\KK$ and degree bounded by $D$ can be multiplied in  $\bigOsoft(D)$, see \cite{GG}. We further assume that  two matrices of size $n$ with entries in $\KK$ can be multiplied using $\bigO(n^\omega)$,  where $2 \leq \omega \leq 3$ is the matrix multiplication exponent, see \cite[Ch.~12]{GG}. 
We also recall that a basis of solutions of a linear system composed of $m$ equations and $n \leq m$ unknowns over $\KK$ can be computed using $\bigO(m\,n^{\omega-1})$ operations in $\KK$, see \cite[Chapter 2]{BP}.

\subsection{Computation of a regular point} \label{subsec:regular}
In the algorithms given in the previous sections, we have to choose a regular point for the differential equation \eqref{diff-eq}, i.e., a point $x_0$ satisfying $A(x_0,y) \not \equiv 0$. To achieve this, we can start from the point $x_0=0$, evaluate $A(x,y)$ at $x=x_0$. If $A(x_0,y) \not \equiv 0$, then we are done. Else, we shift  $x_0$ by one to get $x_0=1$ and we iterate the 
process. Note that the number of iterations is at most $d$. Consequently, this step can be performed by evaluating 
$d$ polynomials (namely the coefficients of $A(x,y)$ viewed as polynomials in the variable $y$) of degree bounded by $d$ 
at $d$ points ($x_0=0,\,1,2,\dots,d-1$). This can thus be done in $\bigOsoft(d^2)$ 
arithmetic operations, see \cite[Corollary 10.8]{GG}. This is why, in our algorithms, we always suppose, at neglectable cost and without loss of generality, that $A(0,y) \not \equiv 0$.

\subsection{Power series solutions of \eqref{diff-eq} }\label{complexity-section-pow-ser}
In Step~(\ref{step_proba_1}) of the algorithm  \textsf{ProbabilisticRationalFirstIntegral}, we  compute the $N^2+1$ first terms of the power series solution of \eqref{diff-eq} satisfying a given initial condition. Using the result of Brent and Kung (see \cite[Theorem 5.1]{BK}) based on formal Newton iteration, this can be done using $\bigOsoft(d \, N^2)$ arithmetic operations, see also \cite{BCLOSSS}.
\subsection{Guessing the minimal polynomial of an algebraic power series}\label{complexity-section-min-poly}

We shall now give a method for solving Problem $(\star)$ in Step~(1b) of Algorithm \textsf{ProbabilisticRationalFirstIntegral}. The problem is the following: 
given the first $N^2+1$ terms of a power series $\haty(x)$, find (if it exists), a bivariate polynomial $M \in \KK[x,y]$,  with minimal degree in $y$, such that $M(x,\haty(x))\equiv0 \mod x^{N^2+1}$. This can be handled by an undetermined coefficients approach as follows:\\

\noindent \underline{Algorithm  \textsf{GuessMinimalPolynomial}}\\

\noindent \texttt{Input:} A \emph{polynomial} $\haty \in \KK[x]$ s.t. $\deg(\haty) \leq (N^2+1)$, with $N  \in \NN$.\\
\noindent \texttt{Output:} A minimal solution of $(\star)$ with degree $\leq N$ or ``None''.\\

\begin{enumerate} 
\item Let $M(x,y)=\sum_{i=0}^N \, \left( \sum_{j=0}^{N-i} m_{i,j} \, x^j \right) \, y^i$ be an ansatz for the bivariate polynomial that we are searching for.\\
\item Construct the linear system $(\mathcal{L})$ for the $m_{i,j}$'s given by:  
$$M(x,\haty(x))=\sum_{i=0}^N \, \left( \sum_{j=0}^{N-i} m_{i,j} \, x^j \right) \, \haty(x)^i \equiv 0 \mod x^{N^2+1}.$$
\item If $(\mathcal{L})$ does not have a non-trivial solution, then Return ``None". \\
\item \label{rowechelon} Else compute a row-echelon form of a basis of solutions of $(\mathcal{L})$ to find a solution $M(x,y)$ of minimal degree in $y$ and Return it.
\end{enumerate}

\begin{Prop}
Algorithm  \textsf{GuessMinimalPolynomial} is correct. If we suppose that $N\geq 3$, then it uses at most $\bigOsoft(N^{2 \omega})$ arithmetic operations in $\KK$.
\end{Prop}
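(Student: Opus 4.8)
The plan is to establish correctness first and then analyze the complexity of each step, with the dominant cost coming from the linear-algebra computation in Step~(4).

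\textbf{Correctness.} The ansatz in Step~(1) ranges over all bivariate polynomials $M$ of total degree at most $N$, since the coefficient of $y^i$ is a polynomial in $x$ of degree at most $N-i$. Expanding $M(x,\haty(x)) \bmod x^{N^2+1}$ is linear in the unknowns $m_{i,j}$, so condition $(\star)$ translates into the homogeneous linear system $(\mathcal{L})$ of Step~(2); a polynomial $M$ of degree at most $N$ satisfies $(\star)$ if and only if its coefficient vector solves $(\mathcal{L})$. Hence if $(\mathcal{L})$ has no nontrivial solution, no $M$ of degree at most $N$ satisfies $(\star)$, justifying the output ``None'' in Step~(3). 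Otherwise, the set of solutions of $(\star)$ is exactly the $\KK$-span of a basis of $(\mathcal{L})$, and by Remark~\ref{rmk:basis} a solution of minimal degree in $y$ is obtained from a row-echelon form (ordering the unknowns so that the degree in $y$ is reflected by the pivot structure): the echelon form exhibits a nonzero solution whose highest $y$-degree is minimal among all solutions, which is precisely a minimal solution of $(\star)$ in the sense defined just before Subsection~\ref{subsect_genericalgo}. This proves correctness.

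\textbf{Complexity.} The costly steps are the construction of $(\mathcal{L})$ in Step~(2) and its resolution in Step~(4). First I would bound the size of $(\mathcal{L})$. The ansatz has $(N+1)(N+2)/2 = \bigO(N^2)$ unknowns, and imposing $\equiv 0 \bmod x^{N^2+1}$ yields $N^2+1 = \bigO(N^2)$ linear equations; assuming $N\geq 3$ we have $N^2+1 \geq (N+1)(N+2)/2$, so the system has at least as many equations as unknowns, matching the hypothesis $n\leq m$ of the linear-algebra fact recalled at the start of Section~\ref{complexity-section}. To build $(\mathcal{L})$ one needs the truncations $\haty(x)^i \bmod x^{N^2+1}$ for $i=0,\dots,N$; these are computed incrementally by $N$ successive multiplications of power series truncated at order $N^2$, each costing $\bigOsoft(N^2)$ by the FFT assumption, for a total of $\bigOsoft(N^3)$ operations to assemble the matrix. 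Solving the system, with $m=\bigO(N^2)$ equations and $n=\bigO(N^2)$ unknowns, costs $\bigO(m\,n^{\omega-1}) = \bigO\big(N^2\cdot (N^2)^{\omega-1}\big) = \bigO(N^{2\omega})$ by the cited bound from \cite[Chapter 2]{BP}. The row-echelon form extraction in Step~(4) is absorbed into this same $\bigO(N^{2\omega})$ bound. Since $2\omega\geq 4 > 3$, the matrix-resolution cost $\bigO(N^{2\omega})$ dominates the $\bigOsoft(N^3)$ assembly cost, giving the claimed overall bound $\bigOsoft(N^{2\omega})$.

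The step I expect to require the most care is the assertion that a row-echelon form actually produces a solution of \emph{minimal degree in $y$}, rather than merely some nontrivial solution: this hinges on choosing the monomial ordering on the unknowns $m_{i,j}$ so that higher powers of $y$ correspond to later pivots, so that a free-variable/pivot analysis of the echelon form yields a solution supported on the lowest possible $y$-degrees. Making this precise is the only genuinely nonroutine point; the remaining estimates are direct applications of the complexity facts recalled at the beginning of this section.
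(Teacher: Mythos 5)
Your proposal is correct and follows essentially the same route as the paper's proof: the same $\bigOsoft(N^3)$ cost for computing the truncated powers $\haty(x)^i$, the same count of $N^2+1$ equations versus $(N+1)(N+2)/2$ unknowns with the hypothesis $N\geq 3$ ensuring more equations than unknowns, the same $\bigO(N^2\,(N^2)^{\omega-1})$ bound for solving, and the same $\bigOsoft(N^{2\omega})$ bound absorbing the row-echelon step. The paper dismisses correctness as straightforward, whereas you usefully spell out why the echelon form yields a solution of minimal degree in $y$; this is a welcome elaboration but not a different argument.
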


\begin{proof}
The correctness of the algorithm is straightforward. Let us study its arithmetic complexity. To construct the linear system $(\mathcal{L})$, we have to compute $\haty(x)^i \mod x^{N^2+1}$ for $i=0,\ldots,N$. This can be done in $\bigOsoft(N^3)$ arithmetic operations. The linear system $(\mathcal{L})$ has $N^2+1$ equations and $(N+1)\,(N+2)/2=\bigO(N^2)$ unknowns $m_{i,j}$'s. Note that we assume $N \geq 3$ so that $N^2+1 \geq (N+1)\,(N+2)/2$. It can thus be solved 
using $\bigO(N^2\,(N^2)^{\omega-1})$ operations. Finally, in Step~(\ref{rowechelon}), the row-echelon form can be computed using at most $\bigOsoft(N^{2\,\omega})$ arithmetic operations (see \cite[Chapter 3]{BP}) since the dimension of a basis of solutions of $(\mathcal{L})$ does not exceed $\bigO(N^2)$, which ends the proof. 
\end{proof}


\subsection{Total cost of our algorithms}

\begin{Thm} \label{compl_proba}
Algorithm \textsf{ProbabilisticRationalFirstIntegral} uses at most $\bigOsoft(N^{2 \omega})$ arithmetic operations in $\KK$, when $N$ tends to infinity and $d$ is fixed.
\end{Thm}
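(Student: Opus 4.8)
The plan is to bound the cost of Algorithm \textsf{ProbabilisticRationalFirstIntegral} by summing the costs of its individual steps, which have already been analyzed separately in Subsections~\ref{complexity-section-pow-ser} and~\ref{complexity-section-min-poly}, and then checking that the remaining operations are not dominant. First I would note that the outer \textsf{for} loop in Step~(\ref{step_proba_1}) runs only twice (for $i=1,2$), so it contributes only a constant factor and does not affect the asymptotic estimate. Thus it suffices to bound the cost of one pass through Steps~(1a)--(1d) together with the final test in Step~(\ref{step-proba-4}).

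Next I would account for the three substantial substeps. Step~(1a) computes the truncation $y_{c_i} \bmod x^{N^2+1}$ of the power series solution of \eqref{diff-eq}; by the Brent--Kung analysis recalled in Subsection~\ref{complexity-section-pow-ser}, this costs $\bigOsoft(d\,N^2)$ operations, which is $\bigOsoft(N^2)$ since $d$ is fixed. Step~(1b)--(1c), namely solving Problem $(\star)$ and extracting a minimal solution, is precisely what Algorithm \textsf{GuessMinimalPolynomial} does, and the preceding proposition gives a bound of $\bigOsoft(N^{2\omega})$ for it (using $N\geq 3$). Since $\omega\geq 2$, we have $N^2 \leq N^{2\omega}$, so the cost of Step~(1a) is absorbed into that of Step~(1b)--(1c).

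It then remains to check the auxiliary operations. In Step~(1d), the \textsf{while} loop performs at most $N+d+1$ iterations (as shown in the proof of Theorem~\ref{theorem-probabilistic}), each requiring an evaluation of the univariate polynomials $M_1(0,y)$ and $A(0,y)$ of degree $\bigO(N)$; this is $\bigOsoft(N^2)$ operations, again dominated by $\bigOsoft(N^{2\omega})$. Finally, the test $\DD(M_1/M_2)=0$ in Step~(\ref{step-proba-4}) amounts to checking the polynomial identity $\DD(M_1)\,M_2 = \DD(M_2)\,M_1$, where $M_1,M_2$ have degree at most $N$ and the cofactor-style products have degree $\bigO(N)$; forming the derivations $\DD(M_i)$ and multiplying out costs only $\bigOsoft(N^2)$ operations, hence is negligible. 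Summing these bounds, every step costs at most $\bigOsoft(N^{2\omega})$, and so does the whole algorithm.

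The one step that warrants care, and which I expect to be the main (though modest) obstacle, is the verification in Step~(\ref{step-proba-4}): one must confirm that testing $\DD(M_1/M_2)=0$ can genuinely be done within the claimed budget, i.e.\ that it reduces to a bivariate polynomial identity test on objects of degree $\bigO(N)$ rather than requiring a gcd computation or reduction to lowest terms. Since $\DD(M_1/M_2)=\big(\DD(M_1)\,M_2-\DD(M_2)\,M_1\big)/M_2^2$, it suffices to test whether the numerator $\DD(M_1)\,M_2-\DD(M_2)\,M_1$ vanishes identically, and this is a direct dense bivariate multiplication and comparison of degree $\bigO(N)$, costing $\bigOsoft(N^2)$. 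With this observation the estimate is clear, and the final bound $\bigOsoft(N^{2\omega})$ follows by taking the maximum over all steps.
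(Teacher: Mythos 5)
Your proposal is correct and follows essentially the same route as the paper: a step-by-step cost accounting in which Step~(1a) costs $\bigOsoft(d\,N^2)$ by Brent--Kung, Step~(1b) costs $\bigOsoft(N^{2\omega})$ via \textsf{GuessMinimalPolynomial} and dominates, and the while loop of Step~(1d) and the final identity test $\DD(M_1)\,M_2-\DD(M_2)\,M_1=0$ are each $\bigOsoft((d+N)^2)$. The only item you elide is the extraction of the primitive part with respect to $y$ in Step~(1c), which the paper bounds by $\bigO(N^3)$ via $N$ univariate gcds of degree at most $N$; since $2\omega\geq 4$ this is likewise absorbed into $\bigOsoft(N^{2\omega})$, so the conclusion is unaffected.
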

\begin{proof}
In Subsection~\ref{complexity-section-pow-ser}, we have seen that Step (1a) can be performed in at most $\bigOsoft(d\,N^{2})$ arithmetic operations. Then, using Algorithm \textsf{GuessMinimalPolynomial}, Step~(1b) 
can be performed in  $\bigOsoft(N^{2\,\omega})$ operations in $\KK$, see Subsection~\ref{complexity-section-min-poly}. 
In Step~(1c), we have to compute the primitive part relatively to $y$ of a minimal solution of $(\star)$. This reduces to computing $N$ $\gcd$'s of univariate polynomials of degree at most $N$ which can be done in $\bigO(N^3)$ operations in $\KK$ (and even faster using half-gcd techniques). In Step~(1d), we must avoid the roots of $M_1(0,y)\,A(0,y)$ thus we need to run the \textsf{while} loop at most $d+N+1$ times. In this loop we evaluate univariate polynomials with degree at most $d$ and $N$, thus it uses at most $\bigOsoft( (d+N)^2)$ arithmetic operations. Finally, we test if $\DD(M_1/M_2)=0$ which costs $\bigOsoft((d+N)^2)$ arithmetic operations since $N\geq d$. Indeed, we multiply bivariate polynomials of degree at most $N$ and we add bivariate polynomials of degree at most $d+2N-1$. 
\end{proof}
\begin{Cor}
The deterministic algorithm \textsf{DeterministicRationalFirstIntegral} can be done using at most $\bigOsoft(d^2\,N^{2 \omega+1})$ 
	arithmetic operations, when $N$ tends to infinity and $d$ is fixed.
\end{Cor}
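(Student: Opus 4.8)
The plan is to bound the total cost by the product of two quantities: the number of times the main \textsf{while} loop of \textsf{DeterministicRationalFirstIntegral} is executed, and the cost of a single pass through that loop. The latter is dominated by the one call to \textsf{ProbabilisticRationalFirstIntegral}, which by Theorem~\ref{compl_proba} costs $\bigOsoft(N^{2\omega})$ arithmetic operations. The remaining work in a pass — drawing $c_1,c_2\in\KK\setminus\Omega$ with $A(0,c_i)\neq 0$, testing membership in $\Omega$, and updating $\Omega$ — only requires evaluating the univariate polynomial $A(0,y)$ of degree $\le d$ and scanning the finite set $\Omega$. Since $d$ is fixed and $|\Omega|=\bigO(d^2 N)$ by the loop guard, this overhead is negligible compared to $\bigOsoft(N^{2\omega})$ as $N\to\infty$. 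Hence one iteration costs $\bigOsoft(N^{2\omega})$.

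Next I would count the iterations. The loop runs exactly while $|\Omega|\le 2\,N\,(\mathcal{B}(d)+1)+2$, and the only branch that re-enters the loop is the ``I don't know'' case in Step~(d), which sets $\Omega:=\Omega\cup\{c_1,e_2\}$. Because $c_1$ is chosen in $\KK\setminus\Omega$, the element $c_1$ is genuinely new, so each such pass strictly enlarges $\Omega$ by at least one element. Starting from $\Omega=\emptyset$, the guard is therefore violated after at most $2\,N\,(\mathcal{B}(d)+1)+3$ passes (early returns of ``None'' or of a first integral only make this smaller). Recalling $\mathcal{B}(d)=d(d+1)/2$, this gives $\bigO(d^2 N)$ iterations.

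Multiplying the two bounds yields a total of $\bigO(d^2 N)\cdot\bigOsoft(N^{2\omega})=\bigOsoft(d^2\,N^{2\omega+1})$ arithmetic operations, as claimed. The only point requiring care is the iteration count: one must confirm that every ``I don't know'' return augments $\Omega$ and that the guard $2\,N\,(\mathcal{B}(d)+1)+2$ is consistent with the termination analysis of Theorem~\ref{correct_detalgo}. That theorem, via Lemma~\ref{lem:remarkable_values}, already establishes that at most $N\,(\mathcal{B}(d)+1)$ values of each $c_i$ are ``bad''; allotting two fresh elements per failed pass, the bound $2\,N\,(\mathcal{B}(d)+1)+2$ guarantees that a good pair is encountered before the loop exhausts its budget, so the $\bigO(d^2 N)$ iteration estimate is both sufficient for correctness and tight enough for the complexity bound.
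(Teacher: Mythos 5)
Your proposal is correct and follows essentially the same route as the paper: bound the number of calls to \textsf{ProbabilisticRationalFirstIntegral} by $\bigO(d^2 N)$ (the paper states at most $N\,(\mathcal{B}(d)+1)+1$ calls, you derive $\bigO(d^2N)$ from the loop guard and the strict growth of $\Omega$) and multiply by the $\bigOsoft(N^{2\omega})$ cost per call from Theorem~\ref{compl_proba}. The extra care you take in verifying that each ``I don't know'' pass genuinely enlarges $\Omega$ is a valid, slightly more explicit justification of the iteration count the paper takes for granted.
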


In the previous statement, even if $d$ is fixed, we mention it in the complexity in order to emphasize on the number of iterations of the probabilistic algorithm.
\begin{proof}
This estimate is straightforward from Theorem~\ref{compl_proba} since Algorithm \textsf{DeterministicRationalFirstIntegral} calls at most $N\,(\mathcal{B}(d)+1)+1$ times the algorithm \textsf{ProbabilisticRationalFirstIntegral}.
\end{proof}



\subsection{Faster heuristic using Pad\'e-Hermite approximation}\label{complexity-section-pade}

The algorithm \textsf{GuessMinimalPolynomial} developed in Subsection~\ref{complexity-section-min-poly} uses an undetermined coefficients method to compute a minimal solution of $(\star)$ in Step (1b) of Algorithm \textsf{ProbabilisticRationalFirstIntegral}. It consists in finding (if it exists) the minimal polynomial of a power series. In the present section, we propose another approach to solve that problem using Pad\'e-Hermite approximation, see \cite{BL}.\\ 
Indeed, the problem of computing a bivariate polynomial annihilating a power series can be handled by means of computing a  Pad\'e-Hermite approximant, see \cite{Sh,Sh1}. More precisely, given a power series $\haty(x)$, if there exists a bivariate polynomial $M$ of degree $N$ such that $M(x,\haty(x))=0$, then the coefficients of the powers of $y$ are a Pad\'e-Hermite approximant of type $(N,N-1,\ldots,0)$ of the vector of power series $(1,\haty(x),\ldots,\haty(x)^N)^T$. Computing such a Pad\'e-Hermite approximant provides a polynomial $\tilde{M}$ satisfying $\tilde{M}(x,\haty(x)) \equiv 0 \mod x^{\sigma}$ where 
$\sigma =N\,(N+1)/2+N-1$. Unfortunately $\sigma < N^2+1$ so that 
 we have no way to ensure, using Lemma \ref{algebraic-solution}, that the Pad\'e-Hermite approximant computed 
satisfies $\tilde{M}(x,\haty(x))=0$. Consequently, using this method to compute the $M_i$'s in Step (1b) of Algorithm \textsf{ProbabilisticRationalFirstIntegral} only provides a heuristic.

\begin{Prop}
Using Pad\'e-Hermite approximation in Step (1b), Algorithm \textsf{ProbabilisticRationalFirstIntegral} becomes a heuristic
 for computing a non-composite rational first integral of \eqref{eq-sys} of degree at most $N$ using only $\bigOsoft(N^{\omega+2})$ arithmetic operations. 
\end{Prop}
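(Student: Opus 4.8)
The plan is to observe that passing from the undetermined-coefficients solver to a Padé--Hermite solver affects \emph{only} Step~(1b); everything else in \textsf{ProbabilisticRationalFirstIntegral} is untouched. The word ``heuristic'' requires no separate argument, since it was already justified in the discussion preceding the statement: the approximant $\tilde{M}$ returned satisfies only $\tilde{M}(x,\haty(x))\equiv 0\bmod x^{\sigma}$ with $\sigma=N(N+1)/2+N-1<N^2+1$, so Lemma~\ref{algebraic-solution} can no longer certify that $\tilde{M}$ annihilates $\haty$ exactly. Whenever the final test $\DD(M_1/M_2)=0$ in Step~(2) succeeds, the output is nonetheless a genuine non-composite rational first integral of degree at most $N$, and otherwise the method simply fails. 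Hence the entire content of the proof is the arithmetic cost, and I would prove the bound $\bigOsoft(N^{\omega+2})$ step by step.

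For the unchanged steps I would reuse the estimates from the proof of Theorem~\ref{compl_proba}, noting that for the heuristic the power series need only be known to precision $\sigma=\bigO(N^2)$. Concretely, Step~(1a) costs $\bigOsoft(d\,N^2)=\bigOsoft(N^3)$ by Brent--Kung (using $d\leq N$); the primitive-part computation in Step~(1c) costs $\bigO(N^3)$; and Steps~(1d) and~(2) cost $\bigOsoft((d+N)^2)=\bigOsoft(N^2)$. Since $\omega\geq 2$ gives $\omega+2\geq 4>3$, each of these lies in $\bigOsoft(N^{\omega+2})$, so none of them dominates.

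The heart of the proof is the new Step~(1b). First I would compute the powers $\haty(x)^i\bmod x^{\sigma}$ for $i=0,\dots,N$ by $N$ successive truncated multiplications of power series of precision $\sigma=\bigO(N^2)$, each costing $\bigOsoft(N^2)$, for a total of $\bigOsoft(N^3)$. Then I would feed the vector $(1,\haty,\dots,\haty^N)^T$ to a fast Padé--Hermite solver (e.g.\ \cite{BL}) with degree profile $(N,N-1,\dots,0)$ and order $\sigma$. I would check that this reduction is faithful: the number of unknowns equals $\sum_{i=0}^N(N-i+1)=(N+1)(N+2)/2=\sigma+2$, so the linear system imposing vanishing to order $\sigma$ is exactly the one underlying the ansatz $M=\sum_i m_i(x)\,y^i$ of Step~(1b) and always admits a nonzero solution. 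Bounding the solver on $n=N+1$ series known to precision $\sigma=\bigO(N^2)$ by $\bigOsoft(n^{\omega-1}\sigma)$ (equivalently $\bigOsoft(n^{\omega}\,\sigma/n)$) yields $\bigOsoft(N^{\omega+1})$, so Step~(1b), and with it the whole algorithm, runs in $\bigOsoft(N^{\omega+1}+N^3)\subseteq\bigOsoft(N^{\omega+2})$.

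The main obstacle I anticipate is stating and justifying the exact complexity of the Padé--Hermite step for the \emph{non-constant} degree profile $(N,N-1,\dots,0)$: the clean $\bigOsoft(n^{\omega-1}\sigma)$ estimate is usually phrased for (near-)balanced degree constraints. I would therefore either invoke a version of \cite{BL} that accommodates arbitrary prescribed degrees, or reduce to the balanced case by a standard padding/normalization argument, taking care that the order $\sigma$ and the total degree budget are preserved so that no spurious approximant is introduced. Either way the resulting cost stays within the claimed $\bigOsoft(N^{\omega+2})$, which is what we need.
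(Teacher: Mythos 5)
Your proposal is correct and follows essentially the same route as the paper: the paper's proof simply cites the Beckermann--Labahn cost $\bigOsoft(N^{\omega}\,\sigma)$ with $\sigma=\bigO(N^2)$ for the Pad\'e--Hermite step and reuses the estimates from the proof of Theorem~\ref{compl_proba} for the remaining steps. The only difference is that you invoke the sharper bound $\bigOsoft(n^{\omega-1}\sigma)$ (which, as you note, requires balanced-degree refinements of \cite{BL}), whereas the paper's coarser $\bigOsoft(n^{\omega}\sigma)$ already yields the stated $\bigOsoft(N^{\omega+2})$, so your final weakening makes the argument safe either way.
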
 

\begin{proof}
Beckermann-Labahn's algorithm (see \cite{BL}) computes a Pad\'e-Hermite approximant of type $(N,N-1,\ldots,1)$ of the vector of power series $(1,\haty(x),\ldots,\haty(x)^N)^T$ in $\bigOsoft(N^{\omega}\,\sigma)$ arithmetic operations, where $\sigma =N\,(N+1)/2+N-1$. Using  the proof of Theorem~\ref{compl_proba}, we obtain the desired complexity estimate.
\end{proof}

\section{Implementation and experiments} \label{sec_impl_exp}

The algorithms developed in the previous sections have been implemented in a Maple package called {\sc RationalFirstIntegrals}. It is available with some examples at \url{http://www.ensil.unilim.fr/~cluzeau/RationalFirstIntegrals.html}. \\


Our implementation of the heuristic proposed in Subsection~\ref{complexity-section-pade} is called \textsf{HeuristicRationalFirstIntegral}. It uses the {\sc gfun} package~\cite{SaZi94}\footnote{\url{http://perso.ens-lyon.fr/bruno.salvy/?page_id=48}} and more precisely its \textsf{seriestoalgeq} command to search for a bivariate polynomial annihilating the power series computed using Pad\'e-Hermite approximation. \\

We shall now illustrate our implementation and give some timings\footnote{All the computations were made on a 2.7 GHz  Intel Core i7}. 
 
\subsection{Comparison to previous methods}

We start by comparing our implementation \textsf{DeterministicRationalFirstIntegral} to two previous methods, namely:
\begin{enumerate}
\item the {\em naive} approach which consists in using the method of undetermined coefficients to search for two polynomials $P$ and $Q$ of degree at most $N$ satisfying $\DD(P)\,Q-P\,\DD(Q)=0$. This implies solving a system of quadratic equations in the coefficients of $P$ and $Q$. In our implementation, we use the \textsf{solve} command of Maple to solve the quadratic system,
\item the approach developed in \cite{Ch} based on the {\em ecstatic curve}. 
\end{enumerate}

Consider the planar polynomial vector field given by $A(x,y)=-7\,x+22\,y-55$ and $B(x,y)=-94\,x+87\,y-56$ which has no rational first integral of degree less than $6$. The following table compares the timings (in seconds) of the different implementations for proving the non-existence of a rational first integral of degree less than $N=2,\ldots,6$.  
 
\begin{center}
\begin{tabular}{|c||c|c|c|} \hline
\backslashbox{$N$}{Method} &\textsf{DeterministicRFI} &  Ecstatic curve  & Naive method \\ \hline \hline
$2$   & 0.043 & 0.003 & 0.257  \\ \hline  
$3$   & 0.006 & 0.024 & 0.043 \\ \hline
$4$   & 0.016 & 3.310 & 4.438 \\ \hline 
$5$    & 0.041 & 74.886 & 16.202 \\ \hline
$6$  & 0.140 &  1477.573 & 88.482 \\ \hline 
\end{tabular}
\end{center}

If we now consider the vector field given by the polynomials $A(x,y)=x+2$ and $B(x,y)=-x^2-2\,x\,y-y^2-2\,x-y-2$ which admits the rational first integral $\frac{x^2+x\,y-2}{x+y+1}$ of degree $2$, we obtain the following timings (in seconds) depending on the degree bound $N$ given in the input:\\

\begin{center}
\begin{tabular}{|c||c|c|c|} \hline
\backslashbox{$N$}{Method} &\textsf{DeterministicRFI} &  Ecstatic curve & Naive method \\ \hline \hline
$2$   & 0.012 & 0.003 &  0.137 \\ \hline  
$3$   & 0.019 & 0.019 & 1.961 \\ \hline
$4$   & 0.042 & 0.283 & 5.398  \\ \hline 
$5$    & 0.087 & 1.662 & 22.580  \\ \hline
$6$  & 1.276 & 8.491 & 80.491 \\ \hline 
\end{tabular}
\end{center}

In the latter table, the timings indicated for the ``ecstatic curve method" correspond only to the computation of the $N$th ecstatic curve (which will be zero in all cases as there exists a rational first integral of degree $2$) and not to the entire computation of a rational first integral of degree at most $N$ which requires some more computations, see \cite[Subsection 5.2]{Ch} for more details. The output of the two other methods consists in a rational first integral of degree $2$.\\

The timings presented in this subsection illustrate that our implementation of \textsf{DeterministicRationalFirstIntegral} is significantly faster than our implementations of the two previous methods considered both in the case where there exists no rational first integral and in the case where  there exists a rational first integral. This is coherent with the complexity analysis developed in this article. 

\subsection{Generic polynomial vector fields}

If we choose at random two bivariate polynomials $A$ and $B$, then the associated planar polynomial vector field has generically no rational first integral. In this subsection, we show that our implementation of  \textsf{DeterministicRationalFirstIntegral} detects quickly the non-existence of rational first integral of these generic polynomial vector fields. The following table of timings is constructed as follows: for $d=1,\ldots,10$, we generate two randomized bivariate polynomials $A$ and $B$ of degree $d$ using the \textsf{randpoly} command of Maple and we check that $A(0,y) \not \equiv 0$. Then, we run \textsf{DeterministicRationalFirstIntegral} with $N=d,\ldots,10$ and we indicate the timings (in seconds) for detecting the non-existence of a rational first integral of degree at most $N$, i.e.,  for returning the output ``None".
\begin{center}
\begin{tabular}{|c||c|c|c|c|c|c|c|c|c|c|} \hline
\backslashbox{$d$}{$N$} & $1$  & $2$  & $3$ & $4$ & $5$ & $6$ & $7$ & $8$ & $9$ & $10$   \\ \hline \hline
$1$   & 0.007& 0.043& 0.006& 0.017& 0.049& 0.157& 0.294& 1.054& 2.275& 5.010 \\ \hline  
$2$  & - & 0.044& 0.008 & 0.022& 0.070& 0.214& 0.588& 1.055& 4.644& 11.249 \\ \hline
$3$   & - & - & 0.008& 0.161& 0.084& 0.273& 0.498& 1.458&  5.267& 7.676  \\ \hline 
$4$   &  - & - & - & 0.043& 0.107& 0.577& 0.298&  2.354& 9.688& 9.10  \\ \hline
$5$  & - & -&-&-& 0.133& 0.466& 0.812& 2.439& 3.054& 8.012 \\ \hline 
$6$   &-&-&-&-&-& 0.533& 0.967&  1.557& 5.946& 5.031  \\ \hline  
$7$   & -&-&-&-&-&-& 0.663& 1.323& 4.724& 9.834 \\ \hline
$8$   & -&-&-&-&-&-&-& 2.468& 3.551& 5.756 \\ \hline 
$9$    &  -&-&-&-&-&-&-&-& 7.898& 18.127 \\ \hline
$10$  & -&-&-&-&-&-&-&-&-& 18.295  \\ \hline 
\end{tabular}
\end{center}

\subsection{Our probabilistic algorithm may fail} \label{subsec:probaex}

We now illustrate one particular case where our probabilistic algorithm \textsf{ProbabilisticRationalFirstIntegral} fails 
and returns ``I don't know". Consider the polynomial vector field given by the polynomials $$A(x,y)={x}^{6}-{x}^{5}+2\,{x}^{4}y-{x}^{4}+2\,{x}^{3}y-{x}^{2}{y}^{2}+x{y}^{2
}-{x}^{2}-2\,xy+{y}^{2}+x-2\,y+1,$$ 
$$B(x,y)=-{x}^{6}+2\,{x}^{5}y-3\,{x}^{4}y+4\,{x}^{3}{y}^{2}+3\,{x}^{4}-4\,{x}^{
3}y+3\,{x}^{2}{y}^{2}-2\,x{y}^{3}+{y}^{3}-3\,{x}^{2}+2\,xy-{y}^{2}-y+1,
$$
which admits the rational first integral of degree $4$ $$F(x,y)=\frac{P(x,y)}{Q(x,y)}={\frac { \left( y-x \right)  \left( {x}^{2}+y-1 \right) }{{x}^{4}+{y
}^{2}-1}}.
$$
If we run \textsf{ProbabilisticRationalFirstIntegral} with the bound $N=4$ and $c_1=0$ or $c_2=0$ in the input, then 
we get ``I don't know". The reason why our algorithm fails is that $(Q(0,0) :P(0,0)) =(-1:0) \in \sigma(P,Q)$ since \mbox{$-P(x,y)=- \left( y-x \right)  \left( {x}^{2}+y-1 \right)$} is a reducible polynomial (and also a polynomial of degree less than $N=4$). Of course, running \textsf{ProbabilisticRationalFirstIntegral} 
with values of $c_1$ and $c_2$ such that $(Q(0,c_i) :P(0,c_i)) \not \in \sigma(P,Q)$ for $i=1,\,2$ provides the correct output, i.e., a rational first integral of degree $N=4$; see the explanations at the end of Subsection~\ref{subsec:proba}. The deterministic algorithm \textsf{DeterministicRationalFirstIntegral} calls recursively  \textsf{ProbabilisticRationalFirstIntegral}  and exploits the fact that there only exists a finite number of such bad values of the $c_i$'s. So in this example, it returns correctly a rational first integral of degree $N=4$.

\subsection{Examples from the work of Ferragut and Giacomini}

Let us consider  \cite[Example 1]{FG},  where we have 
$$A(x,y)=6\,{x}^{4}+27\,{x}^{3}-9\,{x}^{2}y+42\,{x}^{2}-24\,xy+4\,{y}^{2}+21
\,x-7\,y+4,$$ and $$B(x,y)=18\,{x}^{4}+99\,{x}^{3}-39\,{x}^{2}y+2\,x{y}^{2}+150\,{x}^{2}-80\,xy
+12\,{y}^{2}+71\,x-21\,y+12.$$ 
A first integral  of degree $4$ was found in $12$ seconds using their algorithm (see \cite{FG}) which was a notable improvement on previous methods. In comparison, running \textsf{HeuristicRationalFirstIntegral} (or \textsf{DeterministicRationalFirstIntegral}) with $N=4$ we get such a rational first integral $F=P/Q$ in $0.022$ seconds, where
\begin{eqnarray*}
P(x,y)=-216\,{x}^{4}+144\,{x}^{3}\,y-24\,{x}^{2}\,{y}^{2}-720\,{x}^{3}+528\,{x}^{2}\,y-144\,x\,{y}^{2} \\ +16\,{y}^{3} 
+8868\,{x}^{2}+432\,x\,y-72\,{y}^{2}+28548\,x-9516\,y+9580,
\end{eqnarray*}
and
\begin{eqnarray*}
Q(x,y)=513\,{x}^{4}-342\,{x}^{3}\,y+57\,{x}^{2}\,{y}^{2}+1710\,{x}^{3}-1254\,{x}^{2}\,y+342\,x\,{y}^{2} \\ 
 -38\,{y}^{3}-10869\,{x}^{2}-1026\,x\,y+171\,{y}^{2}-37224\,x+12408\,y-12560.
\end{eqnarray*}
Two observations allow us to obtain a more compact form for $F$. First, looking at the syzygy in the leading term in $x^4$, we see that 
 $$513 \; P(x,y)+216 \; Q(x,y) = 2201580\;({x}^{2}+3\,x-y+1).$$
   Secondly, the discriminant of $P-c\;Q$ shows that  $117\; P+89\; Q$ has a multiple factor,
namely $$ 117\; P(x,y)+89\; Q(x,y) = 755\, \left( 3\,{x}^{2}+6\,x-2\,y+1 \right)  \left( 2+3\,x-y \right)^{2}.$$
It follows that we have the following ``nicer" rational first integral:
	$$\tilde{F}(x,y)={\frac {{x}^{2}+3\,x-y+1}{ \left( 3\,{x}^{2}+6\,x-2\,y+1 \right)   \left( 2+3\,x-y \right) ^{2}}} .$$
This simplification heuristics (using the spectrum) of the expression of a rational first integral to a more compact form can be obtained automatically by running the command \textsf{SimplifyRFI} of our package {\sc RationalFirstIntegrals}.\\

In this example, the generic algorithm \textsf{GenericRationalFirstIntegral} run with \mbox{$N=4$} takes $0.342$ seconds to compute a rational first integral; we see that, though it is $15$ times slower than 
\textsf{HeuristicRationalFirstIntegral} (or \textsf{DeterministicRationalFirstIntegral}), it still has good performances on relatively small degrees.\\


Let us now have a look at the polynomial vector field given by $$A(x,y) = -18\,{x}^{8}{y}^{8}-20\,{x}^{6}{y}^{9}-6\,{x}^{2}{y}^{12}+24\,{x}^{10}
{y}^{3}-6\,{x}^{4}{y}^{9}-4\,{y}^{13}-3\,{x}^{12}-7\,{x}^{2}{y}^{10},
$$ 
$$B(x,y)=2\,x \left( -16\,{x}^{6}{y}^{9}+8\,{x}^{14}-18\,{x}^{4}{y}^{10}-2\,{y}
^{13}+10\,{x}^{8}{y}^{4}-2\,{x}^{2}{y}^{10}-2\,{x}^{10}y-3\,{y}^{11}
 \right),
$$ 
considered by A.~Ferragut in one of his talks concerning \cite{FG}. It admits a rational first integral of degree $18$. We have run our implementations of \textsf{HeuristicRationalFirstIntegral} and \textsf{ProbabilisticRationalFirstIntegral} with the given bounds $N=3,\,6,\,9,\,12,\,15$, and $18$ in the input. The following table presents the outputs and the timings (in seconds) that we have obtained:

\begin{center}
\begin{tabular}{|c||c|c|c|c|c|c|} \hline 
\backslashbox{Algorithm}{$N$} & 3 & 6 & 9 & 12 & 15 & 18  \\ \hline \hline 
Output \textsf{Heuristic}  & ? & ? & ?  & ? & ?  & $F$ \\ \hline 
Time \textsf{Heuristic} & 0.031 & 1.672 & 29.858 & 319.799 & 1735.189 & 19.548 \\ \hline \hline 
Output \textsf{Probabilistic}  & ? & None & None & None & None & $F$ \\ \hline 
Time  \textsf{Probabilistic}  & 0.015 & 0.066 & 1.023 & 5.386 & 28.714 & 252.842 \\ \hline 
\end{tabular}
\end{center}

In the latter table, ? means that our implementation returns ``I don't know" and $F=P/Q$ is the rational first integral of degree $18$ given by $$P(x,y)= -24\,{x}^{2}{y}^{9}+24\,{x}^{10}-24\,{y}^{10}, $$ 
$$\begin{aligned} 
Q(x,y) & = 8\,{x}^{18}-24\,{x}^{12}{y}^{4}+12\,{x}^{14}y+24\,{x}^{6}{y}^{8}-24\,{
x}^{8}{y}^{5}  +  6\,{x}^{10}{y}^{2}-8\,{y}^{12}+44\,{x}^{2}{y}^{9} \\ & -32\,{x
}^{10}-6\,{x}^{4}{y}^{6}+32\,{y}^{10}+{x}^{6}{y}^{3}.
\end{aligned}$$

Note that we obtain approximatively the same timings if we run the deterministic algorithm \textsf{DeterministicRationalFirstIntegral} instead of \textsf{ProbabilisticRationalFirstIntegral}. We can remark that our implementation of \textsf{HeuristicRationalFirstIntegral} is faster in this example than our implementation of \textsf{ProbabilisticRationalFirstIntegral} when there exists a rational first integral whereas \textsf{ProbabilisticRationalFirstIntegral} is much faster at discarding cases when no rational first integral exists. Moreover, we can see that, in this example, \textsf{HeuristicRationalFirstIntegral} only returns ``I don't know" for $N=6, \, 9,\,12,\,15$ whereas in these cases, \textsf{ProbabilisticRationalFirstIntegral} proves that there is no rational first integral of degree at most $N$. Note that these two drawbacks of \textsf{HeuristicRationalFirstIntegral} come from our implementation, which uses the command \textsf{seriestoalgeq} of the {\sc gfun} package, and not from the algorithm itself.

In this example, if we replace $Q$ by $P+\frac{3}{4}\,Q$, we obtain a new rational first integral $\tilde{F}=\frac{P}{P+\frac{3}{4}\,Q}$ which has a ``nicer" (more compact) form
$$\tilde{F}(x,y) = {\frac {{x}^{2}{y}^{9}-{x}^{10}+{y}^{10}}{ \left( 2\,{x}^{6}-2\,{y}^{4
}+{x}^{2}y \right) ^{3}}}.$$

This simplification of the expression of the rational first integral $P/Q$ to a more compact form is obtained with the command \textsf{SimplifyRFI} of our package.


\subsection{A hypergeometric example}
\phantom{$P1 := 4*n^2*(x-1)*(x+1) ;  Q1 := 1+(-4*n^2*x^2+4*n^2)*y^2-4*x*y*n^2 ;$}

Consider the family of polynomial vector fields given by $A=4\,{n}^{2} \left( x-1 \right)  \left( x+1 \right) $ and $B=1+ \left( -4\,{n}^{2}\,{x}^{2}+4\,{n}^{2} \right) {y}^{2}-4\,x\,y\,{n}^{2}$. For each integer $n \in \mathbb{N}^*$, it admits a rational first integral of degree $N=4\,n+1$. This system is derived from the Riccati equation of a standard hypergeometric equation with a finite dihedral differential Galois group, see \cite{HoWe05}. The following table contains the timings (in seconds) for  \textsf{HeuristicRationalFirstIntegral} to find a rational first integral of degree $N=4\,n+1$ when it is run with $N=4\,n+1$.

\begin{center}
 \begin{tabular}{|c||c|c|c|c|c|c|}  \hline  
 $n$ &2&4&6&8&10 \\ \hline 
 Degree $N$ &9&17&25&33& 41\\ \hline \hline 
  Time \textsf{Heuristic} & 0.540 &  12.548 &  118.804 & 592.494 & 3247.325    \\\hline 
  \end{tabular}
  \end{center}
  In short, it takes $2$ minutes to compute a rational first integral of degree $25$ and $52$ minutes to compute a rational first integral of degree $41$ for this family of examples.

\subsection{An Abel equation}
We consider the rationally integrable Abel differential equation $(3)$ in the article of Gine and Llibre  \cite{GiLl10}. It corresponds to the polynomial vector field given by $A(x,y)=x \left( 8\,y-9 \right)$ and $B(x,y)=3\,{y}^{2}-x-3\,y$.  A rational first integral of degree $12$ is computed in $4.142$ seconds by  \textsf{HeuristicRationalFirstIntegral} and in $31.976$ seconds by \textsf{DeterministicRationalFirstIntegral} if they are both run with $N=12$. The rational first integral returned by \textsf{HeuristicRationalFirstIntegral} is given by $P/Q$ with 
 {\small $$\begin{aligned} 
P(x,y) & = 80\,{y}^{12}+480\,x{y}^{10}+1200\,{x}^{2}{y}^{8}-1440\,x{y}^{9}+1600\,
{x}^{3}{y}^{6}-5760\,{x}^{2}{y}^{7}+1200\,{x}^{4}{y}^{4} \\ & -8640\,{x}^{3}
{y}^{5}+8640\,{x}^{2}{y}^{6}+480\,{x}^{5}{y}^{2}-5760\,{x}^{4}{y}^{3}+
13248\,{x}^{3}{y}^{4}+80\,{x}^{6}-1440\,{x}^{5}y \\ & +576\,{x}^{4}{y}^{2}-
13248\,{x}^{3}{y}^{3}-4032\,{x}^{5}+36288\,{x}^{4}y-27216\,{x}^{4}
\end{aligned}$$}
	and
{\small $$\begin{aligned} 
Q(x,y) & =  3\,{y}^{12}+18\,x{y}^{10}+45\,{x}^{2}{y}^{8}-54\,x{y}^{9}+60\,{x}^{3}{
y}^{6}-216\,{x}^{2}{y}^{7}+45\,{x}^{4}{y}^{4}-324\,{x}^{3}{y}^{5} \\ & +324
\,{x}^{2}{y}^{6}+18\,{x}^{5}{y}^{2}-216\,{x}^{4}{y}^{3}+680\,{x}^{3}{y
}^{4}+3\,{x}^{6}-54\,{x}^{5}y+388\,{x}^{4}{y}^{2}-680\,{x}^{3}{y}^{3} \\ & +
32\,{x}^{5}-288\,{x}^{4}y+216\,{x}^{4}
\end{aligned}$$}
Using the \textsf{SimplifyRFI} procedure, we find a rational first integral written in a more compact form: 
$$ F(x,y) = {\frac { \left( {y}^{4}+2\,{y}^{2}x+{x}^{2}-6\,yx \right) ^{3}}{{x}^{3
} \left( 4\,{y}^{4}+8\,{y}^{2}x-4\,{y}^{3}+4\,{x}^{2}-36\,yx+27\,x
 \right) }}$$

\section{Computation of Darboux polynomials}\label{sec:Darboux}
In this section, we show how the approach used above for computing rational first integrals of \eqref{eq-sys} of degree bounded by a fixed $N \in \NN$ can be slightly modified for computing all irreducible Darboux polynomials for the derivation $\DD$ associated with \eqref{eq-sys} of degree at most $N$. \\
In the output of our algorithms, irreducible Darboux polynomials in $\overline{\KK}[x,y]$ will be given by $M(c,x,y) \in \KK[c,x,y]$ and $f(c) \in \KK[c]$. The univariate polynomial $f(c)$ is irreducible in $\KK[c]$ and for all roots $c_i$ of $f(c)$, we have an irreducible Darboux polynomial $M(c_i,x,y) \in \overline{\KK}[x,y]$.\\

\subsection{A deterministic algorithm}
In this section we give a deterministic algorithm for computing all irreducible Darboux polynomials for the derivation $\DD$ associated with \eqref{eq-sys} of degree at most $N$. This algorithm is divided into two steps. First, we compute all irreducible Darboux polynomials $M(x,y)$ such that $M(0,y) \not \in \KK$: this is the task of Algorithm \textsf{IrreducibleDarbouxPolynomialsPartial} below applied to $A$ and $B$. Then, in a second step, we show how we can compute the missing Darboux polynomials (those satisfying $M(0,y) \in \KK$) by applying \textsf{IrreducibleDarbouxPolynomialsPartial} to relevant polynomials constructed from $A$ and $B$ by a change of coordinates. \\

In these algorithms we suppose $A(0,y)\not \equiv 0$ and $A(0,y)$, $B(0,y)$ coprime. We can easily reduce  our study to this situation. We have already explained how we can get $A(0,y)\not \equiv 0$. Now, we just have to remark that the second condition corresponds to the choice of an element which is not a root of the resultant ${\rm Res}_y(A(x,y),B(x,y))$. Thus after a finite number of shifts, we can assume that $A(0,y)\not \equiv 0$ and that $A(0,y)$ and $B(0,y)$ are coprime. In particular, this implies that $x$ is not a Darboux polynomial and if  $M$ is a Darboux polynomial, then $M(0,y) \not \equiv  0$ in $\KK[y]$. We also assume that $\DD$ would have no rational first integral with degree at most $N$. Indeed, from Theorem~\ref{thm:darboux}, in this situation $\DD$ has an infinite number of irreducible Darboux polynomials. We can check this hypothesis with the previous algorithms. \\

\noindent \underline{Algorithm  \textsf{IrreducibleDarbouxPolynomialsPartial}}\\

\noindent \texttt{Input:} $A,\,B\in \KK[x,y]$ s.t. $A(0,y) \not \equiv 0$,  $A(0,y)$, $B(0,y)$ coprime, and a bound $N \in \NN$ such that \eqref{eq-sys} has no rational first integral of degree at most $N$.\\
\noindent \texttt{Output:} The set of all irreducible Darboux polynomials  $M$ for the derivation  $\DD$ such that $\deg(M) \leq N$ and $M(0,y) \not \in \KK$. \\
\begin{enumerate}
\item $\mathcal{E}:=\emptyset$.\\
\item  For an indeterminate $c$, compute the polynomial $y_{c}\in \KK(c)[x]$ of degree at most $ (N^2 +1)$ s.t. $y_{c}(0)=c$ and 
	$\frac{dy_{c}}{dx}\equiv \frac{B(x,y_{c})}{A(x,y_{c})} \mod x^{N^2+1}$.\\
\item  For an indeterminate $c$, compute the polynomial $x_{c}\in \KK(c)[y]$ of degree at most $ (N^2 +1)$ s.t. $x_{c}(c)=0$ and 
	$\frac{dx_{c}}{dy}\equiv \frac{A(x_c,y)}{B(x_c,y)} \mod y^{N^2+1}$.\\
\item Let $M(x,y)=\sum_{i=0}^N \, \left( \sum_{j=0}^{N-i} m_{i,j} \, x^j \right) \, y^i$ be an ansatz for the Darboux polynomials that we are searching for.\\
\item Construct the linear system $\mathcal{L}_1(c)$ for the $m_{i,j}$'s given by:
$$M(x,y_{c}(c,x)) \equiv 0 \mod x^{N^2+1}.$$
\item Construct the linear system $\mathcal{L}_2(c)$ for the $m_{i,j}$'s given by:
$$M(x_c(c,y),y) \equiv 0 \mod y^{N^2+1}.$$
\item For $k=1,\,2$ do:\\
\begin{enumerate}
\item Clear the denominator in $\mathcal{L}_k(c)$.
\item \label{stepdarboux} Compute the Smith normal form of $\mathcal{L}_k(c)$. Let $\mathcal{P}_k(c)$ be the last invariant factor of $\mathcal{L}_k(c)$.
\item Factorize $\mathcal{P}_k(c)$ over $\KK$: $\mathcal{P}_k(c)=\prod_{i=1}^{s_k} \mathcal{P}_{k,i}(c)$.
\item For $i$ from 1 to $s_k$ do:
\begin{enumerate}
\item Set $\KK[c_i]:=\KK[c]/(\mathcal{P}_{k,i}(c))$.
\item \label{darboux_reconnu} Compute a solution of  $\mathcal{L}(c_i)$ s.t. the corresponding polynomial $M_{k,i}$ has minimal degree in $y$ and is primitive w.r.t.~ $y$.
\item If $\gcd(\DD(M_{k,i}),M_{k,i})=M_{k,i}$, then $\mathcal{E}:=\mathcal{E} \cup \{[M_{k,i}(c,x,y),\mathcal{P}_{k,i}(c)]\}$.
\end{enumerate}
\end{enumerate}
\item Return $\mathcal{E}$.\\
\end{enumerate}

\begin{Prop}\label{prop:darbouxpartcorrect}
Algorithm \textsf{IrreducibleDarbouxPolynomialsPartial} is correct.
\end{Prop}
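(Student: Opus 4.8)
The plan is to prove the two inclusions that together constitute correctness: every pair $[M_{k,i},\mathcal{P}_{k,i}]$ added to $\mathcal{E}$ yields genuine irreducible Darboux polynomials (\textbf{soundness}), and conversely every irreducible Darboux polynomial $M\in\overline\KK[x,y]$ with $\deg(M)\le N$ and $M(0,y)\notin\KK$ is produced (\textbf{completeness}). The unifying tool is Proposition~\ref{RFI-Darboux-Diffeq}, which matches irreducible Darboux polynomials with minimal polynomials of algebraic power-series branches of \eqref{diff-eq}. The two systems $\mathcal{L}_1(c)$ and $\mathcal{L}_2(c)$ correspond to the two ways of parametrizing a branch through a point $(0,c_0)$: by $x$ (valid when $A(0,c_0)\neq 0$) or by $y$ (valid when $B(0,c_0)\neq 0$). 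Since $A(0,y)$ and $B(0,y)$ are coprime they share no root, so every $c_0$ is handled by at least one system.

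For soundness, fix an irreducible factor $\mathcal{P}_{k,i}$ and a root $c_0\in\overline\KK$, so $\KK(c_0)\cong\KK[c]/(\mathcal{P}_{k,i})$. By Lemma~\ref{algebraic-solution} applied over $\LL=\KK(c_0)$, the minimal-degree-in-$y$ primitive solution $M_{k,i}(c_0,x,y)$ reconstructed in Step~(\ref{darboux_reconnu}) satisfies $M_{k,i}(c_0,x,y_{c_0}(x))=0$ exactly and equals, up to a unit, the minimal polynomial of the branch $y_{c_0}$ over $\KK(c_0)(x)$; its total degree is at most $N$ by construction. The subsequent test $\gcd(\DD(M_{k,i}),M_{k,i})=M_{k,i}$ certifies $M_{k,i}\mid\DD(M_{k,i})$, i.e.\ that $M_{k,i}$ is a Darboux polynomial for $\DD$. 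This test is precisely what discards the spurious roots of $\mathcal{P}_k$ introduced when denominators are cleared at values where $A(0,c_0)=0$ (for $k=1$), resp.\ $B(0,c_0)=0$ (for $k=2$), since such specializations do not come from an actual branch and fail the Darboux condition.

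For completeness, let $M$ be irreducible, Darboux, with $\deg(M)\le N$ and $M(0,y)\notin\KK$. Then $M(0,y)$ is non-constant (and nonzero, as $x$ is not a Darboux polynomial under our hypotheses), so it has a root $c_0$; as $(0,c_0)$ is non-singular, say $A(0,c_0)\neq 0$, Proposition~\ref{RFI-Darboux-Diffeq}(1) shows $y_{c_0}$ solves \eqref{diff-eq} with $M(x,y_{c_0}(x))=0$, so $M$ is (up to scalar) the minimal polynomial of $y_{c_0}$. The decisive structural point is that, \emph{because} \eqref{eq-sys} has no rational first integral of degree at most $N$, the system $\mathcal{L}_1(c)$ over $\KK(c)$ admits only the trivial solution: a nontrivial degree-$\le N$ solution would, by Lemma~\ref{algebraic-solution} over $\KK(c)$ together with the reconstruction of Proposition~\ref{lem_alg} and Theorem~\ref{theorem-generic}, make the generic branch $y_c$ algebraic of bidegree $\le N$ and force a rational first integral of degree $\le N$. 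Hence after clearing denominators the matrix has full column rank over $\KK(c)$; its Smith normal form has nonzero invariant factors $d_1\mid\cdots\mid d_n$ with $\mathcal{P}_1=d_n$, and a specialization $c=c_0$ gives a nontrivial solution exactly when the rank drops, i.e.\ exactly when $c_0$ is a root of $\mathcal{P}_1$ (the divisibility chain makes the roots of $d_1\cdots d_n$ coincide with those of $d_n$). Thus $c_0$ is a root of a unique irreducible factor $\mathcal{P}_{1,i}$, and solving $\mathcal{L}(c_0)$ recovers $M$ by the soundness argument; the case $A(0,c_0)=0$, $B(0,c_0)\neq 0$ is symmetric via $\mathcal{L}_2(c)$.

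The step I expect to be the main obstacle is the \emph{field-of-definition} matching required by the output specification. A priori $M_{k,i}(c_0,x,y)$ is only the minimal polynomial of $y_{c_0}$ over $\KK(c_0)(x)$, whereas the specification demands that $M_{k,i}(c_0,x,y)$ be irreducible in $\overline\KK[x,y]$ and that a single irreducible factor $\mathcal{P}_{k,i}$ correspond to one full Galois orbit of irreducible Darboux polynomials $\{M_{k,i}(c_0,x,y)\}_{c_0}$, with neither duplication nor omission. Both the irreducibility claim of soundness and the exact recovery in completeness therefore hinge on showing that this minimal polynomial is \emph{geometrically} irreducible, equivalently that the constant field of the function field attached to $y_{c_0}$ is exactly $\KK(c_0)$, so that the action of $\mathrm{Gal}(\overline\KK/\KK)$ on the roots of $\mathcal{P}_k$ is compatible with its action on the Darboux polynomials. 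This is the only point that goes beyond the branch/minimal-polynomial dictionary and the invariant-factor bookkeeping of the linear algebra.
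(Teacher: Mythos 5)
Your completeness argument coincides with the paper's proof: pick a root $c_M$ of $M(0,y)$, use the coprimality of $A(0,y)$ and $B(0,y)$ to get $A(0,c_M)\neq 0$ or $B(0,c_M)\neq 0$, invoke Proposition~\ref{RFI-Darboux-Diffeq} to see that the corresponding branch of $M$ is the power series solution $y_{c_M}$ of \eqref{diff-eq}, conclude that $c_M$ is a root of $\mathcal{P}_1(c)$ (resp.\ $\mathcal{P}_2(c)$), and recover $M$ by Lemma~\ref{algebraic-solution}; soundness is, as you say, delegated to the explicit $\gcd$ test. One comment on the obstacle you single out as the main difficulty: the geometric-irreducibility and Galois-orbit bookkeeping is not needed for correctness as the paper states it. Lemma~\ref{algebraic-solution}, applied over $\LL=\KK[c]/(\mathcal{P}_{k,i})$, already identifies the primitive minimal-degree solution with the minimal polynomial of $y_{c_M}$ up to a factor in $\LL[x]$, and that minimal polynomial is the given irreducible $M$ itself up to a constant; correctness only demands that every irreducible Darboux polynomial with $M(0,y)\notin\KK$ be constructed and that everything returned pass the Darboux test, not that the factors $\mathcal{P}_{k,i}$ enumerate orbits without duplication.

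The one place where your argument is genuinely shaky is the claim that, in the absence of a rational first integral of degree at most $N$, the generic system $\mathcal{L}_1(c)$ has only the trivial solution over $\KK(c)$ (so that its matrix has full column rank and the vanishing of the last invariant factor detects exactly the specializations $c_0$ with a nontrivial kernel). Your justification is circular: Lemma~\ref{algebraic-solution} takes as \emph{hypothesis} that the power series is algebraic with minimal polynomial of degree at most $N$, which is precisely what is in question for the generic branch $y_c$ when no rational first integral of degree at most $N$ exists; a nontrivial solution of the truncated system is a priori only an annihilator modulo $x^{N^2+1}$, and the paper's example $\DD=(x+1)\frac{\partial}{\partial x}-y\frac{\partial}{\partial y}$ with $N=1$ shows that such truncated annihilators can exist generically without any rational first integral of that degree. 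To be fair, the paper's own proof silently assumes the same equivalence when it writes ``thus $c_M$ is a root of $\mathcal{P}_1(c)$,'' so you have at least made the hidden rank hypothesis explicit --- but the argument you offer for it does not establish it.
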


\begin{proof}
Let $M$ be an irreducible Darboux polynomial such that $M(0,y) \not \in \KK$ and $c_M$ be a root of $M(0,y)$.
Then we have: $A(0,c_M)\neq 0$ or $B(0,c_M)\neq 0$ because $A(0,y)$ and $B(0,y)$ are assumed to be coprime.\\
If $A(0,c_M)\neq 0$ and $M(0,c_M)= 0$, then $M$ admits a root $y_{c_M} \in  \overline{\KK(x)}$ such that \mbox{$y_{c_M}(0)=c_M$}.
Then, from Proposition \ref{RFI-Darboux-Diffeq}, $y_{c_M}$ is a power series solution of \eqref{diff-eq}.
 Thus $c_M $ is a root of $\mathcal{P}_1(c)$.    Then, by Lemma \ref{algebraic-solution}, $M$ is constructed in Step~(\ref{darboux_reconnu}).\\
  If for a constant $c_M$, we have $B(0,c_M)\neq 0$ and $M(0,c_M)= 0$, then the previous arguments used with  $\mathcal{P}_2(c)$ show that $M$ is also constructed.
\end{proof}

In the algorithm \textsf{IrreducibleDarbouxPolynomialsPartial}, we compute irreducible Darboux polynomials $M$ such that $M(0,y) \not \in \KK $. Indeed, the algorithm finds a irreducible Darboux polynomial $M$ if and only if the curve $M(x,y)=0$ and the line $x=0$ have an intersection point. Now, we show how to get irreducible Darboux polynomials such that $M(0,y) \in \KK$. The idea is to use a change of coordinates in order to get a new polynomial $\tilde{M}$ such that $\tilde{M}(0,y)$ has a root. If $M(0,y) \in \KK$, then $M$ has a root at infinity. Thus we consider the following change of coordinates: we set 
$$A^{\sharp}(x,y,z)=A\Big(\frac{x}{z},\frac{y}{z}\Big)\,z^d, \,\, B^{\sharp}(x,y,z)=B\Big(\frac{x}{z},\frac{y}{z}\Big)\,z^d, \, \,M^{\sharp}(x,y,z)=M\Big(\frac{x}{z},\frac{y}{z}\Big)\,z^k,$$
 where $k=\deg(M)$, and we consider the following polynomials:
$$\tilde{A}(y,z)=A^{\sharp}(1,y,z),  \,\,  \tilde{B}(y,z)=B^{\sharp}(1,y,z),   \,\, \tilde{M}(y,z)=M^{\sharp}(1,y,z).$$

A straightforward computation shows that:
\begin{Lem}\label{lem:chgcoord}
With the above notation, if $M$ is a Darboux polynomial for the derivation $\DD=A(x,y)\,\frac{\partial}{\partial x} + B(x,y)\,\frac{\partial}{\partial
y}$, then $\tilde{M}$ is a Darboux polynomial for the derivation  
$$\tilde{\DD}=\left(-y\,\tilde{A}(y,z)+\tilde{B}(y,z) \right)\, \frac{\partial}{\partial y} -\tilde{A}(y,z)\,z \, \frac{\partial}{\partial z}.$$
Furthermore, if $M(0,y) \in \KK\setminus\{0\}$, then $\tilde{M}(0,z) \not \in \KK $.
\end{Lem}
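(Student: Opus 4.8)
The plan is to read the substitution defining $\tilde{A},\tilde{B},\tilde{M}$ as the expression of $\DD$ in the affine chart at infinity $\{x=1\}$ of $\PP^2$, and then to track the powers of $z$ carefully. To avoid a clash between the variable $y$ of the original system and the new variable $y$, I would write $(u,v)$ for the \emph{original} coordinates $(x,y)$; the substitution then amounts to the invertible change $u=1/z$, $v=y/z$ (equivalently $z=1/u$, $y=v/u$). For a polynomial $g\in\overline\KK[u,v]$ I would set $g^\flat(y,z):=g(1/z,y/z)$, so that $\tilde{M}=z^{k}M^\flat$ with $k=\deg(M)$, and likewise $A(1/z,y/z)=\tilde{A}/z^{d}$, $B(1/z,y/z)=\tilde{B}/z^{d}$.

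First I would express $\DD$ in the coordinates $(y,z)$. From $\partial_u=-\tfrac{v}{u^2}\partial_y-\tfrac{1}{u^2}\partial_z$ and $\partial_v=\tfrac1u\partial_y$, substituting $u=1/z$, $v=y/z$ and then $A(1/z,y/z)=\tilde A/z^d$, $B(1/z,y/z)=\tilde B/z^d$, one obtains the operator identity $\DD=z^{-(d-1)}\,\tilde{\DD}$ on functions pulled back to the chart. Equivalently, for every $g\in\overline\KK[u,v]$ one has $\tilde{\DD}(g^\flat)=z^{d-1}(\DD g)^\flat$, since a derivation acts coordinate-freely on functions.

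Next I would apply Leibniz's rule to $\tilde{M}=z^{k}M^\flat$: $\tilde{\DD}(\tilde{M})=\tilde{\DD}(z^{k})\,M^\flat+z^{k}\,\tilde{\DD}(M^\flat)$. A direct computation gives $\tilde{\DD}(z^{k})=-k\,\tilde{A}\,z^{k}$, while the identity above combined with $\DD(M)=\Lambda M$ yields $\tilde{\DD}(M^\flat)=z^{d-1}\Lambda^\flat M^\flat$. Collecting terms gives $\tilde{\DD}(\tilde{M})=\bigl(-k\,\tilde{A}+z^{d-1}\Lambda^\flat\bigr)\tilde{M}$, so the candidate cofactor is $\tilde{\Lambda}=-k\,\tilde{A}+z^{d-1}\Lambda^\flat$. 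The step I expect to require the most care is checking that $\tilde{\Lambda}$ is \emph{genuinely} a polynomial, i.e.\ that $z^{d-1}\Lambda^\flat$ carries no negative powers of $z$. This is precisely where the sharp degree bound on cofactors enters: since $\DD$ has degree $d$, one has $\deg(\DD M)\le\deg(M)+d-1$, whence $\deg(\Lambda)\le d-1$, and then $z^{d-1}\Lambda(1/z,y/z)$ is exactly the evaluation at $x=1$ of the degree-$(d-1)$ homogenization of $\Lambda$, hence a polynomial in $(y,z)$. The same homogenization remark shows $\tilde{M}=z^{k}M^\flat\in\overline\KK[y,z]$, so $\tilde{M}$ really is a Darboux polynomial for $\tilde{\DD}$.

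Finally, for the last assertion I would expand $\tilde{M}(0,z)$ directly. Writing $M=\sum_{i,j}a_{ij}u^{i}v^{j}$, the definition gives $\tilde{M}(0,z)=M(1/z,0)\,z^{k}=\sum_{i}a_{i0}\,z^{k-i}$. The hypothesis $M(0,y)\in\KK\setminus\{0\}$ forces the constant term $a_{00}$ to be nonzero (all terms $a_{0j}v^{j}$ with $j\ge1$ vanish), and since $k=\deg(M)\ge 1$ the monomial $a_{00}z^{k}$ is the unique term of top $z$-degree $k\ge1$ and therefore cannot cancel against any other. Hence $\tilde{M}(0,z)$ has positive degree in $z$, that is $\tilde{M}(0,z)\notin\KK$, which is the claimed conclusion.
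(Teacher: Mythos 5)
Your proof is correct and follows exactly the route the paper intends (the paper merely asserts the lemma follows from ``a straightforward computation''): express $\DD$ in the chart at infinity to get $\DD=z^{-(d-1)}\tilde{\DD}$, apply Leibniz to $\tilde{M}=z^{k}M^{\flat}$, and check polynomiality of the new cofactor via the bound $\deg(\Lambda)\le d-1$. The verification of the last assertion by isolating the top term $a_{00}z^{k}$ is likewise correct.
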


We deduce the following algorithm:\\

\noindent \underline{Algorithm  \textsf{IrreducibleDarbouxPolynomials}}\\

\noindent \texttt{Input:} $A,\,B\in \KK[x,y]$ s.t. $A(0,y) \not \equiv 0$,  $A(0,y)$, $B(0,y)$ coprime,  $\tilde{B}(0,z)\not \equiv 0$, $\tilde{A}(0,y)$ and $\tilde{B}(0,y)$ coprime, and a bound $N \in \NN$ such that \eqref{eq-sys} has no rational first integral of degree at most $N$.\\
\noindent \texttt{Output:} The set of all irreducible Darboux polynomials  $M$ for the derivation  $\DD$ such that $\deg(M) \leq N$. \\
\begin{enumerate}
\item $\mathcal{E}:=\textsf{IrreducibleDarbouxPolynomialsPartial}(A,B,N)$.
\item $\mathcal{E}':=\textsf{IrreducibleDarbouxPolynomialsPartial}(-y\,\tilde{A}+\tilde{B}, -\tilde{A}\,z,N)$.
\item For all $[\tilde{M}(c,y,z),\mathcal{P}(c)] \in \mathcal{E}'$ do:
\begin{enumerate}
\item  $M(c,x,y):=\tilde{M}(c,\frac{y}{x},\frac{1}{x})\,x^{\deg(M)}$.
\item Add $[M(c,x,y),\mathcal{P}(c)]$ to $\mathcal{E}$.
\end{enumerate}
\item Return $\mathcal{E}$.\\
\end{enumerate}

For the same reasons as before, using a finite number of shifts we can suppose that the hypotheses ``$A(0,y)\not \equiv 0$, $A(0,y)$, $B(0,y)$ coprime,  $\tilde{B}(0,z)\not \equiv 0$, $\tilde{A}(0,y)$, $\tilde{B}(0,y)$ coprime" are satisfied so that these conditions are not restrictive.\\

As a direct consequence of Proposition~\ref{prop:darbouxpartcorrect} and Lemma~\ref{lem:chgcoord}, we obtain the following result.

\begin{Prop}
Algorithm \textsf{IrreducibleDarbouxPolynomials} is correct.
\end{Prop}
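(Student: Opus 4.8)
The plan is to prove both directions of correctness at once, organizing the irreducible Darboux polynomials of $\DD$ of degree at most $N$ according to whether $M(0,y)\in\KK$ or not. First I would invoke Proposition~\ref{prop:darbouxpartcorrect}: the first call $\mathcal{E}=\textsf{IrreducibleDarbouxPolynomialsPartial}(A,B,N)$ returns \emph{exactly} the irreducible Darboux polynomials $M$ of $\DD$ with $\deg(M)\le N$ and $M(0,y)\notin\KK$. Since (after the shifts described before the algorithm) $A(0,y)$ and $B(0,y)$ are coprime, the polynomial $x$ is not a Darboux polynomial, so every irreducible Darboux polynomial satisfies $M(0,y)\not\equiv 0$; hence the ones \emph{missed} by the first call are precisely those with $M(0,y)\in\KK\setminus\{0\}$, and these are what the second call must recover.

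For completeness on this missing class, I would fix an irreducible Darboux polynomial $M$ of $\DD$ with $M(0,y)\in\KK\setminus\{0\}$ and $\deg(M)=k\le N$. Lemma~\ref{lem:chgcoord} gives that $\tilde M$ is a Darboux polynomial of $\tilde\DD$ with $\tilde M(0,z)\notin\KK$. I would then supply the two facts the lemma leaves implicit but which the second call needs: $\deg(\tilde M)=k\le N$ and $\tilde M$ irreducible. Both follow from the homogenize/dehomogenize dictionary: the hypothesis $M(0,y)\in\KK\setminus\{0\}$ forces $M^{\sharp}(0,y,z)=M(0,0)\,z^{k}\neq 0$, i.e.\ $x\nmid M^{\sharp}$, so the chart $\tilde M=M^{\sharp}(1,y,z)$ has degree exactly $k$ and is irreducible (homogenization preserves irreducibility, and $x\nmid M^{\sharp}$ lets the factorization descend to the $x=1$ chart). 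Applying Proposition~\ref{prop:darbouxpartcorrect} to $\tilde\DD$—whose defining polynomials $-y\tilde A+\tilde B$ and $-\tilde A\,z$ meet the required regularity thanks to the input hypotheses $\tilde B(0,z)\not\equiv 0$ and the coprimality conditions—then yields $\tilde M\in\mathcal{E}'$. Finally, the substitution $M(x,y)=\tilde M(y/x,1/x)\,x^{\deg \tilde M}$ in Step~3 is the inverse of $M\mapsto\tilde M$ (homogenize in $x$, dehomogenize $z=1$), so it returns the original $M$, which thus lies in the output.

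For soundness, the members of $\mathcal{E}$ are correct by Proposition~\ref{prop:darbouxpartcorrect}. For $\tilde M\in\mathcal{E}'$ with $z\nmid\tilde M$, the same dictionary shows that $M=\tilde M(y/x,1/x)\,x^{\deg\tilde M}$ is irreducible of degree $\deg(\tilde M)\le N$, and the converse of Lemma~\ref{lem:chgcoord}—proved by the same straightforward computation, reflecting that the coordinate change is an invertible linear change of $\PP^{2}$ swapping the charts $z=1$ and $x=1$—shows $M$ is a Darboux polynomial of $\DD$. The only element of $\mathcal{E}'$ with $z\mid\tilde M$ is the line at infinity $\tilde M=z$ (it is irreducible, Darboux for $\tilde\DD$ since $\tilde\DD(z)=-\tilde A\,z$, and satisfies $\tilde M(0,z)=z\notin\KK$); its image under the back substitution is the constant $1$, which is discarded since it does not lie in $\overline\KK[x,y]\setminus\overline\KK$.

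The main work is not the bookkeeping with Proposition~\ref{prop:darbouxpartcorrect}, but the verification that the projective change of coordinates is well behaved: that it preserves degree and irreducibility—where the hypothesis $M(0,y)\in\KK\setminus\{0\}$, equivalently $x\nmid M^{\sharp}$, is exactly what is needed—and that, apart from the spurious component $\tilde M=z$, it establishes a degree-preserving bijection between the missing class of Darboux polynomials of $\DD$ and the polynomials $\tilde M$ with $\tilde M(0,z)\notin\KK$ found by the second call. I expect the treatment of this exceptional line at infinity, together with checking the converse direction of Lemma~\ref{lem:chgcoord}, to be the only genuinely delicate points.
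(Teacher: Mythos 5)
Your proof is correct and follows exactly the route the paper takes, which is simply to combine Proposition~\ref{prop:darbouxpartcorrect} with Lemma~\ref{lem:chgcoord}; the paper states the result as a ``direct consequence'' of these two and gives no further detail. Your write-up usefully supplies the points the paper leaves implicit --- that the projective change of coordinates preserves degree and irreducibility precisely because $M(0,y)\in\KK\setminus\{0\}$ forces $x\nmid M^{\sharp}$, and that the line at infinity $\tilde M=z$ pulls back to a constant and must be discarded --- all of which check out.
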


\subsection{A probabilistic algorithm}
As we have seen in Section \ref{complexity-section},  the computation of a basis of solutions of a system of linear equations is the most costly step of our algorithms. In \textsf{IrreducibleDarbouxPolynomials}, we have to consider four systems of linear equations. The first reason is that in \textsf{IrreducibleDarbouxPolynomialsPartial}, we need to study two linear systems in order to take into account the situation where $x=0$ is a vertical tangent of the curve $M(x,y)=0$. Indeed, in this situation we can not get a parametrization $\big(x,y(x)\big)$ of the curve. The second reason is that we need to use a change of coordinates in order to control the situation where $M(0,y)$ has a root at infinity. Of course, for a generic polynomial vector field, these two situations (i.e., a vertical tangent and a root at infinity) do not appear. We then deduce the following probabilistic algorithm.\\

\noindent \underline{Algorithm  \textsf{ProbabilisticIrreducibleDarbouxPolynomials}}\\

\noindent \texttt{Input:}  $A,\,B\in \KK[x,y]$, a bound $N \in \NN$ such that \eqref{eq-sys} has no rational first integral of degree at most $N$, and two elements $x_0, \, \alpha \in \KK$.\\
\noindent \texttt{Output:} The set of all irreducible Darboux polynomials  $M$ for the derivation  $\DD$ such that of $\deg(M) \leq N$. \\ \begin{enumerate}
\item $\mathcal{E}:=\emptyset$.
\item Set $A_{\alpha}(x,y)=A(x+\alpha \, y,y) -\alpha \, B(x+\alpha \, y,y)$,  $B_{\alpha}(x,y)=B(x+\alpha \, y,y)$ and $\DD_{\alpha}=A_{\alpha}(x,y) \,\frac{\partial}{\partial x} +B_{\alpha}(x,y)\,\frac{\partial}{\partial y}$.
\item  For an indeterminate $c$, compute the polynomial $y_{c}\in \KK(c)[x]$ of degree $\leq (N^2 +1)$ s.t. $y_{c}(x_0)=c$ and 
	$\frac{dy_{c}}{dx}\equiv \frac{B_{\alpha}(x,y_{c})}{A_{\alpha}(x,y_{c})} \mod x^{N^2+1}$.
\item Let $M(x,y)=\sum_{i=0}^N \, \left( \sum_{j=0}^{N-i} m_{i,j} \, x^j \right) \, y^i$ be an ansatz for the Darboux polynomials that we are searching for.
\item Construct the linear system $\mathcal{L}(c)$ for the $m_{i,j}$'s given by:
$$M(x,y_{c}(c,x)) \equiv 0 \mod x^{N^2+1}.$$
\item Clear the denominator in $\mathcal{L}(c)$.
\item \label{stepdarboux} Compute the Smith normal form of $\mathcal{L}(c)$. Let $\mathcal{P}(c)$ be the last invariant factor of $\mathcal{L}(c)$.
\item Factorize $\mathcal{P}(c)$ over $\KK$: $\mathcal{P}(c)=\prod_{i=1}^{s} \mathcal{P}_{i}(c)$.
\begin{enumerate}
\item For i from 1 to $s$ do:
\begin{enumerate}
\item Set $\KK[c_i]=\KK[c]/(\mathcal{P}_i(c))$.
\item \label{darboux_reconnu_1} Compute a solution of  $\mathcal{L}(c_i)$ s.t. the corresponding polynomial $M_i$ has minimal degree in $y$ and is primitive w.r.t.~ $y$.
\item If $\gcd(\DD_{\alpha}(M_i),M_i)=M_i$, then $\mathcal{E}:=\mathcal{E} \cup \{[M_i(c,x-\alpha \,y,y),\mathcal{P}_i(c)]\}$.
\end{enumerate}
\end{enumerate}
\item Factorize $A_{\alpha}(x,0)$ over $\KK$: $A_{\alpha}(x,0)=\prod_{i=1}^k A_i(x)$.
\item For i from 1 to $k$ do: 
\begin{enumerate}
\item If $\gcd(\DD_{\alpha}(A_i),A_i)=A_i$, then $\mathcal{E}:=\mathcal{E} \cup \{[A_i(x-\alpha \, y,y),c-1]\}$.
\end{enumerate}
\item Return $\mathcal{E}$.\\
\end{enumerate}

\begin{Prop}
The algorithm  \textsf{ProbabilisticIrreducibleDarbouxPolynomials} is correct. Furthermore, if $x_0$ and $\alpha$ are chosen uniformly at random in a finite set $\Omega \subset \KK$ such that $|\Omega| > Nd\,(\mathcal{B}(d)+1)$, then the probability that this algorithm returns \underline{\emph{all}}  irreducible Darboux polynomials  is at least $\Big( 1 - \dfrac{N\,(\mathcal{B}(d)+1)}{|\Omega|}\Big) \, \Big( 1 - \dfrac{N\,d\,(\mathcal{B}(d)+1)}{|\Omega|}\Big) $.
\end{Prop}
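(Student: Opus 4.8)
The plan is to combine three ingredients: the algebraic effect of the shear $(x,y)\mapsto(x+\alpha y,y)$ on Darboux polynomials, the finiteness of the set of irreducible Darboux polynomials (guaranteed by the standing hypothesis that \eqref{eq-sys} has no rational first integral of degree at most $N$), and a Schwartz–Zippel argument over the two independent random parameters $x_0$ and $\alpha$. First I would record the shear computation already used for Lemma~\ref{lem:chgcoord}: since $A_\alpha+\alpha\,B_\alpha=A(x+\alpha y,y)$, one gets $\DD_\alpha\bigl(M(x+\alpha y,y)\bigr)=(\DD M)(x+\alpha y,y)$ for every $M$, so $M\mapsto M_\alpha:=M(x+\alpha y,y)$ is a degree- and irreducibility-preserving bijection between Darboux polynomials of $\DD$ and of $\DD_\alpha$, inverted by the substitution $x\mapsto x-\alpha y$ performed in step~(8c). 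By hypothesis and Theorem~\ref{thm:darboux}, $\DD$ (hence $\DD_\alpha$) has only finitely many irreducible Darboux polynomials, at most $\mathcal{B}(d)+1$ of them; call them $M^{(1)},\dots,M^{(r)}$ with $r\le\mathcal{B}(d)+1$.

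Next I would prove soundness and completeness for good choices of $(x_0,\alpha)$. For soundness, each $M_i$ produced is primitive of minimal degree in $y$ over the residue field $\KK[c]/(\mathcal{P}_i(c))$ and is retained only when the test $\gcd(\DD_\alpha(M_i),M_i)=M_i$ certifies $M_i\mid\DD_\alpha(M_i)$; shearing back yields a genuine Darboux polynomial of $\DD$, which minimality forces to be irreducible, and the same certification applies to the factors $A_i$ of $A_\alpha(x,0)$. For completeness I would split each $M^{(j)}$ according to whether $M^{(j)}_\alpha$ involves $y$. If $M^{(j)}_\alpha$ is a polynomial in $x$ alone, then irreducibility together with $M^{(j)}_\alpha\mid\DD_\alpha(M^{(j)}_\alpha)=A_\alpha\,\partial_x M^{(j)}_\alpha$ forces $M^{(j)}_\alpha\mid A_\alpha$, hence $M^{(j)}_\alpha\mid A_\alpha(x,0)$, so it is recovered in steps~(9)--(10). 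If $M^{(j)}_\alpha$ genuinely involves $y$, then for a good base point $M^{(j)}_\alpha(x_0,y)$ has a root $c_M$ with $A_\alpha(x_0,c_M)\ne 0$; by Proposition~\ref{RFI-Darboux-Diffeq}(1) the corresponding branch is a power series solution of the differential equation attached to $\DD_\alpha$, and by Lemma~\ref{algebraic-solution} this $c_M$ is a root of the last invariant factor $\mathcal{P}(c)$ whose residue field reconstructs $M^{(j)}_\alpha$ up to a factor in $x$; primitivity then returns exactly $M^{(j)}_\alpha$, which step~(8c) shears back to $M^{(j)}$.

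Finally I would quantify ``good'' to obtain the probability estimate. For each $M^{(j)}$, the base point $x_0$ is bad only if the degree in $y$ of $M^{(j)}_\alpha(x_0,y)$ drops, i.e. if $x_0$ is a root of the leading coefficient (of degree $\le N$) of $M^{(j)}_\alpha$ in $y$; this gives at most $N$ forbidden values per polynomial, hence at most $N(\mathcal{B}(d)+1)$ in total. Likewise $\alpha$ is bad for $M^{(j)}$ only when the sheared vertical direction coincides with a common asymptotic direction of $M^{(j)}$ and $A$, so that every affine intersection of $M^{(j)}_\alpha=0$ with the line $x=x_0$ carries a vertical tangent ($A_\alpha=0$); the number of such $\alpha$ is governed by the common directions of the leading forms, at most $\deg(M^{(j)})\deg(A)\le Nd$, hence at most $Nd(\mathcal{B}(d)+1)$ in total. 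As $x_0$ and $\alpha$ are drawn independently and uniformly from $\Omega$, the Schwartz–Zippel lemma (\cite[Lemma 6.44]{GG}) gives $\Pr[x_0\ \text{good}]\ge 1-\tfrac{N(\mathcal{B}(d)+1)}{|\Omega|}$ and $\Pr[\alpha\ \text{good}]\ge 1-\tfrac{Nd(\mathcal{B}(d)+1)}{|\Omega|}$, and multiplying these independent events yields the claimed bound. The hard part will be the geometric bookkeeping in this last step: rigorously showing that outside the asserted bad sets the line $x=x_0$ meets each $y$-dependent $M^{(j)}_\alpha$ at a point regular for the differential equation whose branch has $M^{(j)}_\alpha$ as minimal polynomial, and that the bad-$\alpha$ count is genuinely bounded by $Nd$ rather than something larger; everything else reduces to the earlier lemmas.
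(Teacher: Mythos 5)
Your overall architecture matches the paper's: the shear identity $\DD_\alpha\bigl(M(x+\alpha y,y)\bigr)=(\DD M)(x+\alpha y,y)$, the bound of $\mathcal{B}(d)+1$ on the number of irreducible Darboux polynomials via Darboux--Jouanolou, the split according to whether $M_\alpha$ involves $y$ (with the $y$-free ones recovered from $A_\alpha(x,0)$), and a Schwartz--Zippel count over the two random parameters. The gap is in the final quantitative step, which you yourself flag as ``the hard part'': your assignment of the two failure probabilities to $x_0$ and $\alpha$ is reversed relative to what actually works, and the factor you attribute to $\alpha$ cannot be established along the route you sketch. The condition that the line $x=x_0$ meet $M_\alpha=0$ at a point $(x_0,c_M)$ with $A_\alpha(x_0,c_M)\neq 0$ is \emph{not} a condition on $\alpha$ alone: the curves $M_\alpha=0$ and $A_\alpha=0$ generically have affine common points (singular points of the vector field lying on the invariant curve), such points exist for every $\alpha$, and whenever $x_0$ is the abscissa of one of them the corresponding branch may be lost. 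So ``common asymptotic directions of the leading forms'' only controls behaviour at infinity and cannot bound the bad set by $Nd$ values of $\alpha$; the bad set is genuinely a set of values of $x_0$.

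The correct bookkeeping is the mirror image of yours. The degree-in-$y$ condition is governed by $\alpha$, not $x_0$: writing $M=\sum a_{i,j}x^iy^j$ of degree $k\le N$, the coefficient of $y^k$ in $M_\alpha$ is the \emph{constant} $\sum_{i+j=k}a_{i,j}\alpha^i$, a nonzero polynomial of degree at most $N$ in $\alpha$; once it is nonzero, $M_\alpha(x_0,y)\notin\KK$ for \emph{every} $x_0$. Summing over the at most $\mathcal{B}(d)+1$ irreducible Darboux polynomials gives the factor $1-N(\mathcal{B}(d)+1)/|\Omega|$ over the choice of $\alpha$. The regularity condition is then governed by $x_0$: set $R_\alpha(x)=\mathrm{Res}_y\bigl(M_\alpha(x,y),A_\alpha(x,y)\bigr)\in\KK[x]$, of degree at most $Nd$ by B\'ezout. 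If $R_\alpha\not\equiv 0$, avoiding its roots when drawing $x_0$ gives the factor $1-Nd\,(\mathcal{B}(d)+1)/|\Omega|$. If $R_\alpha\equiv 0$, irreducibility of $M_\alpha$ forces $M_\alpha\mid A_\alpha$, hence $M_\alpha\mid B_\alpha\,\partial_y M_\alpha$, and coprimality of $A_\alpha$ and $B_\alpha$ yields $\deg_y M_\alpha=0$ and $M_\alpha\mid A_\alpha(x,0)$ --- precisely the case recovered by steps (9)--(10). With this exchange of roles your argument closes; as written, the second factor is not justified.
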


\begin{proof}
First, we remark that $M$ is a Darboux polynomial for $\DD$ if and only if $M_{\alpha}(x,y)=M(x+\alpha y,y)$ is a Darboux polynomial for $\DD_{\alpha}$. Thus the strategy used in this algorithm is to perform a change of coordinates in order to be in a generic position, and then to compute all irreducible Darboux polynomials $M_{\alpha}$ of degree at most $N$ by considering only one linear system.\\
The proof  of Proposition \ref{prop:darbouxpartcorrect} shows that from Step (2) to Step (8), we compute all irreducible Darboux polynomials satisfying: 
$$M_{\alpha}(x_0,y) \not \in \KK  \quad \textrm{ and } \quad   {\rm Res}_y(M_{\alpha}(x_0,y),A_{\alpha}(x_0,y))\neq 0.$$
Let us study the probability to get $M_{\alpha}(x_0,y) \not \in \KK$. If $M(x,y)=\sum_{0\leq i+j\leq N}a_{i,j}\,x^i\,y^j$, then $M_{\alpha}(x_0,y)=(\sum_{i+j=N}a_{i,j}\,\alpha^i)\,y^N+ \cdots$ where the other terms have degree relatively to $y$ strictly less than $N$. Thus, if $\sum_{i+j=N}a_{i,j}\alpha^i$ is not equal to zero, then we have  $M_{\alpha}(x_0,y) \not \in \KK$.\\
 As \eqref{eq-sys} has no rational first integral of degree at most $N$, then by Darboux-Jouanolou's theorem (see Theorem \ref{thm:darboux}), we have at most $\mathcal{B}(d)+1$ irreducible Darboux polynomials with degree at most $N$. Thus, by  Zippel-Schwartz's lemma, the probability to reach the situation $M_{\alpha}(x_0,y) \not \in \KK$ for all irreducible Darboux polynomials is at least $1-(\mathcal{B}(d)+1)\,N/|\Omega|$.\\
Now we suppose that  $M_{\alpha}(x_0,y) \not \in \KK$ and we study the probability to have the situation ${\rm Res}_y(M(x_0,y),A(x_0,y))\neq 0$.
If the polynomial  ${\rm Res}_y(M(x,y),A(x,y))$ is not zero, then, by  Zippel-Schwartz's lemma, the probability to reach this situation for all irreducible Darboux polynomials, is at least $1-(\mathcal{B}(d)+1)\,N\,d/|\Omega|$.\\
If the polynomial  ${\rm Res}_y(M_{\alpha}(x,y),A_{\alpha}(x,y))$ is zero, then $M_{\alpha}$ and $A_{\alpha}$ have a common  factor. As we  suppose  $M_{\alpha}$  irreducible, we deduce that $M_{\alpha}$ divides $A_{\alpha}$. Thus $M_{\alpha}$ divides $B_{\alpha}\,\partial_y (M_{\alpha})$. As $A_{\alpha}$ and $B_{\alpha}$ are coprime, we get that $M_{\alpha}$ divides $\partial_y(M_{\alpha})$. This situation is possible only when $\deg_y(M_{\alpha})=0$. This means  ${\rm Res}_y(M_{\alpha}(x,y),A_{\alpha}(x,y)) \equiv 0$ when $\deg_y(M_{\alpha})=0$ and  $M_{\alpha}$ divides $A_{\alpha}(x,0)$. We compute this kind of irreducible Darboux polynomials in Step (10) of the algorithm.\\
In conclusion, the algorithm computes all irreducible Darboux polynomials of degree at most $N$ with the announced probability estimate.\end{proof}
\subsection{Implementation and example}
We have implemented the algorithm \textsf{ProbabilisticIrreducibleDarbouxPolynomials} in our package {\sc RationalFirstIntegrals}\footnote{It is available at \url{http://www.ensil.unilim.fr/~cluzeau/RationalFirstIntegrals.html}}. Let us illustrate the purpose of this section on an interesting example. \\
Consider the vector field corresponding to the jacobian derivation associated with $f(x,y)=(y-x-1)\,(x-y^2)\,(x\,y-1)$, namely,  
$$A(x,y):=-\frac{\partial f}{\partial y}(x,y)=-3\,{x}^{2}{y}^{2}+4\,x{y}^{3}+{x}^{3}-2\,{x}^{2}y-3\,x{y}^{2}+{x}^{2}+2\,xy-3\,{y}^{2}+x+2\,y,$$
$$B(x,y):=\frac{\partial f}{\partial x}(x,y)=2\,x{y}^{3}-{y}^{4}-3\,{x}^{2}y+2\,x{y}^{2}+{y}^{3}-2\,xy-{y}^{2}+2\,x-y+1.$$
By construction, it admits the rational first integral $f$ of degree $4$ and the Darboux polynomials $M_1(x,y)=y-x-1$, $M_2(x,y)=x-y^2$, 
$M_3(x,y)=x\,y-1$ of degree at most $2$.  Let us consider the computation of all irreducible Darboux polynomials of degree at most $N=2$.\\
The first Darboux polynomial $M_1$ satisfies $M_1(0,y)=y-1 \not \in \KK$ and its root $c_{M_1}=1$ satisfies $A(0,c_{M_1})=-1 \neq 0$. Therefore it will be found by considering the linear system $\mathcal{L}_1(c)$ in \textsf{IrreducibleDarbouxPolynomialsPartial}, see the proof of Proposition~\ref{prop:darbouxpartcorrect}. \\
The Darboux polynomial $M_2$ satisfies $M_2(0,y)=-y^2 \not \in \KK$ and but its root $c_{M_2}=0$ satisfies $A(0,c_{M_2})=0$. Thus it will be missed if we only consider system $\mathcal{L}_1(c)$ in \textsf{IrreducibleDarbouxPolynomialsPartial}. It is the case where the curve $M_2(x,y)=0$ has the vertical tangent $x=0$. However, if we consider the second system $\mathcal{L}_2(c)$ in \textsf{IrreducibleDarbouxPolynomialsPartial}, we will find this Darboux polynomial, see the proof of Proposition~\ref{prop:darbouxpartcorrect}. \\
Finally  $M_3$ satisfies $M_3(0,y)=-1 \in \KK$ so that $M_3(0,y)$ has a root at infinity. Considering only the systems  $\mathcal{L}_1(c)$ and $\mathcal{L}_2(c)$ in \textsf{IrreducibleDarbouxPolynomialsPartial} will not be enough to find this Darboux polynomial. However, performing the change of coordinates as in \textsf{IrreducibleDarbouxPolynomials} and applying \textsf{IrreducibleDarbouxPolynomialsPartial} to  $-y\,\tilde{A}+\tilde{B}$ and  $-\tilde{A}\,z$ instead of $A$ and $B$ will provide this Darboux polynomial.\\
To summarize, applying \textsf{IrreducibleDarbouxPolynomialsPartial} to $A$ and $B$, we get $M_1$ and $M_2$ but we miss $M_3$ but either applying  \textsf{IrreducibleDarbouxPolynomials} or  \textsf{ProbabilisticIrreducibleDarbouxPolynomials} we get the three Darboux polynomials. Note also that applying an algorithm similar to  \textsf{ProbabilisticIrreducibleDarbouxPolynomials} but where we skip Step (2), i.e., we do not perform the generic change of coordinate, we would obtain only $M_1$ and miss both $M_2$ and $M_3$. \\
Our implementation of  \textsf{ProbabilisticIrreducibleDarbouxPolynomials} requires computations in $\KK(c)$ so that as for \textsf{GenericRationalFirstIntegral} it is not very efficient and can not be used in practice for examples with large degrees. To give an idea of timings\footnote{All the computations were made on a 2.7 GHz  Intel Core i7}, on the previous example, running 
 \textsf{ProbabilisticIrreducibleDarbouxPolynomials} without the change of coordinates in Step (2), we obtain $\{M_1\}$ in $2.737$ seconds but we miss $M_2$ and $M_3$ whereas running \textsf{ProbabilisticIrreducibleDarbouxPolynomials}, we get the complete set $\{M_1,M_2,M_3\}$ in $9.775$ seconds.



\newcommand{\etalchar}[1]{$^{#1}$}

\end{document}